\definecolor{orange3}{rgb}{1.0,0.2538,0.1681}
\definecolor{blau}{rgb}{0,.39608,0.74118}
\definecolor{rot}{rgb}{0.79216,.12941,0.24706}
\newcommand\stefan[1]{}
\tikzstyle{max}=[thick,draw,minimum size=1.4em,inner sep=0em]
\tikzstyle{min}=[diamond,thick,draw,minimum size=1.4em,%
\tikzstyle{ran}=[circle,thick,draw,minimum size=1.4em,%
\tikzstyle{mc}=[rounded corners,thick,draw,minimum size=1.4em,%
\tikzstyle{tran}=[thick,draw,->,>=stealth]
\tikzstyle{loop left}=[tran, to path={.. controls +(150:.5)
\tikzstyle{loop right}=[tran, to path={.. controls +(30:.5)
\tikzstyle{loop above}=[tran, to path={.. controls +(60:.5)
\tikzstyle{loop below}=[tran, to path={.. controls +(240:.5)
\newenvironment{qproposition}[2][]{%
{${}$\\[2mm]
\noindent\bf Proposition #2#1.}
\begin{itshape}%
}{%
\end{itshape}%
}
\newcommand{\tran}[1]{{}\mathchoice%
    {\stackrel{#1}{\rightarrow}}
    {\mathop {\smash\rightarrow}\limits^{\vrule width 0pt height 0pt
                                                depth 4pt\smash{#1}}}
    {\stackrel{#1}{\rightarrow}}
    {\stackrel{#1}{\rightarrow}}
{}}
\newcommand{\btran}[1]{{}\mathchoice%
    {\stackrel{#1}{\hookrightarrow}}
    {\mathop {\smash\hookrightarrow}\limits^{\vrule width 0pt height 0pt
                                                depth 4pt\smash{#1}}}
    {\stackrel{#1}{\hookrightarrow}}
    {\stackrel{#1}{\hookrightarrow}}
{}}
\newcommand{\bbtran}[1]{\lhook\joinrel\xrightarrow{#1}}
\newcommand{\bigo}{\mathcal{O}}
\newcommand{\calE}{\mathcal{E}}
\newcommand{\calF}{\mathcal{F}}
\newcommand{\calP}{\mathcal{P}}
\newcommand{\co}[1]{\langle #1 \rangle}
\newcommand{\E}[1]{\mathbb{E}\,[#1]}
\newcommand{\Ex}[1]{\mathbb{E} \left[ #1 \right]}
\newcommand{\Emax}{E_{\it max}}
\newcommand{\fpath}{\mathit{FPath}}
\newcommand{\kw}[1]{{\rmfamily\textbf{{#1}}}}
\newcommand{\massprobcond}[3]{\calP(\mathbf{T}_{#1}=#2 \mid #3)}
\newcommand{\ms}[1]{m^{(#1)}}
\newcommand{\Nset}{\mathbb{N}}
\newcommand{\numbsymb}[1]{\#(#1)}
\newcommand{\Prob}{{\it Prob}}
\newcommand{\pmin}{p_{\it min}}
\newcommand{\Rset}{\mathbb{R}}
\newcommand{\run}{\mathit{Run}}
\newcommand{\tailprob}[2]{\calP(\mathbf{T}_{#1} {\geq} #2)}
\newcommand{\tailprobcond}[3]{\calP(\mathbf{T}_{#1}\geq #2\mid #3)}
\newcommand{\termt}[1]{\mathbf{T}_{#1}}
\newcommand{\thickdot}{\mathop{\raisebox{0.7mm}{\tikz\draw[fill=black] (0,0) circle (1pt);}}} 
\newcommand{\umax}{\vu_{\it max}}
\newcommand{\umin}{\vu_{\it min}}
\newcommand{\vone}{\vec{1}}
\newcommand{\vu}{\vec{u}}
\newcommand{\vv}{\vec{v}}
\newcommand{\Xs}[1]{X^{(#1)}}
\newcommand{\indexbpa}{\bullet}
\newcommand{\bpadelta}{\Delta_\indexbpa}
\newcommand{\theoremlike}[2]{\par\medskip\penalty-250%
{{\bfseries\noindent
#2 \ref{#1}.}}\it}
\newcommand{\thmhelperpre}[2]{\theoremlike{#1}{#2}}
\newcommand{\thmhelperpost}{\par\medskip}
\newenvironment{refproposition}[1]{\thmhelperpre{#1}{Proposition}}{\thmhelperpost }
\begin{document}

\title{Runtime Analysis of Probabilistic Programs with Unbounded Recursion\thanks{%
This work has been published without proofs as a preliminary version
in the \emph{Proceedings of the 38th International Colloquium on Automata, Languages and Programming (ICALP)},
volume 6756 of LNCS, pages 319–-331, 2011 at Springer.
The presentation has been improved since, and the general lower tail bound has been tightened
 from $\Omega(1/n)$ to $\Omega(1/\sqrt{n})$.
}
}

\author{Tom\'{a}\v{s} Br\'{a}zdil\inst{1}$^{\star}$ \and
        Stefan Kiefer\inst{2}$^{\dag}$ \and
        Anton\'{\i}n Ku\v{c}era\inst{1}$^{\star}$ \and
        Ivana~Huta\v{r}ov\'{a}~Va\v{r}ekov\'{a}\inst{1}$^{\ddag}$}

\institute{Faculty of Informatics, Masaryk University, Czech Republic.\\
    \texttt{\{brazdil,kucera\}@fi.muni.cz,ivarekova@centrum.cz} \and
     Department of Computer Science, University of Oxford, United Kingdom.\\
  \texttt{stefan.kiefer@cs.ox.ac.uk}}

\maketitle

\begin{abstract}
\noindent\let\thefootnote\relax\footnote{\makebox[0ex][r]{$^\star~$}%
Tom\'{a}\v{s} Br\'{a}zdil and Anton\'{\i}n Ku\v{c}era are supported
by the Institute
    for Theoretical Computer Science (ITI), project No.~1M0545, and
    by the Czech Science Foundation, grant No.~P202/10/1469.}%
\footnote{\makebox[0ex][r]{$^\dag~$}%
Stefan Kiefer is supported by a postdoctoral fellowship
    of the German Academic Exchange Service (DAAD).}%
\footnote{\makebox[0ex][r]{$^\ddag~$}%
Ivana Huta\v{r}ov\'{a} Va\v{r}ekov\'{a} is supported by by the Czech Science
    Foundation, grant No.~102/09/H042.}%
We study the runtime in
probabilistic programs with unbounded recursion.
As underlying formal model for such programs we use
\emph{probabilistic pushdown automata (pPDA)} which exactly correspond to
recursive Markov chains.
We show that every pPDA can be transformed into a stateless pPDA  (called ``pBPA'')
whose runtime and further properties are closely related to those of the original pPDA.
This result substantially simplifies the analysis of runtime and other pPDA properties.
We prove that
for every pPDA the probability of performing a long run
decreases \emph{exponentially} in the length of the run, if and only
if the expected runtime in the pPDA is \emph{finite}.  If the
expectation is infinite, then the probability decreases
``polynomially''.
We show that these bounds are asymptotically tight.
Our tail bounds on the runtime are \emph{generic}, i.e.,
 applicable to any probabilistic program with unbounded recursion.
An intuitive interpretation is that in pPDA the runtime
is exponentially unlikely to deviate from its expected value.
%
%
%

\end{abstract}

\section{Introduction}
\label{sec-intro}

\noindent
We study the termination time  in programs with
unbounded recursion, which are either randomized or operate on
statistically quantified inputs. As underlying formal
model for such programs we use \emph{probabilistic pushdown
automata (pPDA)} \cite{EKM:prob-PDA-PCTL,EKM:prob-PDA-expectations,%
BEK:prob-PDA-predictions,BBHK:pPDA-discounted}  which are
equivalent to recursive Markov chains
\cite{EY:RMC-SG-equations-JACM,EY:RMC-LTL-complexity,EY:RMC-LTL-QUEST}.
Since pushdown automata are a standard and well-established model for
programs with recursive procedure calls, our abstract results
imply \emph{generic} and \emph{tight} tail bounds for termination time,
the main performance characteristic of probabilistic recursive programs.

A pPDA consists of a finite
set of \emph{control states}, a finite \emph{stack alphabet}, and a finite
set of \emph{rules} of the form $pX \btran{x} q\alpha$, where
$p,q$ are control states, $X$ is a stack symbol, $\alpha$ is a finite
sequence of stack symbols (possibly empty), and $x \in (0,1]$ is
the (rational) probability of the rule. We require that for each $pX$,
the sum of the probabilities of all rules of the form $pX \btran{x} q\alpha$
is equal to~$1$. Each pPDA $\Delta$ induces an infinite-state Markov
chain $M_\Delta$, where the states are configurations of the form
$p\alpha$ ($p$ is the current control state and $\alpha$ is the current
stack content), and $pX\beta \tran{x} q\alpha\beta$ is a transition
of $M_\Delta$ iff $pX \btran{x} q\alpha$ is a rule of~$\Delta$.
We also stipulate that $p\varepsilon \tran{1} p\varepsilon$ for
every control state $p$, where $\varepsilon$ denotes the empty stack.
For example, consider the pPDA $\hat{\Delta}$ with
two control states $p,q$, two stack symbols $X,Y$, and the rules
\[
  pX \bbtran{1/4} p\varepsilon,\
  pX \bbtran{1/4} pXX,\
  pX \bbtran{1/2} qY,\
  pY \bbtran{1}   pY,\
  qY \bbtran{1/2} qX,\
  qY \bbtran{1/2} q\varepsilon,\
  qX \bbtran{1} qY\,.
\]
The structure of Markov chain $M_{\hat{\Delta}}$ is indicated below.

{\centering
\begin{tikzpicture}[x=2.1cm,y=1.8cm,font=\tiny]
  \node (p)        at (0,0)   [ran] {$\mathit{p\varepsilon}$};
  \node (pX)       at (1,0)   [ran] {$\mathit{pX}$};
  \node (pXX)      at (2,0)   [ran] {$\mathit{pXX}$};
  \node (pXXX)     at (3,0)   [ran, shape=ellipse] {$\mathit{pXXX}$};
  \node (pXXXX)    at (4,0)   [ran, shape=ellipse] {$\mathit{pXXXX}$};
  \node (pXXXXX)   at (5,0)   [ran, draw=none] {};
  \node (q)        at (0.5,-1) [ran] {$\mathit{q\varepsilon}$};
  \node (qY)       at (1,-1)   [ran] {$\mathit{qY}$};
  \node (qX)       at (1.5,-1) [ran] {$\mathit{qX}$};
  \node (qYX)      at (2,-1)   [ran] {$\mathit{qYX}$};
  \node (qXX)      at (2.5,-1) [ran] {$\mathit{qXX}$};
  \node (qYXX)     at (3,-1)   [ran, shape=ellipse] {$\mathit{qYXX}$};
  \node (qXXX)     at (3.5,-1) [ran, shape=ellipse] {$\mathit{qXXX}$};
  \node (qYXXX)    at (4,-1)   [ran, shape=ellipse] {$\mathit{qYXXX}$};
  \node (qXXXX)    at (4.5,-1) [ran, shape=ellipse, draw=none] {};
  \draw [tran] (p)     to [loop left]  node[left]  {$1$} (p);
  \draw [tran] (q)     to [loop left]  node[left]  {$1$} (q);
  \draw [tran] (pX)    to node[above] {$1/4$}    (p);
  \draw [tran] (pX)    to node[left]  {$1/2$}    (qY);
  \draw [tran] (pXX)   to node[below] {$1/4$}    (pX);
  \draw [tran] (pXX)   to node[left]  {$1/2$}    (qYX);
  \draw [tran] (pXXX)  to node[below] {$1/4$}    (pXX);
  \draw [tran] (pXXX)  to node[left]  {$1/2$}    (qYXX);
  \draw [tran] (pXXXX) to node[below] {$1/4$}    (pXXX);
  \draw [tran] (pXXXX) to node[left]  {$1/2$}    (qYXXX);
  \draw [tran] (pXXXXX) to node[below] {$1/4$}   (pXXXX);
  \draw [tran, rounded corners] (pX.80) -- +(0.2,0.2) -- node[above] {$1/4$}
          +(0.8,0.2) -- (pXX.110);
  \draw [tran, rounded corners] (pXX.80) -- +(0.2,0.2) -- node[above] {$1/4$}
          +(0.8,0.2) -- (pXXX.110);
  \draw [tran, rounded corners] (pXXX.80) -- +(0.2,0.2) -- node[above] {$1/4$}
          +(0.8,0.2) -- (pXXXX.110);
  \draw [tran, rounded corners] (pXXXX.80) -- +(0.2,0.2) -- node[above] {$1/4$}
          +(0.8,0.2) -- (pXXXXX.110);
  \draw [tran, rounded corners] (qXXXX.260) -- +(-0.2,-0.1) --
          node[below] {$1$} +(-0.3,-0.1) -- (qYXXX.280);
  \draw [tran, rounded corners] (qYXXX.260) -- +(-0.2,-0.1) --
          node[below] {$1/2$} +(-0.3,-0.1) -- (qXXX.280);
  \draw [tran, rounded corners] (qXXX.260) -- +(-0.2,-0.1) --
          node[below] {$1$} +(-0.3,-0.1) -- (qYXX.280);
  \draw [tran, rounded corners] (qYXX.260) -- +(-0.2,-0.1) --
          node[below] {$1/2$} +(-0.3,-0.1) -- (qXX.280);
  \draw [tran, rounded corners] (qXX.260) -- +(-0.2,-0.1) --
          node[below] {$1$} +(-0.3,-0.1) -- (qYX.280);
  \draw [tran, rounded corners] (qYX.260) -- +(-0.2,-0.1) --
          node[below] {$1/2$} +(-0.3,-0.1) -- (qX.280);
  \draw [tran, rounded corners] (qX.260) -- +(-0.2,-0.1) --
          node[below] {$1$} +(-0.3,-0.1) -- (qY.280);
  \draw [tran, rounded corners] (qY.260) -- +(-0.2,-0.1) --
          node[below] {$1/2$} +(-0.3,-0.1) -- (q.280);
  \draw [tran, rounded corners] (qYXXX.80) -- +(0.2,0.1) --
          node[above] {$1/2$} +(0.3,0.1) -- (qXXXX.110);
  \draw [tran, rounded corners] (qYXX.80) -- +(0.2,0.1) --
          node[above] {$1/2$} +(0.3,0.1) -- (qXXX.110);
  \draw [tran, rounded corners] (qYX.80) -- +(0.2,0.1) --
          node[above] {$1/2$} +(0.3,0.1) -- (qXX.110);
  \draw [tran, rounded corners] (qY.80) -- +(0.2,0.1) --
          node[above] {$1/2$} +(0.3,0.1) -- (qX.110);
  \draw [thick, dotted] (pXXXXX) --  +(.3,0);
  \draw [thick, dotted] (qXXXX)  --  +(.8,0);
  \draw [thick, dotted] (pXXXXX) --  +(0,-.8);
\end{tikzpicture}}

pPDA can model programs that use unbounded ``stack-like'' data structures
such as stacks, counters, or even queues (in some cases, the exact
ordering of items stored in a queue is irrelevant and the queue can be
safely replaced with a stack). Transition probabilities may reflect
the random choices
of~the program (such as ``coin flips'' in randomized algorithms) or some
statistical assumptions about the input data.
In particular, pPDA model \emph{recursive} programs.
The global data of such a program are stored
in the finite control, and the individual procedures and functions
together with their local data
correspond to the stack symbols (a function call/return is modeled
by pushing/popping the associated stack symbol onto/from the stack).
%
%
As a simple example, consider the
recursive program \emph{Tree} of Figure~\ref{fig:tree}, which
computes the value of an \emph{And/Or-tree}, i.e., a tree such
that (i) every node has either zero or two children,
(ii) every inner node is either an And-node or an Or-node, and (iii)
on any path from the root to a leaf And- and Or-nodes alternate. We
further assume that the root is either a leaf or an And-node. \emph{Tree}
starts by invoking the function \texttt{And} on the root of a given
And/Or-tree.
Observe that the program evaluates subtrees only if necessary.
Now assume that the input are random And/Or trees following
the Galton-Watson distribution: a node of the tree has two children
with probability~$1/2$, and no children with probability~$1/2$.
Furthermore, the conditional probabilities that a childless node
evaluates to $0$ and $1$ are also both equal to $1/2$. On inputs with this
distribution, the algorithm corresponds to a pPDA $\Delta_{\mathit{Tree}}$
of Figure~\ref{fig:tree} (the control states $r_0$ and $r_1$ model
the return values $0$ and $1$).

\newcommand{\AndOrProg}[3]{%
\parbox[t]{.1\textwidth}{\flushleft\ttfamily%
\begin{tabbing}
\hspace*{.7em}\=\hspace*{.7em}\=\hspace*{.7em}\=\kill
    \kw{function} #1(node) \\
    \> \kw{if} node.leaf \kw{then} \\
    \>\> \kw{return} node.value \\
    \> \kw{else} \\
    \>\> $v$ := #2(node.left) \\
    \>\> \kw{if} $v = #3$ \kw{then} \\
    \>\>\> \kw{return} $#3$ \\
    \>\> \kw{else} \\
    \>\>\> \kw{return} #2(node.right)
\end{tabbing}}}
\begin{figure}[t]
\begin{tabular}{c@{\hspace{20mm}}c}
\AndOrProg{And}{Or}{0} & \AndOrProg{Or}{And}{1} \\
\parbox[t]{.3\textwidth}{\flushleft
\begin{align*}
qA &\btran{1/4} r_1\varepsilon \\
qA &\btran{1/4} r_0\varepsilon \\
qA &\btran{1/2} qOA \\
r_0A &\btran{1}  r_0\varepsilon \\
r_1A &\btran{1}  qO
\end{align*}
}
&
\parbox[t]{.3\textwidth}{\flushleft
\begin{align*}
qO &\btran{1/4} r_1\varepsilon \\
qO &\btran{1/4} r_0\varepsilon \\
qO &\btran{1/2} qAO \\
r_1O &\btran{1} r_1\varepsilon \\
r_0O &\btran{1} qA
\end{align*}
}
\end{tabular}
\caption{The program \emph{Tree} and its pPDA model $\Delta_{\mathit{Tree}}$.}
\label{fig:tree}
\end{figure}

We study the
\emph{termination time} of runs in
a given pPDA $\Delta$.
  For every pair of control states $p,q$
  and every stack symbol $X$ of $\Delta$, let $\run(pXq)$ be
  the set of all runs
  (infinite paths) in $M_\Delta$ initiated in $pX$ which visit
  $q\varepsilon$. The termination time is modeled by the
  random variable $\termt{pX}$, which to every run~$w$
  assigns either the number of steps needed to reach
  a configuration with empty stack, or $\infty$ if there is no such
  configuration.
  The conditional expected value \mbox{$\E{\termt{pX} \mid \run(pXq)}$},
  denoted just by $E[pXq]$ for short,
  then corresponds to the average number of steps needed to reach
  $q\varepsilon$ from $pX$, computed only for those runs initiated
  in $pX$ which terminate in $q\varepsilon$.
  For example, using the results of
  \cite{EKM:prob-PDA-PCTL,EKM:prob-PDA-expectations,EY:RMC-SG-equations-JACM}, one can show that the functions
  \texttt{And} and \texttt{Or}
  of the program \emph{Tree} terminate with probability one, and
  the expected termination times can be computed by solving
  a system of linear equations.
  Thus, we obtain the following:
  \begin{align*}
     E[q A r_0] &= 7.155113 & E[q A r_1] &= 7.172218 \\
     E[q O r_0] &= 7.172218 & E[q O r_1] &= 7.155113 \\
     E[r_0 A r_0] &= 1.000000 & E[r_1 A r_0] &= 8.172218 & E[r_1 A r_1] &= 8.155113 \\
     E[r_1 O r_1] &= 1.000000 & E[r_0 O r_1] &= 8.172218 & E[r_0 O r_0] &=  8.155113
  \end{align*}
  However, the mere expectation of the termination time does not provide much
  information about its distribution 
  until we
  analyze the associated \emph{tail bound}, i.e., the probability that
  the termination time deviates from its expected value by a given
  amount. That is, we are interested in bounds for the conditional
  probability \mbox{$\tailprobcond{pX}{n}{\run(pXq)}$}.
  (Note this probability makes sense regardless of whether $E[pXq]$ is finite or infinite.)
  Assuming that the (conditional) expectation and variance of $\termt{pX}$
  are finite, one can apply Markov's and Chebyshev's inequalities and
  thus yield bounds of the form
  \mbox{$\tailprobcond{pX}{n}{\run(pXq)} \le c/n$} and
  \mbox{$\tailprobcond{pX}{n}{\run(pXq)} \le c/{n^2}$}, respectively,
  where $c$ is a constant depending only on the underlying pPDA.
  However, these bounds are asymptotically always worse than
  our exponential bound (see below). If $E[pXq]$ is infinite, these
  inequalities cannot be used at all.

\smallskip

\noindent
\textbf{Our contribution.}\quad
The main contributions of this paper are the following:
\begin{itemize}
\item We show that every pPDA can be effectively transformed into
  a \emph{stateless}
  pPDA (called ``pBPA'') so that all important quantitative characteristics
  of runs are preserved. This simple (but fundamental) observation was
  overlooked in previous works
  on pPDA and related models \cite{EKM:prob-PDA-PCTL,EKM:prob-PDA-expectations,%
  BEK:prob-PDA-predictions,BBHK:pPDA-discounted,EY:RMC-SG-equations-JACM,%
  EY:RMC-LTL-complexity,EY:RMC-LTL-QUEST}, although it simplifies virtually
  all of these results.
  Hence, we can w.l.o.g.\ concentrate just on the study of pBPA.
  Moreover, for the runtime analysis, the transformation yields a pBPA all of whose symbols
   terminate with probability~one, which further simplifies the analysis.
\item 
  We provide tail bounds for $\termt{pX}$ which are
  \emph{asymptotically optimal for every pPDA} and are applicable
  also in the case when $E[pXq]$
  is infinite. More precisely, we show that for every pair
  of control states $p,q$ and every stack symbol $X$, there
  are essentially three possibilities:
  \begin{itemize}
  \item There is a ``small'' $k$ such that
    $\tailprobcond{pX}{n}{\run(pXq)} = 0$ for all $n \geq k$.
  \item $E[pXq]$ is finite and
    $\tailprobcond{pX}{n}{\run(pXq)}$ decreases exponentially in~$n$.
  \item $E[pXq]$ is infinite and
    $\tailprobcond{pX}{n}{\run(pXq)}$ decreases ``polynomially'' in~$n$.
  \end{itemize}
  The exact formulation of this result, including the explanation of
  what is meant by a ``polynomial'' decrease, is given in
  Theorem~\ref{thm:termination} (technically, Theorem~\ref{thm:termination}
  is formulated for pBPA which terminate with probability one, which
  is no restriction as explained above). Observe that a direct consequence
  of the above theorem is that \emph{all} conditional moments
  \mbox{$\E{\termt{pX}^k \mid \run(pXq)}$} are simultaneously either
  finite or infinite (in particular, if $E[pXq]$ is finite, then
  so is the conditional variance of~$\termt{pX}$).
\end{itemize}
The characterization given in Theorem~\ref{thm:termination} is
effective. In particular, it is decidable in polynomial space whether
$E[pXq]$ is finite or infinite by using the results of
\cite{EKM:prob-PDA-PCTL,EKM:prob-PDA-expectations,%
  EY:RMC-SG-equations-JACM}, and if $E[pXq]$ is finite, we can compute
concrete bounds on the probabilities.  Our results vastly improve on
what was previously known on the termination time~$\termt{pX}$.
Previous work, in
particular~\cite{EKM:prob-PDA-expectations,Brazdil:PhD}, has focused
on computing expectations and variances for a class of random
variables on pPDA runs, a class that includes $\termt{pX}$ as prime
example.  Note that our exponential bound given in
Theorem~\ref{thm:termination} depends, like Markov's
inequality, only on expectations, which can be efficiently
approximated by the methods
of~\cite{EKM:prob-PDA-expectations,EKL10:SICOMP}.

An intuitive interpretation of our results is that pPDA with finite
(conditional) expected termination time are well-behaved in the sense
that the termination time is exponentially unlikely to deviate from
its expectation. Of course, a detailed analysis of a concrete pPDA
may lead to better bounds, but these bounds
will be \emph{asymptotically equivalent} to our generic bounds.
Also note that the conditional expected termination time can be finite
even for pPDA that do not terminate with probability one. Hence, for
every $\varepsilon > 0$ we can compute a tight threshold $k$ such that
if a given pPDA terminates at all, it terminates after at most
$k$~steps with probability $1-\varepsilon$ (this is useful for
interrupting programs that are supposed but not guaranteed to
terminate).

\smallskip

\noindent
\textbf{Proof techniques.}
The main mathematical tool for establishing our results on runtime is
(basic) martingale theory and its tools such as the
optional stopping theorem and Azuma's inequality (see Section~\ref{sec:BPA-analysis}).
More precisely, we construct two different martingales corresponding to the cases
 when the expected termination time is finite resp.\ infinite.
In combination with our reduction to pBPA
 this establishes a powerful link between pBPA, pPDA, and martingale theory.

Our analysis of termination time in the case when the
expected termination time is infinite builds on Perron-Frobenius
theory for nonnegative matrices as well as on recent results
from~\cite{EY:RMC-SG-equations-JACM,EKL10:SICOMP}.  We also use some
of the observations presented in
\cite{EKM:prob-PDA-PCTL,EKM:prob-PDA-expectations,%
BEK:prob-PDA-predictions}.
\smallskip

\noindent
\textbf{Related work.}
The application of Azuma's inequality in the analysis of particular randomized algorithms
is also known as the \emph{method of bounded differences}; see, e.g., \cite{MR:book,DP:book} and the references therein.
In contrast, we apply martingale methods not to particular algorithms, but to the pPDA model as a whole.

Analyzing the distribution of termination time is closely related to
 the analysis of multitype branching processes (MT-BPs)~\cite{Harris:book}.
A MT-BP is very much like a pBPA (see above).
The stack symbols in pBPA correspond to species in MT-BPs.
An $\varepsilon$-rule corresponds to the death of an individual,
 whereas a rule with two or more symbols on the right hand side
 corresponds to reproduction.
Since in MT-BPs the symbols on the right hand side of rules
 evolve concurrently, termination time in pBPA does {\em not} correspond to
 extinction time in MT-BPs, but to the size of the {\em total
 progeny} of an individual, i.e., the number of direct or
 indirect descendants of an individual.
 The distribution of the total progeny of a MT-BP has been studied
 mainly for the case of a single species,
 see, e.g.,~\cite{Harris:book,Pakes71,QuineS94} and the references therein,
 but to the best of our knowledge, no tail bounds for MT-BPs
 have been given. Hence, Theorem~\ref{thm:termination} can also be
 seen as a contribution to MT-BP theory.

Stochastic context-free grammars (SCFGs)
\cite{MS:book} are also closely related to pBPA.
The termination time in pBPA corresponds to the number of nodes in a derivation tree
 of a SCFG, so our analysis of pBPA immediately applies to SCFGs.
Quasi-Birth-Death processes (QBDs) can also be seen as a special case
of pPDA. A QBD is a generalization of a birth-death
process 
studied in queueing theory and applied
probability (see, e.g., \cite{LR:book,BLM:book,EWY:one-counter}).
Intuitively, a QBD describes an unbounded queue, using a counter to count
the number of jobs in the queue, where the queue can be in one of
finitely many distinct ``modes''. Hence, a (discrete-time) QBD can be
equivalently defined by a pPDA with one stack symbol used to emulate
the counter. These special pPDA are also known as \emph{probabilistic
one-counter automata (pOC)} \cite{EWY:one-counter,BBEKW:OC-MDP,BBE:OC-games}.
Recently, it has been shown in \cite{BKK:pOC-time-LTL-martingale-arxiv}
that every pOC induces a martingale apt for studying the properties
of both terminating and nonterminating runs in pOC. The construction
is based on ideas specific to pOC that are completely unrelated to
the ones presented in this paper.

Previous work on pPDA and the equivalent model of recursive Markov chains
includes \cite{EKM:prob-PDA-PCTL,EKM:prob-PDA-expectations,%
BEK:prob-PDA-predictions,BBHK:pPDA-discounted,EY:RMC-SG-equations-JACM,%
EY:RMC-LTL-complexity,EY:RMC-LTL-QUEST}. In this paper we use many
of the results presented in these papers, which is explicitly acknowledged
at appropriate places.
\smallskip

\noindent\textbf{Organization of the paper.}
We present our results after some preliminaries in Section~\ref{sec:prelim}.
In Section~\ref{sec:transformation} we show how to transform a given pPDA into an equivalent pBPA,
and in Section~\ref{sec:BPA-analysis} we design the promised martingales and derive tight tail bounds for the termination time.
We conclude in Section~\ref{sec:concl}.
Some proofs have been moved to Section~\ref{sec:proofs}.


\section{Preliminaries} \label{sec:prelim}
\noindent
In the rest of this paper, $\Nset$, $\Nset_0$,
and $\Rset$ denote the set of positive integers, non-negative integers,
and real numbers, respectively. The tuples of
$A_1 \times A_2 \cdots \times A_n$ are often written simply as
$a_1a_2\dots a_n$.
%
The set of all finite words over a given alphabet $\Sigma$ is denoted by
$\Sigma^*$, and the set of all infinite words over $\Sigma$ is denoted by
$\Sigma^\omega$.
We 
write $\varepsilon$ for the empty word.
The length of a given $w \in \Sigma^* \cup
\Sigma^{\omega}$ is denoted by $|w|$, where the length of an infinite
word is $\infty$.
Given a word (finite or infinite) over $\Sigma$, the individual
letters of $w$ are denoted by $w(0),w(1),\dots$
For $X\in \Sigma$ and $w \in \Sigma^*$, we denote by
  $\numbsymb{X}(w)$ the number of occurrences of $X$ in~$w$.

%
%
%
%
\begin{definition}[\textbf{Markov Chains}]
\label{def-MC}
  A \emph{Markov chain} is a triple \mbox{$M = (S,\tran{},\Prob)$}
  where $S$ is a finite or countably infinite set of \emph{states},
  ${\tran{}} \subseteq S \times S$ is a \emph{transition relation},
  and $\Prob$ is a function which to each transition $s \tran{} t$ of
  $M$ assigns its probability $\Prob(s \tran{} t) > 0$ so that
  for every $s \in S$ we have $\sum_{s \rightarrow t} \Prob(s \tran{} t) =
  1$ (as usual, we write $s \tran{x} t$ instead of $\Prob(s \tran{} t) = x$).
\end{definition}

\noindent
A \emph{path} in $M$ is a finite or infinite word $w \in S^+ \cup S^\omega$
such that $w(i{-}1) \tran{} w(i)$ for every \mbox{$1 \leq i < |w|$}.
For a state~$s$, we use $\fpath(s)$ to denote the set of all finite paths initiated in~$s$.
A \emph{run} in $M$ is an infinite path in~$M$.
We denote by $\run[M]$ the set of all runs in~$M$. The set of all runs
that start with a given finite path $w$ is denoted by $\run[M](w)$.
When $M$ is understood, we write just $\run$ and $\run(w)$ instead of
$\run[M]$ and $\run[M](w)$, respectively. Given $s\in S$ and $A\subseteq S$,
we say \emph{$A$ is reachable from~$s$} if there is a run $w$ such that
$w(0)=s$ and $w(i) \in A$ for some $i\geq 0$.

To every $s \in S$ we associate the probability
space $(\run(s),\calF,\calP)$ where
$\calF$ is the \mbox{$\sigma$-field} generated by all \emph{basic cylinders}
$\run(w)$ where $w$ is a finite path starting with~$s$, and
$\calP: \calF \rightarrow [0,1]$ is the unique probability measure such that
$\calP(\run(w)) = \Pi_{i{=}1}^{|w|-1} x_i$ where
$w(i{-}1) \tran{x_i} w(i)$ for every $1 \leq i < |w|$.
If $|w| = 1$, we put $\calP(\run(w)) = 1$. Note that only certain
subsets of $\run(s)$ are $\calP$-measurable, but in this paper we only
deal with ``safe'' subsets that are guaranteed to be in $\calF$.
%

%
%
\begin{definition}[\textbf{probabilistic PDA}]
\label{def-pPDA}
  A \emph{probabilistic pushdown automaton (pPDA)} is a tuple $\Delta =
  (Q,\Gamma,{\btran{}},\Prob)$ where $Q$ is a finite set of \emph{control
    states}, $\Gamma$ is a finite \emph{stack alphabet}, ${\btran{}}
  \subseteq (Q \times \Gamma) \times (Q \times \Gamma^{\leq 2})$
  is a \emph{transition relation}
  (where $\Gamma^{\leq 2}=\{\alpha\in
  \Gamma^*, |\alpha|\leq 2\}$), and $\Prob$ is a function which to
    each transition \mbox{$pX \btran{} q\alpha$} assigns its
  probability \mbox{$\Prob(pX \btran{} q\alpha) > 0$} so that for
  all $p \in Q$ and $X \in \Gamma$ we have that $\sum_{pX \hookrightarrow
    q\alpha} \Prob(pX \btran{} q\alpha) = 1$.
    As usual, we write $pX \btran{x} q\alpha$
    instead of $\Prob(pX \btran{} q\alpha) = x$.
\end{definition}

\noindent
Elements of $Q\times \Gamma^*$ are called~\emph{configurations}
of $\Delta$. A pPDA with just one control state is
called pBPA.\footnote{The ``BPA'' acronym stands for
``Basic Process Algebra'' and it is used mainly for historical
reasons. pBPA are closely related to stochastic context-free grammars
and are also called \emph{\mbox{1-exit} recursive Markov chains}
(see, e.g., \cite{EY:RMC-SG-equations-JACM}).}
In what follows, configurations of pBPA are usually written without
the (only) control state~$p$ (i.e., we write just $\alpha$
instead~of~$p\alpha$).
%
%
We define the \emph{size} of a pPDA $\Delta$ as
$|\Delta|=|Q|+|\Gamma|+|{\btran{}}|+ |{\Prob}|$, where $|{\Prob}|$ is
the sum of sizes of binary representations of values taken by ${\Prob}$.
To $\Delta$ we associate the Markov chain $M_\Delta$ with 
$Q \times \Gamma^*$ as the set of states and transitions defined as follows:
\begin{itemize}
\item $p \varepsilon \tran{1} p \varepsilon$ for each $p \in Q$;
\item $pX \beta \tran{x} q \alpha \beta$ is a transition of
  $M_\Delta$ if{}f $pX \btran{x} q\alpha$ is a transition of $\Delta$.
\end{itemize}
For all $pXq \in Q \times \Gamma \times Q$ and $rY \in Q \times \Gamma$,
we define
\begin{itemize}
\item $\run(pXq)=\{w\in \run(pX)\mid  w(i)=q\varepsilon \mbox{ for some }
  i \in \Nset\}$
\item $\run(rY{\uparrow})=\run(rY)\setminus \bigcup_{s\in Q} \run(rYs)$.
\end{itemize}
Further, we put $[pXq]=\calP(\run(pXq))$ and
$[pX{\uparrow}]=\calP(\run(pX{\uparrow}))$. If $\Delta$ is a pBPA, we write
$[X]$ and $[X{\uparrow}]$ instead of $[pXp]$ and $[pX{\uparrow}]$,
where $p$ is the only control state of~$\Delta$.

Let $p\alpha \in Q \times \Gamma^*$.
We denote by $\termt{p\alpha}$ a random variable over $\run(p\alpha)$ where
$\termt{p\alpha}(w)$ is either the least $n \in \Nset_0$ such that
$w(n) = q\varepsilon$ for some $q \in Q$, or $\infty$ if there is no
such~$n$. Intuitively, $\termt{p\alpha}(w)$ is the number of steps
(``the time'') in which the run $w$ initiated in~$p\alpha$ terminates.
We write $E[p \alpha] := \Ex{\termt{p \alpha}}$ for the expected termination time
 (usually omitting the control state~$p$ for pBPA).


\section{Transforming pPDA into pBPA} \label{sec:transformation}

Let $\Delta = (Q,\Gamma,{\btran{}},\Prob)$ be a pPDA. We show how to
construct a pBPA $\bpadelta$ which is ``equivalent'' to $\Delta$ in a
well-defined sense. This construction is a relatively straightforward
modification of the standard method for transforming a PDA into an
equivalent context-free grammar (see, e.g., \cite{HU:book}), but
has so far been overlooked in the existing literature on
probabilistic PDA.  The idea behind this method is to construct
a BPA with stack symbols of the form $\co{pXq}$ for all $p,q\in Q$ and
$X\in\Gamma$.  Roughly speaking, such a triple corresponds to
terminating paths from $pX$ to $q\varepsilon$.  Subsequently,
transitions of the BPA are induced by transitions of the PDA in a way
corresponding to this intuition. For example, a transition of the form
$pX\btran{} rYZ$ induces transitions of the form $\co{pXq}\btran{}
\co{rYs}\co{sZq}$ for all $s\in Q$. Then each path from $pX$ to
$q\varepsilon$ maps naturally to a path from $\co{pXq}$ to~$\varepsilon$.  
This construction can also be applied in the
probabilistic setting by assigning probabilities to transitions so that
the probability of the corresponding paths is preserved.
We also deal with nonterminating runs by introducing new stack symbols
of the form $\co{pX{\uparrow}}$.

Formally,
the stack alphabet
of $\bpadelta$  is defined as follows: For every $pX \in Q \times \Gamma$
such that $[pX{\uparrow}] >0$ we add a stack symbol $\co{pX{\uparrow}}$,
and for every $pXq \in Q\times\Gamma\times Q$ such that $[pXq]>0$
we add a stack symbol $\co{pXq}$. Note that the stack alphabet
of $\bpadelta$ is effectively
constructible in polynomial space by applying the results of
\cite{EKM:prob-PDA-PCTL,EY:RMC-SG-equations-JACM}.

Now we construct the rules $\bbtran{}_{\indexbpa}$ of~$\bpadelta$.
For all $\co{pXq}$ we have the following rules:
\begin{itemize}
\item if $pX \btran{x} rYZ$ in $\Delta$, then for all $s\in Q$ such that
 $y = x \cdot [rYs] \cdot [sZq] > 0$
  we put
  $\langle pXq\rangle \bbtran{y/[pXq]}_\indexbpa \langle rYs\rangle\langle sZq\rangle$;
\item if $pX \btran{x} rY$ in $\Delta$, where $y = x \cdot[rYq]>0$, we put
  $\langle pXq\rangle \bbtran{y/[pXq]}_\indexbpa \langle rYq\rangle$;
\item if $pX \btran{x} q\varepsilon$ in $\Delta$, we put
  $\langle pXq\rangle\bbtran{x/[pXq]}_\indexbpa \varepsilon$.
\end{itemize}

For all $\co{pX{\uparrow}}$ we have the following rules:

\begin{itemize}
\item if $pX \btran{x} rYZ$ in $\Delta$, then for every $s \in Q$ where $y = x\cdot [rYs]\cdot [sZ{\uparrow}] > 0$
    we add $\co{pX{\uparrow}} \bbtran{y/[pX{\uparrow}]}_\indexbpa \co{rYs}\co{sZ{\uparrow}}$;
\item for all $qY \in Q \times \Gamma$ where
   $x = [qY{\uparrow}] \cdot \sum_{pX \hookrightarrow qY\beta} \Prob(pX \btran{} qY\beta) >0$, we add
   $\co{pX{\uparrow}} \bbtran{x/[pX{\uparrow}]}_\indexbpa \co{qY{\uparrow}}$.
\end{itemize}
Note that the transition probabilities of~$\bpadelta$ may take irrational
values. Still, the construction of $\bpadelta$ is to some extent
``effective'' due to the following proposition:
\begin{proposition}[\cite{EKM:prob-PDA-PCTL,EY:RMC-SG-equations-JACM}]
\label{prop:pdatobpa-effective}
 Let $\Delta = (Q,\Gamma,{\btran{}},\Prob)$ be a pPDA.
 Let $pXq\in Q\times \Gamma \times Q$.
 There is a formula $\Phi(x)$ of
 $\mathit{ExTh(\mathbb{R})}$ (the existential theory of the reals)
 with one free variable $x$ such that the length of $\Phi(x)$ is polynomial
 in $|\Delta|$ and $\Phi(x/r)$ is valid iff $r = [pXq]$.
\end{proposition}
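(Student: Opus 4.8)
The plan is to realize $\Phi(x)$ as a \emph{purely existential} formula asserting that $x$ is the $pXq$-component of the least nonnegative solution of an explicit polynomial equation system attached to $\Delta$. First I would set up that system: for every triple $p'X'q'\in Q\times\Gamma\times Q$ introduce a real variable $v_{p'X'q'}$, and let $\calE$ be the system containing, for each $p'X'\in Q\times\Gamma$ and each $q'\in Q$, the equation
\[
  v_{p'X'q'} \;=\; \sum_{p'X'\,\hookrightarrow\,q'\varepsilon}\Prob(p'X'\btran{}q'\varepsilon)
    \;+\; \sum_{p'X'\,\hookrightarrow\,rY}\Prob(p'X'\btran{}rY)\cdot v_{rYq'}
    \;+\; \sum_{p'X'\,\hookrightarrow\,rYZ}\ \sum_{s\in Q}\Prob(p'X'\btran{}rYZ)\cdot v_{rYs}\cdot v_{sZq'}.
\]
This is a system of quadratic equations with nonnegative rational coefficients, at most $|Q|^2|\Gamma|$ variables, and total length polynomial in $|\Delta|$. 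Writing $F$ for the monotone map given by its right-hand sides, it is a standard fact — obtained by decomposing a terminating path from $p'X'$ according to its first transition, cf.\ \cite{EKM:prob-PDA-PCTL,EY:RMC-SG-equations-JACM} — that the vector $([p'X'q'])_{p',X',q'}$ is the \emph{least} solution $\vec v^{*}$ of $\calE$ in the nonnegative orthant, and that all its components lie in $[0,1]$. In particular $[pXq]=v^{*}_{pXq}$, so it suffices to define in $\mathit{ExTh(\mathbb{R})}$ the predicate ``$\vec v$ is the least nonnegative solution of $\calE$'' and then read off the $pXq$-component.

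Within this predicate the conjunct $\vec v\ge\vec 0\wedge\vec v=F(\vec v)$ already forces $\vec v\ge\vec v^{*}$, simply because $\vec v^{*}$ is by definition the least such $\vec v$; this alone yields one inequality, $[pXq]\le v_{pXq}$. The difficulty — and the only real obstacle — is to force the reverse, i.e.\ to rule out, existentially, solutions lying strictly \emph{above} $\vec v^{*}$ (equivalently, to supply an existentially checkable \emph{lower} bound on $[pXq]$); a naive formulation would quantify universally over rival solutions, which $\mathit{ExTh(\mathbb{R})}$ does not permit. Here I would reuse the Perron--Frobenius analysis of monotone systems from \cite{EKM:prob-PDA-PCTL,EY:RMC-SG-equations-JACM}: decompose the dependency graph of $\calE$ into strongly connected components and process them bottom-up; on a component in which some variable reaches a quadratic monomial of its own defining equation, $F$ is strictly convex along segments of the positive orthant, so — given that the lower components already carry their least-solution values — $\vec v^{*}$ restricted to that component is the \emph{unique} fixed point whose Jacobian block $F'(\vec v)$ there has spectral radius at most $1$, while the remaining components are affine and then have a unique rational solution. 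Finally, ``$\rho(A)\le 1$'' for a nonnegative irreducible matrix $A$ is itself existential by the Collatz--Wielandt characterization — it holds iff there is a strictly positive vector $\vec u$ with $A\vec u\le\vec u$. Hence ``$\vec v$ is the least nonnegative solution of $\calE$'' is expressed by $\vec v\ge\vec 0\wedge\vec v=F(\vec v)$ conjoined with the existence of such a positive $\vec u$ for each non-affine Jacobian block (plus the obvious equations for the affine blocks), and
\[
  \Phi(x)\;:\equiv\;\exists\vec v\ \bigl((\vec v\text{ is the least nonnegative solution of }\calE)\wedge v_{pXq}=x\bigr)
\]
has the single free variable $x$, polynomial length (the Jacobian entries, the SCC decomposition, and the auxiliary quantified vectors are all of polynomial size), and is valid at $x:=r$ exactly when $r=[pXq]$.

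I expect essentially all of the work to be in that last step, namely encoding ``least fixed point'' by a genuinely existential formula; this is in substance the content of \cite{EKM:prob-PDA-PCTL,EY:RMC-SG-equations-JACM}, which I would cite rather than reprove in full, everything else being routine bookkeeping. An alternative that sidesteps the spectral argument would exploit the conservation identities $\sum_{q'}[pXq']+[pX{\uparrow}]=1$ to trade a lower bound on $[pXq]$ for upper bounds on the complementary probabilities, but this needs a separate and comparably delicate justification of the fixed-point status of the quantities $[pX{\uparrow}]$, so I would keep to the route above.
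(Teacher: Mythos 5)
First, note that the paper itself does not prove this proposition: it is imported verbatim from \cite{EKM:prob-PDA-PCTL,EY:RMC-SG-equations-JACM} with no argument, so what you have written is a reconstruction of the cited proof rather than an alternative to anything in this paper. Your overall strategy is the standard one from those references: the probabilities $[p'X'q']$ form the least nonnegative solution of the quadratic system you write down, the conjunct $\vec v\ge\vec 0\wedge\vec v=F(\vec v)$ yields the upper bound on $[pXq]$ for free, and minimality is enforced existentially, SCC by SCC, through a spectral condition on the Jacobian, with $\rho\le 1$ made existential via a positive Collatz--Wielandt witness.

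As written, however, your formula is incorrect, and the missing step is the \emph{qualitative preprocessing}: before building $\Phi$ one must compute (in polynomial time, by ordinary reachability in the underlying non-probabilistic PDA) the set of triples with $[p'X'q']=0$ and delete those variables from $\calE$. Without this, two things break. First, your claim that the affine SCCs ``have a unique rational solution'' fails: for the pPDA with rules $pX\btran{1}qX$ and $qX\btran{1}pX$, the block for a target state $r$ is $v_{pXr}=v_{qXr}$, $v_{qXr}=v_{pXr}$, which has a continuum of nonnegative solutions although $[pXr]=0$; your formula imposes no spectral condition on affine blocks and would therefore validate every $x\ge 0$. Second, in the non-affine blocks, if some variable vanishes at the least solution then the Jacobian $F'(\vec v^{*})$ restricted to the block can be reducible even though the syntactic dependency graph is strongly connected, and for reducible nonnegative matrices the existence of a strictly positive $\vec u$ with $A\vec u\le\vec u$ is strictly stronger than $\rho(A)\le 1$ (take $A$ upper triangular with ones on and above the diagonal), so completeness of your encoding --- that $\vec v^{*}$ itself satisfies it --- is no longer guaranteed. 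After deleting the zero variables both problems disappear: the remaining least solution is strictly positive, so the Jacobian blocks are irreducible there; an affine block with spectral radius $1$ and nonzero constant term would then have no solution at all, contradicting that $\vec v^{*}$ solves it, so the surviving affine blocks have spectral radius below $1$ and genuinely unique solutions; and the strict-convexity argument you invoke does show that any fixed point strictly above $\vec v^{*}$ on a genuinely quadratic irreducible block has Jacobian spectral radius strictly greater than $1$. The repair is routine and is exactly what the cited papers do, but it is a necessary ingredient of the proof, not bookkeeping.
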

\noindent
Using Proposition~\ref{prop:pdatobpa-effective}, one can compute
formulae of $\mathit{ExTh(\mathbb{R})}$ that ``encode'' transition
probabilities of $\bpadelta$. Moreover, these probabilities can be
effectively approximated up to an arbitrarily small error by
employing either the decision procedure for $\mathit{ExTh(\mathbb{R})}$
\cite{Canny:Tarski-exist-PSPACE} or by
using Newton's method \cite{EKL08:stacs,KLE07:stoc,EKL10:SICOMP}.

\begin{example}
Consider a pPDA $\Delta$ with two control states, $p,q$, one stack symbol, $X$, and the following transition rules:
{\small
\[
  pX \bbtran{a} qXX,\
  pX \bbtran{1-a} q\varepsilon,\
  qX \bbtran{b} pXX,\
  qX \bbtran{1-b}   p\varepsilon,\
\]}%
where both $a,b$ are greater than $1/2$.
Apparently, $[pXp]=[qXq]=0$. Using results of~\cite{EKM:prob-PDA-PCTL} one can easily verify that $[pXq]=(1-a)/b$ and $[qXp]=(1-b)/a$.
Thus $[pX{\uparrow}]=(a+b-1)/b$ and $[qX{\uparrow}]=(a+b-1)/a$. Thus the stack symbols of $\bpadelta$
are $\co{pXq},\co{qXp},\co{pX{\uparrow}},\co{qX{\uparrow}}$. The transition rules of $\bpadelta$ are:
{\small
\[
\begin{array}{llll}
  \co{pXq} \bbtran{1-b}_\indexbpa \co{qXp}\co{pXq} &
  \quad \co{pXq} \bbtran{b}_\indexbpa \varepsilon &
  \quad \co{qXp} \bbtran{1-a}_\indexbpa \co{pXq}\co{qXp} &
  \quad \co{qXp} \bbtran{a}_\indexbpa \varepsilon \\
  \co{pX{\uparrow}} \bbtran{1-b}_\indexbpa \co{qXp}\co{pX{\uparrow}} &
  \quad \co{pX{\uparrow}} \bbtran{b}_\indexbpa \co{qX{\uparrow}} &
  \quad \co{qX{\uparrow}} \bbtran{1-a}_\indexbpa \co{pXq}\co{qX{\uparrow}} &
  \quad \co{qX{\uparrow}} \bbtran{a}_\indexbpa \co{pX{\uparrow}}
\end{array}
\]}%
As both $a,b$ are greater than $1/2$, the resulting pBPA has a tendency to remove symbols rather than add symbols. Thus both $\co{pXq}$ and $\co{qXp}$ terminate with probability $1$.
\end{example}
\noindent
When studying long-run properties of pPDA (such as $\omega$-regular
properties or limit-average properties), one usually assumes that
the runs are initiated in a configuration $p_0X_0$ which cannot
terminate, i.e., $[p_0X_0{\uparrow}] = 1$. Under this assumption,
the probability spaces over $\run[M_{\Delta}](p_0 X_0)$ and
$\run[M_{\bpadelta}](\co{p_0 X_0 {\uparrow}})$ are ``isomorphic''
w.r.t.{} all properties that depend only on the control states
and the top-of-the-stack symbols of the configurations visited
along a run. This is formalized in our next proposition.

\begin{proposition}\label{lem:longrun_PDA_to_BPA}
Let $p_0 X_0 \in Q \times \Gamma$ such that $[p_0 X_0 {\uparrow}] = 1$. Then
there is a partial function
$\Upsilon:\run[M_{\Delta}](p_0 X_0)\rightarrow
\run[M_{\bpadelta}](\co{p_0 X_0 {\uparrow}})$ such that for every
$w \in \run[M_{\Delta}](p_0 X_0)$, where $\Upsilon(w)$ is defined, and every
$n \in \Nset$ we have the following: if $w(n) = qY\beta$, then
$\Upsilon(w)(n) = \co{qY{\dag}}\gamma$, where $\dag$ is either an element
of $Q$ or ${\uparrow}$. Further, for every measurable set of runs
$R \subseteq \run[M_{\bpadelta}](\co{p_0 X_0 {\uparrow}})$ we have that
$\Upsilon^{-1}(R)$ is measurable and $\calP(R)=\calP(\Upsilon^{-1}(R))$.
\end{proposition}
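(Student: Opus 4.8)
The plan is to define $\Upsilon$ by tracking, along a run $w$ of $M_\Delta$ started in $p_0X_0$, which stack symbols currently sitting in the stack will eventually be popped and which will not. Since $[p_0X_0{\uparrow}]=1$, the bottom symbol is never popped; in general, at each configuration $qY_1Y_2\cdots Y_k$ reached by $w$ there is a well-defined largest index $j$ such that $Y_j, Y_{j+1}, \dots, Y_k$ are all eventually popped along $w$ (possibly $j=k+1$, meaning none of them is), and the symbols $Y_1,\dots,Y_{j-1}$ are never popped. The image configuration is obtained by replacing each $Y_i$ with $i<j$ by $\co{q_iY_i{\uparrow}}$ where $q_i$ is the control state in effect at the moment $Y_i$ becomes the top symbol, replacing $Y_j$ by $\co{q_jY_j{\uparrow}}$, and replacing each $Y_i$ with $i>j$ by $\co{q_iY_ir_i}$ where $q_i$ is the control state when $Y_i$ reaches the top and $r_i$ is the control state at the moment $Y_i$ is popped (with $r_i = q_{i-1}$, the state at which the next symbol down becomes exposed). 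I would first check that this is well-defined on the set of runs $w$ where the relevant "eventual popping" pattern is determined (it is defined on all runs, since whether a given symbol is ever popped is a measurable event decided by $w$ itself), and that $\Upsilon(w)$ is indeed a path in $M_{\bpadelta}$: each step $qY\beta \tran{} q'\alpha\beta$ of $w$ maps to the corresponding rule of $\bpadelta$ from the list in Section~\ref{sec:transformation}, with the case split (pop / single push / double push) matching the three families of $\co{\cdot{\uparrow}}$- and $\co{\cdot\cdot\cdot}$-rules. This immediately gives the stated property: if $w(n)=qY\beta$ then $\Upsilon(w)(n)$ has top symbol $\co{qY{\dag}}$.

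Next I would establish the measure-preservation claim. The clean way is to show that $\Upsilon$ pushes forward $\calP$ on $\run[M_\Delta](p_0X_0)$ to $\calP$ on $\run[M_{\bpadelta}](\co{p_0X_0{\uparrow}})$, i.e.\ that for every basic cylinder $\run(u)$ of $M_{\bpadelta}$ (with $u$ a finite path from $\co{p_0X_0{\uparrow}}$), the preimage $\Upsilon^{-1}(\run(u))$ is measurable with $\calP(\Upsilon^{-1}(\run(u))) = \calP(\run(u))$. Since $\Upsilon$ is "online" — the first $n$ steps of $\Upsilon(w)$ depend on $w$ only through a tail event (which symbols get popped) plus the finite prefix — the preimage of a cylinder is a countable union over the finite prefixes $v$ of $w$ of length $n$ that are "compatible" with $u$, intersected with the tail event that the popping pattern along $v$ matches the $\dag$-annotations in $u$. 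The key algebraic identity to verify is that for a transition $pX\btran{x}rYZ$ of $\Delta$, the probabilities of the induced $\bpadelta$-rules satisfy $\sum_{s} x\cdot[rYs]\cdot[sZq]/[pXq] = 1$ (respectively with ${\uparrow}$), which holds because $[pXq] = \sum_{pX\btran{x}rYZ} x[rYs][sZq] + \sum_{pX\btran{x}rY}x[rYq] + \sum_{pX\btran{x}q\varepsilon}x$ is exactly the first-step decomposition of the termination probability, and similarly $[pX{\uparrow}]$ decomposes into the $\co{\cdot{\uparrow}}$-rule numerators. Granting these, a telescoping/induction on $|u|$ shows the weight of the $\bpadelta$-path $u$ equals the $\calP$-measure of the set of $M_\Delta$-runs whose annotated prefix maps to $u$; summing the geometric-like series over the internal branching of each symbol (the $[rYs]$ factors sum to the probability of terminating-in-$s$, the $[sZ{\uparrow}]$ factor to non-termination) collapses to $\calP(\run(u))$. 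By the $\pi$--$\lambda$ theorem (cylinders generate $\calF$ and are closed under intersection) this extends to all measurable $R$.

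The main obstacle I anticipate is the bookkeeping in the measure-preservation argument: making precise the countable decomposition of $\Upsilon^{-1}(R)$ and showing it is genuinely measurable, because $\Upsilon$ is not continuous — the annotation $\dag$ attached at step $n$ depends on the infinite future of $w$. I would handle this by fixing $n$ and a finite $\bpadelta$-prefix $u$ of length $n$, and writing $\Upsilon^{-1}(\run(u))$ as a disjoint union, over all finite $M_\Delta$-paths $v$ of length $n$ from $p_0X_0$, of the events $\{w : w \text{ extends } v\} \cap A_{v,u}$ where $A_{v,u}$ is the event "along $w$, each symbol of $v$ is popped-or-not and pops into the control state as prescribed by the annotations in $u$"; the latter is a countable Boolean combination of the events $\run(qZr)$ and $\run(qZ{\uparrow})$ (shifted to the appropriate position in $w$), all of which lie in $\calF$. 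The probability of such an event factorizes as $\calP(\run(v))$ times a product of $[q_iZ_ir_i]$ and $[q_jZ_j{\uparrow}]$ terms by the strong Markov property of $M_\Delta$ at the times the relevant symbols are exposed, and one checks this product equals precisely $\calP(\run(u))$ after summing over the compatible $v$. Everything else — well-definedness of $\Upsilon$, the step-correspondence, the top-symbol property — is a routine case analysis over the three rule shapes.
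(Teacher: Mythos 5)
Your overall strategy is the paper's own: the paper defines $\Upsilon(w)$ step by step, annotating the symbol created at each step with the control state in which it will eventually be popped (or with ${\uparrow}$), and then proves measure preservation exactly as you propose --- by decomposing $\Upsilon^{-1}$ of a cylinder $\run(\bar v)$ into the compatible finite $M_\Delta$-prefixes intersected with the tail event that the popping pattern matches the annotations, factorizing via the Markov property into a product of $[q_iZ_ir_i]$- and $[\cdot{\uparrow}]$-terms, and inducting on the length of $\bar v$ using the first-step decomposition of $[pXq]$ and $[pX{\uparrow}]$. So the plan is sound and the measure-theoretic bookkeeping you anticipate is exactly what the paper carries out.

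There is, however, a concrete error in your definition of the image configuration. A configuration of $\bpadelta$ reachable from $\co{p_0X_0{\uparrow}}$ has the form $\co{p_1X_1p_2}\cdots\co{p_kX_kp_{k+1}}\co{p_{k+1}Y{\uparrow}}$: a block of terminating symbols on top of \emph{exactly one} ${\uparrow}$-symbol, and nothing below it. You instead replace \emph{every} never-popped symbol of $w$'s stack by an $\co{q_iY_i{\uparrow}}$-symbol. The symbols strictly below the topmost never-popped symbol are never exposed along $w$, so your own annotation ``$q_i$ is the control state when $Y_i$ reaches the top'' is undefined for them --- and they must be \emph{discarded}, not annotated. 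This matters for the step correspondence: when a rule $pX\btran{}rYZ$ fires and the new top symbol $Y$ is never popped, the matching $\bpadelta$-rule is $\co{pX{\uparrow}}\btran{}\co{rY{\uparrow}}$, which drops $Z$ entirely; if $\Upsilon(w)$ retained a symbol for $Z$, the image sequence would not follow any rule of $\bpadelta$ and would not be a run of $M_{\bpadelta}$ at all. A second, smaller slip: $\Upsilon$ cannot be total. If along $w$ a symbol $Y$ exposed in state $q$ is popped into state $r$ with $[qYr]=0$ (or is never popped while $[qY{\uparrow}]=0$), the required symbol $\co{qYr}$ (resp.\ $\co{qY{\uparrow}}$) is simply absent from $\bpadelta$'s stack alphabet, so $\Upsilon(w)$ is undefined; this is why the proposition speaks of a partial function. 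These runs form a null set, so your measure argument survives unchanged once both points are repaired.
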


\noindent
As for terminating runs, observe that the ``terminating''
symbols of the form $\langle pXq\rangle$ do not depend on the
``nonterminating'' symbols of the form $\co{pX{\uparrow}}$, i.e.,
if we restrict $\bpadelta$ just to terminating symbols, we again
obtain a pBPA.
%
A straightforward computation reveals the following
proposition about terminating runs that is crucial for our results
presented in the next section.
\begin{proposition}\label{prop:pdatobpa}
  Let $pXq\in Q\times \Gamma\times Q$ and $[pXq]>0$.  Then almost all
  runs of $M_{\bpadelta}$ initiated in $\langle pXq\rangle$ terminate, i.e., reach
  $\varepsilon$.  Further, for all $n\in \Nset$ we have that
  \[
    \massprobcond{pX}{n}{\run(pXq)} \quad = \quad
    \massprobcond{\langle pXq\rangle}{n}{\run(\langle pXq \rangle)}
  \]
\end{proposition}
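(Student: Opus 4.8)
The plan is to establish a measure-preserving bijection between terminating runs of $M_\Delta$ from $pX$ to $q\varepsilon$ and terminating runs of $M_{\bpadelta}$ from $\co{pXq}$ to $\varepsilon$, in a way that preserves the number of steps exactly. First I would recall the intended correspondence: a run in $M_\Delta$ from $pX$ that reaches $q\varepsilon$ can be decomposed according to the ``first descent'' structure of the stack. Concretely, if the first rule applied is $pX \btran{x} rYZ$, then the run must first go from $rY$ down to $s\varepsilon$ for some (uniquely determined, along that run) intermediate control state $s$ while $Z$ sits on the stack, and then from $sZ$ down to $q\varepsilon$; the corresponding $\bpadelta$-rule is $\co{pXq} \btran{x\cdot[rYs]\cdot[sZq]/[pXq]}_\indexbpa \co{rYs}\co{sZq}$. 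If the first rule is $pX\btran{x} rY$, the run continues from $rY$ to $q\varepsilon$ and matches $\co{pXq}\btran{x[rYq]/[pXq]}_\indexbpa \co{rYq}$. If the first rule is $pX\btran{x} q\varepsilon$, the run terminates in one step, matching $\co{pXq}\btran{x/[pXq]}_\indexbpa \varepsilon$. This map $\Psi$ sends a $\Delta$-run to a $\bpadelta$-run with the identical sequence of ``active'' symbols and hence the identical length: each step of the $\Delta$-run that rewrites the topmost symbol corresponds to exactly one step of the $\bpadelta$-run rewriting the leftmost symbol.

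The key computation, which I would carry out by induction on the length $n$ of the (finite) path, is that $\Psi$ restricted to length-$n$ terminating paths is a bijection onto length-$n$ terminating paths, and that it scales probabilities by the factor $1/[pXq]$. That is, for a finite terminating path $w$ of length $n$ from $pX$ to $q\varepsilon$ in $M_\Delta$, one has $\calP(\run(\Psi(w))) = \calP(\run(w))/[pXq]$. This follows because each $\Delta$-rule probability $x$ is replaced along the way by $x\cdot[rYs]\cdot[sZq]/[pXq]$ (resp.\ $x\cdot[rYq]/[pXq]$, resp.\ $x/[pXq]$), and the nested ratios $[r'Y's']\cdot[s'Z'q']$ telescope: the bracket values introduced when ``entering'' a subcomputation exactly cancel against the product of rule probabilities of that subcomputation (which sums, over all its terminating continuations, to that very bracket value) except for one overall factor of $1/[pXq]$. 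This telescoping identity is essentially the defining fixed-point equation for the $[pXq]$ values (from \cite{EKM:prob-PDA-PCTL,EY:RMC-SG-equations-JACM}), so I would invoke that rather than reprove it. Summing $\calP(\run(w)) = [pXq]\cdot\calP(\run(\Psi(w)))$ over all terminating paths $w$ of length exactly $n$ gives
\[
  \calP(\termt{pX}=n,\ \run(pXq)) \;=\; [pXq]\cdot \calP(\termt{\co{pXq}}=n),
\]
and summing over all $n$ gives $\calP(\run(pXq)) = [pXq]\cdot\calP(\{\text{terminating runs from }\co{pXq}\})$, i.e.\ the $\bpadelta$-runs from $\co{pXq}$ terminate with probability one. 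Dividing the first displayed identity by $[pXq]$ yields the claimed equality of conditional probabilities, and the ``almost all runs terminate'' assertion is the case just noted.

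The main obstacle, and the place where care is needed, is verifying that $\Psi$ is genuinely a \emph{bijection} on finite terminating paths and that the step-count is preserved exactly — in particular that the intermediate control states $s$ appearing in the decomposition are uniquely determined by the run (so the map is well-defined and injective) and that every choice of $\bpadelta$-derivation arises from some $\Delta$-run (surjectivity). Uniqueness of $s$ comes from looking at the run in $M_\Delta$ at the moment the stack first returns to height equal to that of $Z$: the control state at that moment is forced. Surjectivity is handled by reading the argument backwards, assembling a $\Delta$-path from a $\bpadelta$-derivation segment by segment. The length-preservation is immediate from the construction once one observes that the $\bpadelta$-rules are in lockstep with $\Delta$-rules (one $\indexbpa$-step per $\Delta$-step), with $\varepsilon$-rules of $\bpadelta$ corresponding to the popping rules $pX\btran{}q\varepsilon$ of $\Delta$. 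I would present this as a ``straightforward computation'' as the proposition already advertises, emphasizing the telescoping of bracket values as the one substantive point.
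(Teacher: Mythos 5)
Your proof is correct, but it takes a different route from the paper's. The paper never constructs an explicit bijection on runs: it defines $D_{pXq}(n) = \calP(\run(pXq),\ \termt{pX}=n \mid \run(pX))$ and $D_{\co{pXq}}(n)$, proves $D_{pXq}(n) = [pXq]\cdot D_{\co{pXq}}(n)$ by induction on~$n$, and the heart of the argument is a first-step decomposition $D_{pXq}(n) = \sum_{pX\btran{x}rY} x\,D_{rYq}(n{-}1) + \sum_{i}\sum_{pX\btran{x}rYZ}\sum_{s} x\,D_{rYs}(i)\,D_{sZq}(n{-}i{-}1)$, obtained from exactly the disjoint partition of paths (by first rule, intermediate state~$s$, and split point~$i$) that underlies your map~$\Psi$; substituting the induction hypothesis and recognizing the $\bpadelta$-decomposition finishes the step. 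So the combinatorial core is shared, but the bookkeeping differs: the paper's convolution-style induction spares it from verifying well-definedness, injectivity and surjectivity of a global map, whereas your path-level bijection is more conceptual, makes the length preservation manifest, and immediately yields equality of the whole conditional distributions (hence of all moments) in one stroke. One clarification on your key computation: the identity $\calP(\run(\Psi(w))) = \calP(\run(w))/[pXq]$ is a purely formal telescoping for each individual path --- every bracket $[rYs]$ enters the numerator exactly once (when $\co{rYs}$ is pushed) and the denominator exactly once (when $\co{rYs}$ is first rewritten), leaving only $1/[pXq]$ for the initial symbol --- so you do not need to invoke the fixed-point equations for $[pXq]$ there; those equations are only needed to see that the rule probabilities of $\bpadelta$ sum to one, which the construction already guarantees. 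Your parenthetical about the brackets cancelling against ``the product of rule probabilities of that subcomputation, which sums over all terminating continuations to that bracket value'' conflates these two points, but this is an imprecision of exposition, not a gap.
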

Observe that this proposition, together with a very special form of
rules in $\bpadelta$, implies that all configurations reachable from a
nonterminating configuration $p_0 X_0$ have the form $\alpha
\co{qY{\uparrow}}$, where $\alpha$ terminates almost surely and
$\co{qY{\uparrow}}$ never terminates. It follows that such a pBPA can
be transformed into a finite-state Markov chain (whose states are
the nonterminating symbols) which is allowed to make recursive calls
that almost surely
terminate (using rules of the form $\co{pX{\uparrow}}\bbtran{}
\co{rZq}\co{qY{\uparrow}}$). This observation is very useful
when investigating the properties of nonterminating runs,
and many of the existing results about pPDA can be substantially
simplified using this result.

\section{Analysis of pBPA} \label{sec:BPA-analysis}

In this section we establish the promised tight tail bounds for
the termination time. By virtue of Proposition~\ref{prop:pdatobpa},
it suffices to analyze \emph{almost surely terminating} pBPA,
 i.e., pBPA all whose stack symbols terminate with probability~$1$.  In
what follows we assume that $\Delta$ is such a pBPA, and we also
fix an initial stack symbol~$X_0$.
For $X,Y \in \Gamma$, we say that \emph{$X$ depends directly on~$Y$},
if there is a
rule $X \btran{} \alpha$ such that $Y$ occurs in~$\alpha$.  Further,
we say that \emph{$X$ depends on~$Y$}, if either $X$ depends directly
on~$Y$, or $X$ depends directly on a symbol~$Z \in \Gamma$ which
depends on~$Y$.  One can compute, in linear time, the directed acyclic
graph (DAG) of strongly connected components (SCCs) of the dependence
relation.  The \emph{height} of this DAG, denoted by $h$, is
defined as the longest
distance between a top SCC and a bottom SCC plus~$1$ (i.e., $h=1$ if
there is only one SCC). We can safely assume that all symbols on
which $X_0$ does not depend were removed from~$\Delta$.
We abbreviate $\tailprobcond{X_0}{n}{\run(X_0)}$ to $\tailprob{X_0}{n}$,
and we use $\pmin$ to denote $\min\{ p \mid X\btran{p}\alpha\text{ in }\Delta\}$.
%
%
Here is our main result:
\begin{theorem}\label{thm:termination}
Let $\Delta$ be an almost surely terminating pBPA with stack alphabet~$\Gamma$.
Assume that $X_0 \in \Gamma$ depends on
all $X \in \Gamma \setminus \{X_0\}$, and let
$\pmin =\min\{ p \mid X\btran{p}\alpha\text{ in }\Delta\}$.
Then one of the following is true:
\begin{enumerate}
\item[(1)] $\tailprob{X_0}{2^{|\Gamma|}}=0$.
\item[(2)] $E[X_0]$ is {\bf finite} and for all $n\in \Nset$
with $n \ge 2 E[X_0]$ we have that
\begin{equation*}\textstyle
 \pmin^{n}\quad \leq\quad \tailprob{X_0}{n} \quad \leq\quad
 \exp\left( 1 - \frac{n}{8\Emax^2} \right)
\end{equation*}
where $\Emax = \max_{X \in \Gamma} E[X]$.
\item[(3)] $E[X_0]$ is {\bf infinite} and there is
$n_0 \in \Nset$ such that for all $n \ge n_0$ we have that
\begin{equation*}\textstyle
c/n^{1/2} \quad\le\quad \tailprob{X_0}{n} \quad\le\quad d_1 / n^{d_2}
\end{equation*}
where $d_1 = 18 h |\Gamma| / \pmin^{3|\Gamma|}$, and $d_2 = {1/(2^{h+1}-2)}$.
Here, $h$ is the height of the DAG of SCCs of the dependence relation,
 and $c$ is a suitable positive constant depending on~$\Delta$.
\end{enumerate}
\end{theorem}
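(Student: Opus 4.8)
The plan is to treat the three cases separately, using martingale arguments for the finite-expectation case and Perron--Frobenius/spectral arguments for the infinite case, with the reduction to almost surely terminating pBPA (Proposition~\ref{prop:pdatobpa}) already in hand. First I would dispose of case~(1): if some symbol $X_0$ on which everything depends can reach $\varepsilon$ only via runs that are bounded in length, then a pumping-style argument on the dependence DAG shows the bound must be at most $2^{|\Gamma|}$; conversely, if no run from $X_0$ has length $\geq 2^{|\Gamma|}$ with positive probability, we are in case~(1). So assume from now on that arbitrarily long terminating runs exist; then every symbol reachable in a ``pumping'' cycle forces $E[X_0] \ge$ some bound, and cases~(2) and~(3) partition according to whether $\Emax = \max_X E[X]$ is finite. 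The lower bound $\pmin^n \le \freq{X_0}{n}$ in case~(2) is the easy direction: exhibit one explicit terminating run of length exactly $n$ (or $\ge n$) whose probability is at least $\pmin^n$, using that long runs exist and each step costs at most a factor $\pmin$; since we condition on $\run(X_0)$, which has probability~$1$, this is immediate.

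The heart of case~(2) is the exponential upper bound. The plan is to construct a martingale $m^{(i)}$ adapted to the run: at step $i$, let $m^{(i)}$ be the current step count plus $\sum_{Y \text{ on the stack}} E[Y]$, i.e., the elapsed time plus the expected remaining time given the current stack contents. Because each symbol terminates almost surely and $E[Y] = \Emax < \infty$ uniformly, the one-step expected change of this quantity is exactly zero (the ``$+1$'' for the step taken cancels the expected decrease in $\sum E[Y]$), so $m^{(i)}$ is a martingale; and its increments are bounded (in absolute value by roughly $2\Emax$, since a rule changes the stack by at most two symbols each contributing at most $\Emax$, plus the unit time step). The stopping time is termination $\termt{X_0}$; optional stopping gives $\E{m^{(\termt{X_0})}} = m^{(0)} = E[X_0]$, and $m^{(\termt{X_0})} = \termt{X_0}$ since the stack is empty. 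Then Azuma--Hoeffding applied to this bounded-increment martingale yields $\calP(\termt{X_0} \ge n) \le \exp(-(n-E[X_0])^2/(2 n (2\Emax)^2))$ or similar; with $n \ge 2E[X_0]$ this cleans up to the stated $\exp(1 - n/(8\Emax^2))$. The technical care here is (i) verifying the martingale increments really are bounded by the claimed constant after accounting for the rule shapes $X \btran{} \varepsilon$, $X \btran{} Y$, $X \btran{} YZ$, and (ii) justifying optional stopping despite $\termt{X_0}$ being unbounded, which follows from $\E{\termt{X_0}} < \infty$ and bounded increments (standard).

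Case~(3), where $E[X_0] = \infty$, is the main obstacle and requires the Perron--Frobenius machinery. The plan: the expectations $E[X]$ satisfy a linear fixed-point system whose coefficient matrix is governed by the expected-offspring matrix $A$ (entry $(X,Y)$ = expected number of $Y$'s produced directly by $X$); $E[X_0] = \infty$ exactly when the relevant block has spectral radius $1$ (critical case), and the height $h$ of the SCC DAG controls how the ``criticality'' propagates, which is the source of the exponent $d_2 = 1/(2^{h+1}-2)$. For the polynomial \emph{upper} bound $\freq{X_0}{n} \le d_1/n^{d_2}$, I would build a \emph{different} martingale: take a left Perron eigenvector $\vu$ of the critical block (entries $\ge \pmin^{O(|\Gamma|)}$ by Perron--Frobenius lower bounds) and let $m^{(i)} = \vu \cdot (\text{Parikh vector of the current stack})$, which is a nonnegative martingale; then use a second-moment/variance estimate — the quadratic variation of this martingale grows linearly — together with a stopping-time argument (if the run is still alive at time $n$, the martingale has had $n$ steps to make a bounded-variance excursion, so it is unlikely to have stayed positive), invoking an estimate of the type ``a martingale started at a bounded value with per-step variance bounded below stays positive for $n$ steps with probability $O(1/\sqrt{n})$ per SCC level'', and composing over the $h$ levels to get the $2^{h+1}-2$ denominator. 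For the \emph{lower} bound $c/n^{1/2} \le \freq{X_0}{n}$, I would localize to a single critical bottom SCC and compare with a one-dimensional critical branching process / random walk, for which the total progeny is known to have a tail $\sim c/\sqrt{n}$ (Kolmogorov-type estimate), then lift this back to $X_0$ using that $X_0$ reaches the critical SCC with positive probability. The delicate points are: extracting the uniform spectral lower bounds on $\vu$ in terms of $\pmin$ and $|\Gamma|$; making the per-level ``survival probability'' estimate rigorous (this is where recent results from \cite{EY:RMC-SG-equations-JACM,EKL10:SICOMP} on the asymptotics of the generating functions near the critical point are used); and correctly accounting for the recursive structure of the SCC DAG to get the exact exponent rather than a weaker bound.
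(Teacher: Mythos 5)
Your plan for cases (1) and (2) coincides with the paper's proof: case (1) and the lower bound $\pmin^n$ are handled by exactly the pumping argument on derivation trees (Proposition~\ref{prop:term-pumping}), and the exponential upper bound uses the very same martingale $\ms{i}=\min\{i,I(w)\}+\sum_{Y\text{ on stack}}E[Y]$ with increments bounded by $B\le 2\Emax$ and Azuma at the deterministic time~$n$ (Proposition~\ref{prop:term-upper-subcritical}); your aside about optional stopping is unnecessary there, since one only needs the event inclusion $\{w(n)\ne\varepsilon\}\subseteq\{\ms{n}\ge n\}$. For case (3) you take a genuinely different route. The paper does not use the linear martingale $\vu\thickdot\numbsymb{w(i)}$ plus a quadratic-variation survival estimate; instead it builds the exponential martingale $\ms{i}_\theta=\exp(-\theta\,\numbsymb{w(i)}\thickdot\vu)\prod_j g_{X^{(j)}}(\theta)^{-1}$, applies Doob's optional stopping to get a Laplace-transform identity, and extracts the $O(1/\sqrt{n})$ bound by setting $\theta\approx n^{-1/2}$ and using l'H\^opital on $g_X$; the lower bound comes from differentiating that same identity in~$\theta$ rather than from a comparison with a one-dimensional critical process. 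Your second-moment route is a classical alternative and would plausibly work; the Laplace-transform route has the advantage of yielding explicit constants of the form $17|\Gamma|/(\pmin\umin^2)$ and of making the derivative trick for the lower bound available.

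Three points in your sketch are more than ``delicate'' and would need real work. First, both your upper- and lower-bound arguments for case (3) require the per-step variance of the weighted stack content to be bounded \emph{below}, and this can simply fail: a symbol may have all its rules preserving $\numbsymb{\alpha}\thickdot\vu$ exactly. The paper fixes this by first contracting derivation sequences into a ``$\vu$-progressive'' relaxed pBPA (Lemma~\ref{lem:progressive}), at the cost of a factor $|\Gamma|$ in the termination time and $\pmin^{|\Gamma|}$ in the rule probabilities; without some such normalization the argument collapses. Second, for the lower bound, ``compare with a one-dimensional critical branching process'' is not a literal reduction: the process is genuinely multitype, and producing the right one-dimensional object (the $\vu$-weighted stack content restricted to an irreducible principal submatrix with $\rho=1$, which exists because $E[X_0]=\infty$ forces $\rho(A)=1$) and proving its total-progeny-type tail is $\Omega(1/\sqrt{n})$ \emph{is} the content of the proof; a Kolmogorov-type extinction-time estimate gives $\Theta(1/n)$ for survival, not the $\Theta(1/\sqrt{n})$ total-progeny tail you need. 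Third, the SCC induction giving $d_2=1/(2^{h+1}-2)$ requires the single-SCC bound to hold uniformly for initial configurations of length up to $n^k$ (not just single symbols), because the high phase may deposit polynomially many low-SCC symbols; your invoked estimate scales with the starting value of the martingale, so this can be made to work, but it must be stated and proved in that uniform form, as in Proposition~\ref{prop:critical-u-progressive}.
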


\noindent
More colloquially, Theorem~\ref{thm:termination} states that
 $\Delta$ satisfies either (1) or (2) or (3), where (1) is when $\Delta$ does not have any long terminating runs;
 and (2) resp.~(3) is when the expected termination time is finite (resp.~infinite) and the probability of
   performing a terminating run of length~$n$ decreases exponentially (resp.~polynomially) in~$n$.

One can effectively distinguish between the three cases
set out in Theorem~\ref{thm:termination}. More precisely, case (1)
can be recognized in polynomial time by looking only at the structure
of the pBPA, i.e., disregarding the probabilities.  Determining whether
$E[X_0]$ is finite or infinite can be done in polynomial space
by employing the results of \cite{EKM:prob-PDA-expectations,Brazdil:PhD}.
This holds even if the transition probabilities of~$\Delta$ are represented just
symbolically by formulae of $\mathit{ExTh(\mathbb{R})}$
(see Proposition~\ref{prop:pdatobpa-effective}).

The proof of Theorem~\ref{thm:termination} is based on designing suitable martingales
 that are used to analyze the concentration of the termination time. Recall
that  a \emph{martingale} is an infinite sequence of random variables
$\ms{0},\ms{1},\dots$ such that, for all $i \in \Nset$,
$\E{|\ms{i}|} < \infty$, and
\mbox{$\E{\ms{i+1} \mid \ms{1},\dots,\ms{i}} = \ms{i}$} almost surely.
If $|\ms{i}-\ms{i-1}| < c_i$ for all $i \in \Nset$, then we have the
following \emph{Azuma's inequality} (see, e.g.,~\cite{book:Williams}):
\[
  \calP(\ms{n} - \ms{0} \geq t)  \quad \leq \quad
  \exp\left( \frac{-t^2}{2\sum_{k=1}^n c_k^2}  \right)
\]

We split the proof of Theorem~\ref{thm:termination} into four propositions
 (namely Propositions \ref{prop:term-pumping}--\ref{prop:term-lower-critical} below),
 which together imply Theorem~\ref{thm:termination}.

The following proposition establishes the lower bound from Theorem~\ref{thm:termination}~(2):
\newcommand{\stmtproppumping}{
 Let $\Delta$ be an almost surely terminating pBPA with stack alphabet~$\Gamma$.
 Let $\pmin =\min\{ p \mid X\btran{p}\alpha\text{ in }\Delta\}$.
 Assume that $\tailprob{X_0}{2^{|\Gamma|}} > 0$.
 Then we have
 \[
  \pmin^{n} \quad \le \quad \tailprob{X_0}{n} \qquad \text{for all $n \in \Nset$.}
 \]
}
\begin{proposition} \label{prop:term-pumping}
 \stmtproppumping
\end{proposition}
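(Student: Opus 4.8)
The plan is to exhibit, for each $n$, a single terminating run of length exactly $n$ (or at least of length $\ge n$) whose probability is at least $\pmin^{n}$, from which $\tailprob{X_0}{n} = \calP(\termt{X_0} \ge n \mid \run(X_0)) \ge \calP(\termt{X_0}\ge n)$ follows, since $\calP(\run(X_0)) \le 1$ and $\Delta$ terminates almost surely (so in fact the conditioning is on a probability-one event and can be dropped). First I would use the hypothesis $\tailprob{X_0}{2^{|\Gamma|}} > 0$: this says there exists at least one terminating run from $X_0$ of length at least $2^{|\Gamma|}$. The key structural observation is a pumping argument on such a long run. A terminating run corresponds to a derivation tree (in the stochastic-context-free-grammar sense); if the run has length $\ge 2^{|\Gamma|}$, then along some root-to-leaf branch of the associated derivation some stack symbol $Y$ must repeat — more precisely, I would argue that a terminating run of length $\ell$ in which no configuration has height exceeding $|\Gamma|$ must have length less than $2^{|\Gamma|}$ (each ``level'' of the stack can only be occupied by boundedly many distinct configurations before a repeat forces a cycle), so a run of length $\ge 2^{|\Gamma|}$ must reach stack height $> |\Gamma|$, hence must push some symbol $Y$ on top of an occurrence of the same symbol $Y$.

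Given such a repeated symbol $Y$, I would extract a ``pumpable'' finite path: a finite sequence of rules leading from a configuration $Y\gamma$ to a configuration $Y\delta\gamma$ (net effect: $Y$ replaced by $Y\delta$ with $\delta\neq\varepsilon$, so the path has length $\ge 1$), and separately a terminating continuation. Concretely, decompose the witnessing terminating run into three segments: a prefix $\sigma$ from $X_0$ to the first configuration of the form $Y\gamma$ at which the repeat begins, a loop segment $\rho$ from $Y\gamma$ back to $Y\delta\gamma$, and a terminating suffix $\tau$ from $Y\delta\gamma$ down to $\varepsilon$ — actually it is cleaner to take $\tau$ to be a terminating run from $Y$ to $\varepsilon$ alone (which exists with positive probability since $\Delta$ terminates a.s. and all symbols are reachable). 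Then for each $k \ge 0$ the concatenation $\sigma\,\rho^k\,\tau'$ (with $\tau'$ the appropriate terminating suffix after $k$ pumps) is a genuine terminating run, and its length grows by at least $|\rho| \ge 1$ with each increment of $k$; choosing $k$ large enough makes the length $\ge n$. Its probability is the product of the probabilities of all rules used; since every rule has probability $\ge \pmin$ and the run has some length $m \ge n$, this product is $\ge \pmin^{m} \ge \pmin^{n}$ (using $\pmin \le 1$, so $\pmin^m \le \pmin^n$ — wait, that is the wrong direction). The clean fix: I want a run of length \emph{exactly} $n$, or of length in $[n, n + |\rho| - 1]$; then $\pmin^{m} \ge \pmin^{n + |\rho| - 1}$ which is not quite $\pmin^n$ either, so instead I simply observe that a terminating run of length exactly $m$ has probability $\ge \pmin^{m}$, and since $\pmin \le 1$ we get $\pmin^m \le \pmin^n$ whenever $m \ge n$ — this shows $\calP(\termt{X_0} = m) \ge \pmin^m$, hence $\calP(\termt{X_0}\ge n) \ge \calP(\termt{X_0}=m) \ge \pmin^m$, and I need $m$ with $\pmin^m \ge \pmin^n$, i.e. $m \le n$. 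So the right move is: pump to get runs of \emph{every} length $\ge 2^{|\Gamma|}$ (not just lengths in an arithmetic progression), or argue that there is a terminating run of length exactly $n$ for each $n \ge 2^{|\Gamma|}$, and handle $n < 2^{|\Gamma|}$ by the trivial bound. In fact the by-hypothesis run of length $\ell_0 \ge 2^{|\Gamma|}$ already settles all $n \le \ell_0$ with $\pmin^{\ell_0} \ge$ its probability $\ge \pmin^{\ell_0}$...

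The cleanest version, and the one I would write, avoids the exact-length worry entirely: for each $n$, if $n \le \ell_0$ (the length of the witnessing long run) then that run alone gives $\calP(\termt{X_0} \ge n) \ge \pmin^{\ell_0}$, which is $\ge \pmin^{n}$? No — $\pmin \le 1$ forces $\pmin^{\ell_0} \le \pmin^n$. Hence I genuinely must produce a terminating run of length \emph{between} $n$ and some controlled bound. I would therefore prove the pumpable segment $\rho$ has length exactly $1$ is too much to hope, but I can instead argue: among terminating runs there is one of length $\ell_0 \ge 2^{|\Gamma|}$; apply the pumping decomposition to get, for the loop length $L = |\rho| \ge 1$, terminating runs of length $\ell_0 + kL$ for all $k \ge 0$; moreover by instead pumping a \emph{minimal} loop (a loop on a single repeated symbol, whose length is at most $2^{|\Gamma|}$) we may assume $L \le 2^{|\Gamma|}$, and the by-hypothesis run shows there is \emph{some} terminating run of every length in $\{\ell_0, \ell_0+1, \dots\}$ once we also vary which minimal loop and use that lengths $\ell_0, \ell_0+1$ are both achievable — this ``numerical semigroup'' bookkeeping is the one genuinely fiddly point. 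Granting a terminating run $w_n$ of length $m_n$ with $n \le m_n \le n + 2^{|\Gamma|}$, its probability is $\ge \pmin^{m_n}$, so $\calP(\termt{X_0}\ge n) \ge \calP(\run(w_n)) \ge \pmin^{m_n} \ge \pmin^{n}$ — and here I finally need $\pmin^{m_n} \ge \pmin^{n}$ which again fails for $\pmin<1$. I conclude the intended reading must be different: the bound $\pmin^n \le \tailprob{X_0}{n}$ should be obtained from a run of length \emph{exactly} $n$ whenever possible, so the real content is showing that, under the hypothesis, terminating runs of every sufficiently large length exist, and for small $n$ ($n < 2^{|\Gamma|}$) the hypothesis run (length $\ge n$) gives probability $\ge \pmin^{\text{(its length)}}$ and one notes separately that a shortest terminating run prefix-extended appropriately hits length $n$; the main obstacle, and the step I would spend the most care on, is exactly this exact-length realizability argument — producing a terminating run of length precisely $n$ for each relevant $n$ via the pumping construction (adjusting $k$ and choosing among loops of small length), after which the probability bound $\pmin^{n}$ is immediate since a length-$n$ run is a product of $n$ rule-probabilities each $\ge \pmin$, and dropping the conditioning on $\run(X_0)$ is free because $\calP(\run(X_0)) = 1$.
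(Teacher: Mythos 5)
Your pumping argument for producing arbitrarily long terminating runs is essentially the paper's (a repeated symbol on a root-to-leaf path of the derivation tree, replace the inner subtree by the outer one), but the proposal as a whole has a genuine gap: you never obtain the bound $\pmin^n \le \tailprob{X_0}{n}$, and the route you end up committing to cannot be completed. The missing idea is that one should lower-bound the probability of the \emph{event} $\{\termt{X_0}\ge n\}$ by the probability of a basic cylinder of length~$n$, not by the probability of a single complete terminating run. Concretely: if $w$ is any run with $\termt{X_0}(w)\ge n$, then every run agreeing with $w$ on $w(0),\dots,w(n)$ also satisfies $\termt{X_0}\ge n$ (its first $n$ configurations are nonempty), and the probability of this cylinder is a product of exactly $n$ transition probabilities, hence at least $\pmin^{n}$. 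So $\tailprob{X_0}{n}\ge\pmin^{n}$ follows the moment one knows that \emph{some} run with $\termt{X_0}\ge n$ exists, which is all the pumping lemma is needed for. You instead try to bound $\tailprob{X_0}{n}$ from below by the probability of a full terminating run of length $m\ge n$, correctly observe that this yields only $\pmin^{m}\le\pmin^{n}$, and conclude that you must exhibit a terminating run of length \emph{exactly}~$n$.

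That exact-length program is both unnecessary and, in general, impossible: for the pBPA with rules $X\btran{1/2}XX$ and $X\btran{1/2}\varepsilon$, every terminating run has odd length (a derivation tree with $k$ binary internal nodes has $2k+1$ nodes in total), so there is no terminating run of length~$2$, yet $\calP(\termt{X}\ge 2)=1/2\ge\pmin^{2}$ as the cylinder argument predicts. The ``numerical semigroup bookkeeping'' you flag as the crux therefore cannot be made to work and is a dead end; with the cylinder observation in place, the rest of your write-up (the pumping construction and the remark that conditioning on $\run(X_0)$ is free since $\Delta$ terminates almost surely) assembles into the paper's proof.
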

\begin{proof}
 Let $\termt{X_0}(w) \ge n$ for some $n \in \Nset$ and some $w \in \run(X_0)$.
 It follows from the definition of the probability space of a pPDA that the set of all runs starting with $w(0), w(1), \ldots, w(n)$
  has a probability of at least $\pmin^n$.
 Therefore, in order to complete the proof, it suffices to show that $\tailprob{X_0}{2^{|\Gamma|}} > 0$ implies
   $\tailprob{X_0}{n} > 0$ for all $n \in \Nset$.

 To this end, we use a form of the pumping lemma for context-free languages.
 Notice that a pBPA can be regarded as a context-free grammar with probabilities (a stochastic context-free grammar)
  with an empty set of terminal symbols and $\Gamma$ as the set of nonterminal symbols.
 Each finite run $w \in \run(X_0)$ corresponds to a derivation tree with root $X_0$ that derives the word~$\varepsilon$.
 The termination time $\termt{X_0}$ is the number of (internal) nodes in the tree.
 In the rest of the proof we use this correspondence.

 Let $\tailprob{X_0}{2^{|\Gamma|}} > 0$.
 Then there is a run $w \in \run(X_0)$ with $\termt{X_0}(w) \ge 2^{|\Gamma|}$.
 This run~$w$ corresponds to a derivation tree with at least $2^{|\Gamma|}$ (internal) nodes.
 In this tree there is a path from the root (labeled with~$X_0$) to a leaf such that on this path
  there are two different nodes, both labeled with the same symbol.
 Let us call those nodes $n_1$ and $n_2$, where $n_1$ is the node closer to the root.
 By replacing the subtree rooted at~$n_2$ with the subtree rooted at~$n_1$ we obtain a larger derivation tree.
 This completes the proof.
\qed
\end{proof}

The following proposition establishes the upper bound of Theorem~\ref{thm:termination}~(2):
\begin{proposition} \label{prop:term-upper-subcritical}
 Let $\Delta$ be an almost surely terminating pBPA with stack alphabet~$\Gamma$.
 Assume that $X_0$ depends on all $X \in \Gamma \setminus \{X_0\}$.
 Define
 \[
  \Emax := \max_{X \in \Gamma} E[X] \qquad \text{and} \qquad
    B := \max_{X \btran{} \alpha} \left| 1 - E[X] + \sum_{Y \in \Gamma} \numbsymb{Y}(\alpha) \cdot E[Y] \right|\,.
 \]
 Then for all $n\in \Nset$ with $n \ge 2 E[X_0]$ we have
 \[
  \tailprob{X_0}{n} \qquad\leq\qquad \exp \frac{2 E[X_0] -n}{2B^2} \quad\leq\quad \exp\left( 1 - \frac{n}{8 \Emax^2} \right) \,.
\]
\end{proposition}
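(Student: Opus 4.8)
The plan is to build a martingale out of the "expected remaining termination time" evaluated along a run, and then apply Azuma's inequality. Concretely, for a configuration $\alpha = Y_1 Y_2 \cdots Y_k$ of the pBPA, define its \emph{potential} to be $E[\alpha] := \sum_{i=1}^k E[Y_i]$; since $\Delta$ terminates almost surely, each $E[Y_i]$ is finite, so this is well defined. For a run $w \in \run(X_0)$, let $\ms{i}$ be $i + E[w(i)]$ for $i \le \termt{X_0}(w)$, and let it stay constant (equal to $\termt{X_0}(w)$, since $E[\varepsilon]=0$) afterwards. First I would check that $(\ms{i})_{i\ge 0}$ is a martingale with respect to the natural filtration: this is essentially the statement that $E[\cdot]$ satisfies the "one-step recurrence" $E[Y] = 1 + \sum_{Y \btran{x} \beta} x \cdot E[\beta]$, which is exactly the defining fixed-point system for the expected termination times of a pBPA; rewriting, one gets $\E{\ms{i+1} \mid \ms{0},\ldots,\ms{i}} = \ms{i}$ on $\{i < \termt{X_0}\}$, and trivially on the complement. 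I would also note $\ms{0} = E[X_0]$ and $\ms{n} = n$ on the event $\{\termt{X_0} \ge n\}$.

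Next I would bound the increments. When a rule $Y \btran{} \alpha$ fires, the change $\ms{i+1} - \ms{i}$ equals $1 - E[Y] + \sum_{Z \in \Gamma} \numbsymb{Z}(\alpha)\,E[Z]$, whose absolute value is at most $B$ by definition of $B$; when the run has already terminated the increment is $0$. So $|\ms{i} - \ms{i-1}| \le B$ for all $i$, and Azuma's inequality gives
\[
  \calP(\ms{n} - \ms{0} \ge t) \;\le\; \exp\!\left(\frac{-t^2}{2nB^2}\right).
\]
On the event $\run(X_0) \cap \{\termt{X_0} \ge n\}$ we have $\ms{n} - \ms{0} = n - E[X_0]$, so taking $t = n - E[X_0]$ (which is $\ge n/2 \ge 0$ by the hypothesis $n \ge 2E[X_0]$) yields
\[
  \tailprob{X_0}{n} \;\le\; \exp\!\left(\frac{-(n-E[X_0])^2}{2nB^2}\right) \;\le\; \exp\!\left(\frac{2E[X_0]-n}{2B^2}\right),
\]
where the last step uses $(n - E[X_0])^2 \ge n(n - 2E[X_0])$, i.e. $E[X_0]^2 \ge 0$. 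Here I am implicitly conditioning on $\run(X_0)$, which has full measure by almost-sure termination, so the conditional and unconditional tail probabilities coincide; this is where the reduction to almost surely terminating pBPA (Proposition~\ref{prop:pdatobpa}) is doing its work.

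Finally I would convert the $B$-bound into the stated $\Emax$-bound. The crude estimate $B \le 1 + \Emax + 2\Emax \le 4\Emax$ (using $|\alpha| \le 2$ and $\Emax \ge E[X] \ge 1$ for every $X$, since one step is always needed) gives $2B^2 \le 32\Emax^2$, hence $\exp\frac{2E[X_0]-n}{2B^2} \le \exp\frac{2E[X_0]-n}{32\Emax^2}$; combined with $2E[X_0] \le 2\Emax \le 2\Emax^2 \le n/4$ when... — more carefully, I would instead argue $\exp\frac{2E[X_0]-n}{2B^2} \le \exp(1 - \frac{n}{8\Emax^2})$ by checking $\frac{n - 2E[X_0]}{2B^2} \ge \frac{n}{8\Emax^2} - 1$, which since $n - 2E[X_0] \ge n/2$ and $2B^2 \le 32\Emax^2$ reduces to $\frac{n}{64\Emax^2} \ge \frac{n}{8\Emax^2} - 1$, i.e. $1 \ge \frac{n}{8\Emax^2}\cdot\frac{7}{8}$, which fails for large $n$ — so the slack must be found differently, e.g. by using the sharper bound $B \le 2\Emax$ (the dominant term is a sum of at most two $E[Z] \le \Emax$ minus $E[Y] \ge 1$, all bounded by $2\Emax$ in absolute value once one checks signs), giving $2B^2 \le 8\Emax^2$ and then $\exp\frac{2E[X_0]-n}{8\Emax^2} \le \exp\frac{-n/2}{8\Emax^2} \le \exp(1 - \frac{n}{8\Emax^2})$ trivially. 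The main obstacle I anticipate is exactly this last bookkeeping: pinning down the tight constant $B \le 2\Emax$ requires a careful sign analysis of $1 - E[Y] + \sum_Z \numbsymb{Z}(\alpha) E[Z]$ (it can be negative when $\alpha = \varepsilon$ and large positive when $|\alpha| = 2$), and making sure the resulting exponent genuinely dominates $1 - n/(8\Emax^2)$ for \emph{all} $n \ge 2E[X_0]$, not just asymptotically. Verifying the martingale property rigorously — in particular integrability of $\ms{i}$, which follows since $\ms{i} \le i + \Emax\cdot(\text{max stack height after } i \text{ steps}) \le i(1+\Emax)$ deterministically — is routine but should be stated.
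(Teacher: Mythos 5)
Your proposal is, in substance, exactly the paper's proof: the same martingale $\ms{i} = i + E[w(i)]$ (frozen once the run terminates), the martingale property derived from the one-step recurrence $E[X] = 1 + \sum_{X \btran{x}\alpha} x \cdot \sum_Y \numbsymb{Y}(\alpha)E[Y]$, increments bounded by $B$, Azuma with $t = n - E[X_0]$, the algebraic step $(n-E[X_0])^2 \ge n(n-2E[X_0])$, and the key estimate $B \le 2\Emax$. (Minor nit: on $\{\termt{X_0}\ge n\}$ one has $\ms{n}\ge n$ rather than $\ms{n}=n$, but that is the direction you need.)

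The one place where what you wrote does not work is the very last chain, which you yourself flagged as the delicate bookkeeping. The inequality $\exp\frac{2E[X_0]-n}{8\Emax^2} \le \exp\frac{-n/2}{8\Emax^2}$ needs $n \ge 4E[X_0]$, whereas only $n \ge 2E[X_0]$ is assumed; and $\exp\frac{-n/2}{8\Emax^2} \le \exp\bigl(1 - \frac{n}{8\Emax^2}\bigr)$ is equivalent to $n \le 16\Emax^2$, which fails for large $n$. So the intermediate quantity $\exp\frac{-n/2}{8\Emax^2}$ is simply the wrong thing to pass through. The correct one-line finish is: since $2E[X_0]-n \le 0$ and $2B^2 \le 8\Emax^2$, we have $\frac{2E[X_0]-n}{2B^2} \le \frac{2E[X_0]-n}{8\Emax^2} = \frac{2E[X_0]}{8\Emax^2} - \frac{n}{8\Emax^2} \le \frac{1}{4\Emax} - \frac{n}{8\Emax^2} \le 1 - \frac{n}{8\Emax^2}$, using $E[X_0]\le\Emax$ and $\Emax\ge 1$. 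With that substitution your argument matches the paper's completely.
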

\begin{proof}
 Let $w\in \run(X_0)$.
 We denote by $I(w)$ the maximal number $j\geq 0$ such that $w(j-1)\not =\varepsilon$.
 Given $i\geq 0$, we define $\ms{i}(w) := E[w(i)] + \min\{i,I(w)\}$.
 We prove that $E(\ms{i+1}\mid \ms{i})=\ms{i}$, i.e., $\ms{0},\ms{1},\ldots$ forms a martingale.
 It has been shown in~\cite{EKM:prob-PDA-expectations} that
 \begin{eqnarray*}
 E[X] & = & \sum_{X\btran{x} \varepsilon} x + \sum_{X\btran{x} Y} x\cdot (1 + E[Y]) +
                  \sum_{X\btran{x} YZ} x\cdot (1 + E[Y] + E[Z]) \\
                  & = & 1+\sum_{X\btran{x} Y} x\cdot E[Y] +
                  \sum_{X\btran{x} YZ} x\cdot (E[Y] + E[Z])\,. \\
 \end{eqnarray*}
 On the other hand, let us fix a path $u\in \fpath(X_0)$ of length $i$ and let $w$ be an arbitrary run of $\run(u)$.
 First assume that $u(i-1)=X\alpha\in \Gamma\Gamma^*$.
 Then we have:
 \begin{align*}
  & \Ex{\ms{i+1}\mid \run(u)} \\
  & = \sum_{X\btran{x} \varepsilon} x\cdot (\ms{i}(w)-E[X]+1) +
      \sum_{X\btran{x} Y} x\cdot (\ms{i}(w)-E[X]+E[Y]+1) + \\
  & \quad + \sum_{X\btran{x} YZ} x\cdot (\ms{i}(w)-E[X] + E[Y] + E[Z]+1) \\
  & = \ms{i}(w) - E[X] + 1 + \sum_{X\btran{x} Y} x\cdot E[Y] +
                   \sum_{X\btran{x} YZ} x\cdot (E[Y] + E[Z]) \\
  & = \ms{i}(w)
 \end{align*}
 If $u(i-1)=\varepsilon$, then for every $w\in \run(u)$ we have $\ms{i+1}(w)=I(w)=\ms{i}(w)$.
 This proves that $\ms{0},\ms{1},\ldots$ is a martingale.

 By Azuma's inequality (see~\cite{book:Williams}), we have
 \begin{align*}
 \calP(\ms{n} - E[X_0]\geq n - E[X_0])
  & \quad \leq \quad \exp\left(\frac{-(n-E[X_0])^2}{2\sum_{k=1}^n B^2}\right)
    \quad \le \quad \exp\left(\frac{2 E[X_0] -n}{2B^2}\right)\,.
 \end{align*}
 For every $w\in \run(X_0)$ we have that $w(n)\not = \varepsilon$ implies $\ms{n}\geq n$.
 It follows:
 \[
 \tailprob{X_0}{n} \quad \leq \quad \calP(\ms{n}\geq n) \quad \leq \quad \exp \left(\frac{2 E[X_0] -n}{2B^2}\right)
                   \quad \le \quad \exp\left( 1 - \frac{n}{8 \Emax^2} \right)\,,
 \]
 where the final inequality follows from the inequality $B \le 2 \Emax$.
\qed
\end{proof}

The following proposition establishes the upper bound of Theorem~\ref{thm:termination}~(3):
\newcommand{\stmtproptermuppercritical}
{
 Let $\Delta$ be an almost surely terminating pBPA with stack alphabet~$\Gamma$.
 Assume that $X_0$ depends on all $X \in \Gamma \setminus \{X_0\}$.
 Let $\pmin =\min\{ p \mid X\btran{p}\alpha\text{ in }\Delta\}$.
 Let $h$ denote the height of the DAG of SCCs.
 Then there is $n_0 \in \Nset$ such that
 \[
  \tailprob{X_0}{n} \quad \le \quad \frac{18 h |\Gamma| / \pmin^{3|\Gamma|}}{n^{1/(2^{h+1}-2)}} \qquad \text{for all $n \ge n_0$.}
 \]
}
\begin{proposition} \label{prop:term-upper-critical}
 \stmtproptermuppercritical
\end{proposition}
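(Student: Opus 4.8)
The plan is to prove the bound by induction on the height $h$ of the DAG of SCCs, peeling off one bottom SCC at a time. First I would handle the base case $h=1$, i.e.\ a pBPA consisting of a single SCC with $E[X_0]=\infty$. The key idea here is to build a second martingale (different from the one in Proposition~\ref{prop:term-upper-subcritical}, which required finite expectations): since all $E[X]$ are now infinite, I instead work with the generating-function / probability data directly. Specifically, I would consider, for the run $w$, a quantity that tracks the ``weighted size'' of the current configuration using the \emph{subinvariant} vector coming from Perron--Frobenius theory applied to the mean matrix of the SCC (this is where the references to \cite{EY:RMC-SG-equations-JACM,EKL10:SICOMP} enter). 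When the spectral radius of the mean matrix equals $1$ — which is exactly the critical/infinite-expectation case — one obtains a martingale $\ms{i}$ whose increments are bounded, and applying Azuma's inequality to it gives that the configuration size stays $\Omega(\sqrt{n})$ with probability bounded below, hence that termination within $n$ steps forces a deviation of order $\sqrt n$, which Azuma bounds by $\exp(-\Omega(n/(\text{size at that point})^2))$; balancing the two events yields a bound of the shape $d/n^{1/2}$. This is the $h=1$ exponent $1/(2^{h+1}-2)=1/2$.

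For the inductive step I would remove a bottom SCC $\calS\subseteq\Gamma$ from $\Delta$, obtaining a pBPA $\Delta'$ of height $h-1$ on the remaining symbols, where symbols that used to generate members of $\calS$ now generate them via an ``oracle'' whose running time is itself governed by the $h=1$ (or already-established lower-height) bound. Concretely, a long terminating run of $\Delta$ of length $n$ must either spend $\ge n/2$ steps inside invocations of $\calS$-symbols, or spend $\ge n/2$ steps in the ``quotient'' pBPA $\Delta'$. The first case is controlled by a union bound over the (at most $|\Gamma|$ many) sub-runs inside $\calS$ together with the base-case bound $O(1/m^{1/2})$ for a sub-run of length $m$; the second case is controlled by the induction hypothesis applied to $\Delta'$, giving $O(1/(n/2)^{1/(2^{h}-2)})$. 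The recursion for the exponent, $\varepsilon_h$ satisfying $1/\varepsilon_h = 2(1/\varepsilon_{h-1}) + 2$ with $1/\varepsilon_1 = 2$, solves to $1/\varepsilon_h = 2^{h+1}-2$, which is precisely $d_2$; tracking the constants through the union bounds (each step loses a factor involving $|\Gamma|$ and a power of $1/\pmin$, roughly $1/\pmin^{3}$ per symbol peeled, plus the factor $h$ from how many layers the union bound ranges over) yields the claimed $d_1 = 18 h |\Gamma|/\pmin^{3|\Gamma|}$.

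I expect the main obstacle to be the base case $h=1$: constructing the right martingale when all expectations are infinite is delicate, because the natural weight function (expected remaining time) is identically $\infty$. The fix is to use the Perron eigenvector of the mean matrix — which exists and is strictly positive by irreducibility of the single SCC — and to verify that in the critical case (spectral radius exactly $1$, forced by infinite expectation together with a.s.\ termination) this gives a genuine martingale with increments bounded by a constant depending only on $\pmin$ and $|\Gamma|$. A secondary technical point is making the ``peeling'' precise: one must check that conditioning on $\run(X_0)$ (a.s.\ termination) is preserved when passing to $\Delta'$ and to the sub-runs inside $\calS$, and that the decomposition of a length-$n$ run into ``time inside $\calS$'' versus ``time in the quotient'' is measurable and the union bound over the starting points of $\calS$-invocations has only polynomially (indeed linearly in $|\Gamma|$ per layer) many terms to union over. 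Both of these I would expect to follow from the very rigid structure of $\bpadelta$-style rules (right-hand sides of length $\le 2$) established in Section~\ref{sec:transformation}, but they need to be spelled out carefully.
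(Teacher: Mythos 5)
Your high-level architecture matches the paper's: a Perron--Frobenius weight vector $\vu$ for the critical SCC (where $\rho(A)=1$ is forced by infinite expectation plus a.s.\ termination), an induction on the height $h$ of the SCC DAG that peels off the bottom SCCs, and the exponent recursion $1/\varepsilon_h = 2/\varepsilon_{h-1}+2$. However, there are two genuine gaps in the mechanism. First, the base case cannot be carried out with Azuma's inequality. The additive process $S_i = \numbsymb{w(i)}\thickdot\vu$ is indeed a (super)martingale with bounded increments, but the event $\{\termt{}\ge n\}$ is the event that $S$ \emph{survives} above $0$ for $n$ steps starting from $S_0=O(1)$; this is not a large-deviation event, its probability is genuinely $\Theta(1/\sqrt n)$, and Azuma only produces bounds of the form $\exp(-t^2/(2nc^2))$ on deviations at a fixed time --- it can never output a polynomial tail. (Your phrase ``the configuration size stays $\Omega(\sqrt n)$ with probability bounded below'' is a lower-bound statement and belongs to Proposition~\ref{prop:term-lower-critical}, not here.) The paper instead uses the \emph{multiplicative} martingale $\ms{i}_\theta = \exp(-\theta\,\numbsymb{w(i)}\thickdot\vu)\cdot\prod_j g_{\Xs{j}}(\theta)^{-1}$, applies Doob's optional stopping theorem to get $\exp(-\theta\,\vu\thickdot\numbsymb{\alpha_0}) = \Ex{\prod_{j<\termt{}} g_{\Xs{j}}(\theta)^{-1}}$, and optimizes $\theta\approx 1/\sqrt n$ via l'Hopital. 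This also requires a non-degeneracy condition $g_X''(0)>0$ (some rule must change the weight), which the paper secures by first converting $\Delta$ into a $\vu$-progressive normal form (Lemma~\ref{lem:progressive}); your sketch checks only that increments are \emph{bounded above}, which is the easy and irrelevant direction.

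Second, your inductive step bounds the ``time inside $\calS$'' by ``a union bound over the (at most $|\Gamma|$ many) sub-runs inside $\calS$.'' The number of bottom-SCC symbols spawned during the top phase is not bounded by $|\Gamma|$; it can be as large as the length of the top phase, i.e., up to $n^{2^h-2}$. A union bound over that many sub-runs, each long with probability $O(m^{-1/2})$, does not yield the claimed exponent (for $h\ge 2$ it gives a bound that is not even $o(1)$). What is actually needed --- and what the paper proves in Proposition~\ref{prop:critical-u-progressive} and Lemma~\ref{lem:critical-strongly-connected} --- is a tail bound for $\termt{\alpha_0}$ that is \emph{uniform over initial configurations of polynomial length} $|\alpha_0|\le n^k$: the exponential martingale absorbs the long initial configuration into the single factor $\exp(-\theta\,\umax\, n^k)$, and choosing $\theta=n^{-(k+1)}$ keeps this factor close to $1$ while still forcing $g_X(\theta)^{n^{2k+2}}$ to be bounded away from $1$. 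That is the actual source of the ``doubling plus two'' in the exponent, and it is the step your plan is missing.
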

\begin{proof}[sketch; a full proof is given in Section~\ref{app:term-upper-critical}]
Assume that $E[X_0]$ is infinite.
To give some idea of the (quite involved) proof, let us first consider
a simple pBPA~$\Delta$ with $\Gamma = \{X\}$ and the rules $X \btran{1/2} XX$
and $X \btran{1/2} \varepsilon$.  In fact, $\Delta$ is closely related
to a simple random walk starting at~$1$, for which the time until it
hits~$0$ can be exactly analyzed (see, e.g.,~\cite{book:Williams}).
Clearly, we have $h=|\Gamma|=1$ and $\pmin=1/2$.
Theorem~\ref{thm:termination}(3)
implies
$\tailprob{X}{n} \in \bigo(1/\sqrt{n})$.  Let us sketch why
this upper bound holds.

\newcommand{\mos}[1]{\widetilde{m}^{(#1)}}
Let $\theta > 0$, define $g(\theta) := \frac12 \cdot \exp(-\theta
\cdot (-1)) + \frac12 \cdot \exp(-\theta \cdot (+1))$, and define for
a run $w \in \run(X)$ the sequence
 \[
  \ms{i}_\theta(w) =
   \begin{cases}
           \exp(-\theta \cdot |w(i)|) / g(\theta)^i  & \text{if $i=0$ or $w(i-1) \ne \varepsilon$} \\
           \ms{i-1}_\theta(w)                        & \text{otherwise.}
   \end{cases}
 \]
 One can show (cf.~\cite{book:Williams}) that $\ms{0}_\theta,
 \ms{1}_\theta, \ldots$ is a martingale, i.e.,
 $\Ex{\ms{i}_\theta \mid \ms{i-1}_\theta} = \ms{i-1}_\theta$ for all
 $\theta > 0$.  Our proof crucially depends on some analytic
 properties of the function $g: \Rset \to \Rset$: It is easy to verify
 that $1 = g(0) < g(\theta)$ for all $\theta > 0$, and $0 = g'(0)$,
 and $1 = g''(0)$.
 One can show that Doob's Optional-Stopping Theorem (see
 Theorem~10.10~(ii) of~\cite{book:Williams}) applies, which implies
 $\ms{0}_\theta = \Ex{\ms{\termt{X}}_\theta}$.  It follows that for all
 $n \in \Nset$ and $\theta > 0$ we have that
\begin{align}
 \exp(-\theta) & = \ms{0}_\theta  \ =\ \Ex{\ms{\termt{X}}_\theta} \ =\ \Ex{g(\theta)^{-\termt{X}}} \ =\
      \sum_{i=0}^\infty \calP(\termt{X} = i) \cdot g(\theta)^{-i} \label{eq:first-line} \\
 & \le\  \sum_{i=0}^{n-1} \calP(\termt{X}=i) \cdot 1 + \sum_{i=n}^\infty \calP(\termt{X} = i) \cdot g(\theta)^{-n} \nonumber \\
 & =\ 1 - \calP(\termt{X} \ge n) + \calP(\termt{X} \ge n) \cdot g(\theta)^{-n} \nonumber
\end{align}
Rearranging this inequality yields $\calP(\termt{X} \ge n) \le \frac{1 - \exp(-\theta)}{1 - g(\theta)^{-n}}$,
from which one obtains, setting $\theta := 1/\sqrt{n}$,
and using the mentioned properties of~$g$ and several applications of l'Hopital's rule,
that \mbox{$\calP(\termt{X} \ge n) \in \bigo(1/\sqrt{n})$}.

Next we sketch how we generalize this proof to pBPA that consist of
only one SCC, but have more than one stack symbol.  In this case, the
term $|w(i)|$ in the definition of~$\ms{i}_\theta(w)$ needs to be
replaced by the sum of {\em weights} of the symbols in~$w(i)$.  Each
$Y \in \Gamma$ has a weight which is drawn from the dominant
eigenvector of a certain matrix, which is characteristic for~$\Delta$.
Perron-Frobenius theory guarantees the existence of a suitable weight
vector~$\vu \in \Rset_+^\Gamma$.
The function $g$ consequently needs to be replaced by a function $g_Y$
for each $Y \in \Gamma$.  We need to keep the property that $g_Y''(0)
> 0$.  Intuitively, this means that $\Delta$ must have, for each $Y
\in \Gamma$, a rule $Y \btran{} \alpha$ such that $Y$ and $\alpha$
have different weights.  This can be accomplished by
transforming~$\Delta$ into a certain normal form.

Finally, we sketch how the proof is generalized to pBPA with more than
one SCC.  For simplicity, assume that $\Delta$ has only two stack
symbols, say $X$ and $Y$, where $X$ depends on~$Y$, but $Y$ does not
depend on~$X$.  Let us change the execution order of pBPA as follows:
whenever a rule with $\alpha \in \Gamma^*$ on the right hand side
fires, then all $X$-symbols in~$\alpha$ are added on top of the stack,
but all $Y$-symbols are added at the {\em bottom} of the stack.  This
change does not influence the termination time of pBPA, but it allows
to decompose runs into two phases: an $X$-phase where $X$-rules are
executed which may produce \mbox{$Y$-symbols} or further $X$-symbols;
and a $Y$-phase where $Y$-rules are executed which may produce further
$Y$-symbols but no $X$-symbols, because $Y$ does not depend on~$X$.
Arguing only qualitatively, assume that $\termt{X}$ is ``large''.
Then either (a) the $X$-phase is ``long'' or (b) the $X$-phase is
``short'', but the $Y$-phase is ``long''.  For the probability of
event~(a) one can give an upper bound using the bound for one SCC,
because the produced $Y$-symbols can be ignored.  For event~(b),
observe that if the $X$-phase is short, then only few $Y$-symbols can
be created during the $X$-phase.  For a bound on the probability of
event~(b) we need a bound on the probability that a pBPA with one SCC
and a ``short'' initial configuration takes a ``long'' time to
terminate.  The previously sketched proof for an initial configuration
with a single stack symbol can be suitably generalized to handle other
``short'' configurations. All details are given in Section~\ref{app:term-upper-critical}.
\qed
\end{proof}

The following proposition establishes the lower bound of Theorem~\ref{thm:termination}~(3):
\newcommand{\stmtproptermlowercritical}
{
 Let $\Delta$ be an almost surely terminating pBPA with stack alphabet~$\Gamma$.
 Assume that $X_0$ depends on all $X \in \Gamma \setminus \{X_0\}$.
 Assume $E[X_0] = \infty$.
 Then there is $c > 0$ such that
 \[
  \frac{c}{\sqrt{n}} \quad \le \quad \tailprob{X_0}{n} \qquad \text{for all $n \in \Nset$.}
 \]
}
\begin{proposition} \label{prop:term-lower-critical}
\stmtproptermlowercritical
\end{proposition}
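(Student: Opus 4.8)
The goal is a lower bound $\tailprob{X_0}{n} \ge c/\sqrt n$ whenever $E[X_0] = \infty$. The natural strategy is to find a single stack symbol inside one bottom SCC that already drives the infinite expectation, reduce to that SCC, and there exhibit an explicit family of long terminating runs whose total probability is $\Omega(1/\sqrt n)$. The model case is the symmetric random walk $X \btran{1/2} XX$, $X \btran{1/2} \varepsilon$, for which $\termt{X}$ is exactly the hitting time of $0$ from $1$; the classical identity $\calP(\termt X = 2k+1) = \frac{1}{2k+1}\binom{2k+1}{k}2^{-(2k+1)} \sim c\, k^{-3/2}$ gives $\calP(\termt X \ge n) = \Theta(1/\sqrt n)$ by summing the tail. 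So the real work is to show that a general pBPA with $E[X_0]=\infty$ contains, in a suitable quantitative sense, a copy of (a lazy/perturbed version of) this random walk.

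\textbf{Step 1: locate a critical SCC.} Using the characterization of when $E[X_0]$ is finite (from~\cite{EKM:prob-PDA-expectations,Brazdil:PhD}, which I may assume), the expectation $E[X_0]$ is infinite iff some SCC $C$ of the dependence relation is "critical", meaning the spectral radius of the associated mean matrix $A_C$ (whose $(Y,Z)$ entry is the expected number of $Z$'s produced in one step from $Y$, for $Y,Z \in C$) equals~$1$. Pick a bottom-most such critical SCC $C$, and pick $Y \in C$. Since $X_0$ depends on all symbols, in particular it depends on $Y$; because termination is almost sure, there is positive probability that a run from $X_0$ reaches a configuration of the form $\alpha Y \beta$ with $\alpha$ (and $\beta$) terminating a.s., and from there $\termt{X_0} \ge \termt{Y}$ restricted to the $Y$-part. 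Hence it suffices to prove $\tailprobcond{Y}{n}{\run(Y)} \ge c'/\sqrt n$ for $Y$ in a critical bottom SCC, up to adjusting the constant by the fixed positive "reaching" probability and a fixed additive shift in $n$.

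\textbf{Step 2: a martingale/eigenvector comparison inside the critical SCC.} Restrict attention to the critical SCC $C$: inside it, by Perron–Frobenius there is a positive left eigenvector $\vu \in \Rset_+^C$ with $\vu A_C = \vu$. Define the weight $w(Y\beta) = \vu \cdot (\text{Parikh image of } \beta \text{ restricted to } C)$, i.e. the $\vu$-weighted count of $C$-symbols. Because $\vu$ is a $1$-eigenvector, under the execution the process $w(w(i))$ behaves like a (sub/super-critical-free) martingale up to lower-order SCC contributions; more precisely, $w$ changes by $O(1)$ per step with variance bounded below by a constant (here one uses a normal-form assumption guaranteeing, for each $Y \in C$, a rule changing the weight — the same device invoked in the proof of Proposition~\ref{prop:term-upper-critical}). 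A terminating run of $Y$ must bring this weight from $\vu(Y) > 0$ down to $0$; since the weight is a bounded-increment martingale with bounded-below conditional variance, the optional-stopping / reflection estimate for such walks gives that the hitting time of $0$ exceeds $n$ with probability at least $c'/\sqrt n$. Concretely one can mimic the upper-bound computation in Proposition~\ref{prop:term-upper-critical} but in reverse: from the identity $\exp(-\theta\,\vu(Y)) = \Ex{\prod_{i} g(\theta)^{-1}}$-type relation one extracts a \emph{lower} bound on $\calP(\termt Y \ge n)$ by choosing $\theta = 1/\sqrt n$ and bounding the contribution of $\{\termt Y < n\}$ from above using a second-moment (or Paley–Zygmund) argument on the weight.

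\textbf{Main obstacle.} The delicate point is Step 2: turning the "critical means random-walk-like behaviour" intuition into a clean \emph{lower} bound. Upper bounds on $\calP(\termt Y \ge n)$ follow from one-sided martingale estimates (Azuma, optional stopping with a convex exponential), but a matching lower bound needs genuine anti-concentration — one must rule out that the weighted process drops to $0$ too fast, which requires a uniform lower bound on per-step conditional variance of the weight and a Berry–Esseen-type or explicit reflection-principle estimate for bounded-increment martingales. Getting the $c/\sqrt n$ rate (rather than $c/n$) is exactly where the improvement mentioned in the paper's footnote lives, so this is where the argument must be most careful; multi-SCC bookkeeping (ensuring lower SCCs only help, never hurt, the lower bound) is comparatively routine once the single-critical-SCC estimate is in hand.
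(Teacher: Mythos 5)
Your setup coincides with the paper's: one shows $\rho(A_\Delta)=1$, passes to an irreducible principal submatrix with spectral radius~$1$ (your ``critical SCC''), takes a positive Perron eigenvector $\vu$, uses the $\vu$-weight of the configuration as the basic process, and invokes a $\vu$-progressive normal form (Lemma~\ref{lem:progressive}) to guarantee $g_X''(0)>0$. (Minor slip: you need the \emph{right} eigenvector $A\cdot\vu=\vu$, not $\vu A=\vu$, for the weight process to be a martingale.) The reduction in your Step~1 is essentially the paper's and is fine.

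The genuine gap is the step you yourself label the ``main obstacle'': you assert, but do not prove, that the weight martingale survives $n$ steps with probability $\Omega(1/\sqrt n)$, and the one concrete mechanism you propose for it fails. Plain optional stopping plus Cauchy--Schwarz ($\vu(X_0)=\Ex{S_n\mathbf{1}_{\termt{}>n}}$ with $\Ex{S_n^2}=O(n)$) gives only $\Omega(1/n)$. Likewise, ``mimicking the upper-bound computation in reverse'' from $\exp(-\theta\vu(X_0))=\Ex{\prod_j g_{\Xs{j}}(\theta)^{-1}}$ yields at best $\calP(\termt{}\ge n)\ge 1-\exp(-\theta\vu(X_0))\cdot g_X(\theta)^{n}$, which is \emph{negative} for $\theta=1/\sqrt n$ (since $g_X(\theta)^n\to\exp(g_X''(0)/2)>1$) and, optimized over~$\theta$, again gives only $\Omega(1/n)$ --- exactly the old bound the footnote says was superseded. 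The paper's actual argument has two ingredients absent from your sketch: (i) it \emph{differentiates} the optional-stopping identity with respect to~$\theta$ and sets $\theta=1/\sqrt n$, which (since $g_X'(\theta)\sim g_X''(0)\,\theta$) produces a lower bound on a truncated first moment, $\sum_{i\le sn}i\cdot\calP(\termt{}=i)\ge c\sqrt n$ for a suitable constant~$s$; and (ii) it plays this off against the already-proved \emph{upper} bound $\calP(\termt{}\ge n)\le c_1/\sqrt n$ of Proposition~\ref{prop:term-upper-critical}, which caps $\sum_{i\le n}i\cdot\calP(\termt{}=i)$ at roughly $2c_1\sqrt n$, thereby forcing first-moment mass $\sqrt n$ into the window $[n,sn]$ and hence $\calP(\termt{}\ge n)\ge\sqrt n/(sn)$. (An elementary alternative that would also close your gap: by optional stopping the weight reaches level $M=2C\sqrt n$ before $0$ with probability at least $\vu(X_0)/(M+2C)$, where $C$ bounds the increments, and from level $M$ the maximal Azuma inequality bounds the probability of returning to $0$ within $n$ further steps by $\exp(-M^2/(2nC^2))=e^{-2}<1$; but some such quantitative argument must actually be supplied --- a generic appeal to Paley--Zygmund or Berry--Esseen does not by itself deliver the $1/\sqrt n$ rate.)
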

The proof of Proposition~\ref{prop:term-lower-critical} follows the lines of the previous proof sketch, but with an additional trick:
To obtain the desired bound, one needs to take the derivative with respect to~$\theta$ on both sides of Equation~\eqref{eq:first-line}.
The full proof is given in Section~\ref{app:term-lower-critical}.

\medskip
\noindent
\textbf{Tightness of the bounds in the case of infinite expectation.}
If $E[X_0]$ is infinite, the lower and upper bounds of Theorem~\ref{thm:termination}~(3) asymptotically coincide in the ``strongly connected'' case
 (i.e., where $h=1$ holds for the height of the DAG of the SCCs of the dependence relation).
In other words, in the strongly connected case we must have $\calP(\termt{} \ge n) \in \Theta(1/\sqrt{n})$.
Otherwise (i.e., for larger~$h$) the upper bound in Theorem~\ref{thm:termination}~(3) cannot be substantially tightened.
This follows from the following proposition:
\newcommand{\stmtpropheavytail}{
 Let $\Delta_h$ be the pBPA with $\Gamma_h = \{X_1, \ldots, X_h\}$
 and the following rules:
 {
  \[
   X_h \bbtran{1/2} X_h X_h\,,\, X_h \bbtran{1/2} X_{h-1}\,,\,\ldots\,,\,
   X_2 \bbtran{1/2} X_2 X_2\,,\, X_2 \bbtran{1/2} X_1\,,\;
   X_1 \bbtran{1/2} X_1 X_1\,,\, X_1 \bbtran{1/2} \varepsilon\]}%
Then $[X_h] = 1$, $E[X_h] = \infty$, and there is $c_h > 0$ with
 \[
  \frac{c_h}{n^{1/2^h}} \quad\le\quad \tailprob{X_h}{n} \qquad \text{for all $n \in \Nset$}.
 \]
}
\begin{proposition} \label{prop:heavy-tail}
 \stmtpropheavytail
\end{proposition}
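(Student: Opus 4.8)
The plan is to proceed by induction on~$h$. For the base case $h=1$, the pBPA $\Delta_1$ has the single rule pair $X_1 \btran{1/2} X_1 X_1$, $X_1 \btran{1/2} \varepsilon$, which is exactly the ``simple'' pBPA analyzed in the proof sketch of Proposition~\ref{prop:term-upper-critical}; the bound $[X_1]=1$ is standard (the associated random walk is recurrent), $E[X_1]=\infty$ is the classical fact that the return time of the simple symmetric random walk has infinite expectation, and the lower bound $\tailprob{X_1}{n} \ge c_1/\sqrt{n}$ is exactly the kind of estimate obtained in Proposition~\ref{prop:term-lower-critical} specialized to this pBPA (indeed $\termt{X_1}$ is the hitting time of~$0$ for the simple random walk started at~$1$, whose distribution is known explicitly via the ballot/Catalan formula, giving $\calP(\termt{X_1}=2k-1) = \Theta(k^{-3/2})$ and hence $\calP(\termt{X_1}\ge n)=\Theta(1/\sqrt n)$). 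So I would either cite Proposition~\ref{prop:term-lower-critical} or quote the explicit random-walk computation.

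For the inductive step, suppose the claim holds for $\Delta_{h-1}$ with constant $c_{h-1}>0$, so $\tailprob{X_{h-1}}{m}\ge c_{h-1}/m^{1/2^{h-1}}$ for all~$m$, and $E[X_{h-1}]=\infty$, $[X_{h-1}]=1$. First, $[X_h]=1$ and $E[X_h]=\infty$: almost-sure termination of $X_h$ follows because the $X_h$-phase is itself a critical ($1/2$-$1/2$) branching process that terminates a.s.\ and spawns finitely many $X_{h-1}$'s, each of which terminates a.s.\ by induction; and $E[X_h]=\infty$ already because $E[X_{h-1}]=\infty$ and the first rule application of $X_h$ can produce an $X_{h-1}$ further down, so $\termt{X_h}$ stochastically dominates $\termt{X_{h-1}}$ — more simply, $X_h$ depends on $X_{h-1}$ with $E[X_{h-1}]=\infty$, so by the equation for $E[X_h]$ in~\cite{EKM:prob-PDA-expectations} (as quoted in the proof of Proposition~\ref{prop:term-upper-subcritical}) $E[X_h]$ is infinite too. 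The heart of the argument is the lower bound. The idea is that a run of $X_h$ first performs an ``$X_h$-only'' subcomputation — a critical binary branching process on the symbol $X_h$ — in which every leaf that fires the rule $X_h\btran{1/2}X_{h-1}$ produces one fresh $X_{h-1}$. If this branching process happens to have $k$ such leaves (equivalently total progeny $\Theta(k)$), then the run must subsequently terminate $k$ independent copies of $\termt{X_{h-1}}$, so conditionally $\termt{X_h}$ is at least a sum of $k$ i.i.d.\ copies of $\termt{X_{h-1}}$.

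Concretely, I would fix the target~$n$ and a parameter $k=k(n)$ to be optimized. Step one: with probability at least some $p_1(k)$ the $X_h$-branching process produces at least $k$ occurrences of $X_{h-1}$ on the stack before any of them is touched; since this branching process is critical, standard total-progeny estimates give $p_1(k) \ge c'/\sqrt{k}$ for having total progeny $\ge k$ (again the Catalan-number tail), hence $\ge$ (a constant times) $1/\sqrt k$ of producing $\gtrsim k$ many $X_{h-1}$-symbols; being slightly careful, one can even just say: with probability $\ge (1/2)^{2k}$ the first $2k-1$ steps are a fixed pattern that stacks up exactly $k$ copies of $X_{h-1}$, but that constant is too small, so instead I use the polynomial total-progeny bound. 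Step two: conditioned on having $\ge k$ copies of $X_{h-1}$, we have $\termt{X_h} \ge S_k := \termt{X_{h-1}}^{(1)} + \cdots + \termt{X_{h-1}}^{(k)}$, a sum of $k$ i.i.d.\ variables each with $\calP(\termt{X_{h-1}}\ge m)\ge c_{h-1}/m^{1/2^{h-1}}$. Since each summand has a polynomial lower tail with exponent $1/2^{h-1}\le 1$, the sum $S_k$ is ``large'' with good probability: specifically $\calP(S_k \ge n) \ge \calP(\text{some summand} \ge n) = 1-(1-\calP(\termt{X_{h-1}}\ge n))^k \ge 1 - \exp(-k\,c_{h-1}/n^{1/2^{h-1}})$, which is bounded below by a constant as soon as $k \ge n^{1/2^{h-1}}/c_{h-1}$. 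Step three: combine — choose $k := \lceil n^{1/2^{h-1}}/c_{h-1}\rceil$, so that $\calP(\termt{X_h}\ge n) \ge p_1(k)\cdot \Omega(1) \ge \Omega(1/\sqrt k) = \Omega(1/n^{1/2^h})$, since $1/\sqrt{k} = \Theta(n^{-1/2^{h}})$ (because $\sqrt{k}=\Theta(n^{1/2^{h-1}\cdot 1/2})=\Theta(n^{1/2^h})$). This yields $\tailprob{X_h}{n}\ge c_h/n^{1/2^h}$ for a suitable $c_h>0$, first for all large~$n$ and then, by shrinking $c_h$, for all $n\in\Nset$.

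The main obstacle, and the part requiring the most care, is making Step one rigorous: I need that a critical Galton–Watson-type branching process (the $X_h$-subcomputation, where each individual independently either splits into two or turns into a single inert $X_{h-1}$, each with probability $1/2$) produces at least $k$ ``$X_{h-1}$-leaves'' with probability $\Omega(1/\sqrt k)$, and — crucially — that one can couple this so that these $k$ copies of $\termt{X_{h-1}}$ really are executed independently and after the branching, so the conditional stochastic domination $\termt{X_h}\ge S_k$ is legitimate. This is where the ``execution order'' trick from the sketch of Proposition~\ref{prop:term-upper-critical} helps: reorder the stack so that $X_{h-1}$-symbols produced during the $X_h$-phase are pushed to the bottom, so they are all created first and then processed; since reordering does not change $\termt{X_h}$, the domination is exact. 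The total-progeny tail $\Omega(1/\sqrt k)$ is classical (the total progeny of a critical GW process with offspring variance $\sigma^2\in(0,\infty)$ satisfies $\calP(\text{progeny}\ge k)\sim \text{const}/\sqrt k$), and here the offspring distribution is $\{0\mapsto 1/2,\ 2\mapsto 1/2\}$ hence $\sigma^2=1$; alternatively one reuses the $h=1$ analysis verbatim since $\Delta_h$ restricted to the $X_h$-phase, with $X_{h-1}$ treated as $\varepsilon$, \emph{is} $\Delta_1$, so $\calP(\#X_{h-1}\text{-leaves}\ge k)=\calP(\termt{X_1}\ge 2k-1)=\Omega(1/\sqrt k)$ from the base case. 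That observation lets me avoid re-deriving any branching-process asymptotics and keeps the induction self-contained.
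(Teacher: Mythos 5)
Your proposal is correct and follows essentially the same route as the paper: an induction on~$h$ whose base case is the simple random walk, and whose inductive step lower-bounds the probability of the event that the $X_h$-phase is long enough to spawn about $\ell \approx n^{1/2^{h-1}}$ sub-runs of~$\Delta_{h-1}$ (probability $\Omega(1/\sqrt{\ell})$ from the base case, since the number of spawned $X_{h-1}$'s is half the length of the $X_h$-walk) \emph{and} at least one of these independent sub-runs lasts $\ge n$ steps (probability $1-(1-c_{h-1}n^{-1/2^{h-1}})^{\ell}$, bounded below by a constant for this choice of~$\ell$). The paper phrases the $X_h$-phase as a random walk rather than a critical branching process and does not need your stack-reordering step (the sub-runs are executed sequentially and are independent by the Markov property), but the decomposition, the choice of~$\ell$, and the resulting constant $c_h < \frac{c_1}{2}(1-e^{-c_{h-1}})$ are the same.
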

Proposition~\ref{prop:heavy-tail} is proved in Section~\ref{app-optimal}.


\section{Conclusions and Future Work}
\label{sec:concl}

We have provided a reduction from stateful to stateless pPDA which gives new
insights into the theory of pPDA and at the same time simplifies it
substantially. We have used this reduction and martingale theory to exhibit
a dichotomy result that precisely characterizes the distribution of the
termination time in terms of its expected value.

Although the bounds presented in this paper are asymptotically optimal,
there is still space for improvements.
%
%
We conjecture that our results can be extended to more general
reward-based models, where each configuration is assigned a nonnegative
reward and the total reward accumulated in a given service is considered
instead of its length. This is particularly challenging if the rewards
are unbounded (for example, the reward assigned to a given
configuration may correspond to the total memory allocated by the
procedures in the current call stack). Full answers to these questions
would generalize some of the existing deep results about simpler models, and
probably reveal an even richer underlying theory of pPDA which is still
undiscovered.


\section{Proofs} \label{sec:proofs}

In this section we give the missing proofs for the stated results.
Some additional notation is used in the proofs.

\stefan{I'm not aware that this is used except possibly for the transformation part. If the notation is used, it should probably be defined elsewhere.}
\begin{itemize}
\item
 Given two sets $K \subseteq \Sigma^*$ and
 $L \subseteq \Sigma^* \cup \Sigma^\omega$, we use $K \cdot L$ (or just
 $KL$) to denote the concatenation of $K$ and $L$, i.e.,
 $KL = \{ww' \mid w \in K, w'\in L \}$.
\item For a run $w$ and $i \in \Nset$, we write $w_i$ to denote
  the run $w(i)\,w(i{+}1) \dots$.
\end{itemize}

\subsection{Proofs of Propositions~\ref{lem:longrun_PDA_to_BPA}
and~\ref{prop:pdatobpa}}

\begin{refproposition}{lem:longrun_PDA_to_BPA}
Let $p_0X_0 \in Q \times \Gamma$ such that $[p_0X_0{\uparrow}] = 1$. Then
there is a partial function
$\Upsilon:\run[M_{\Delta}](p_0X_0)\rightarrow
\run[M_{\Delta_2}](\co{p_0X_0{\uparrow}})$ such that for every
$w \in \run[M_{\Delta}](p_0X_0)$, where $\Upsilon(w)$ is defined, and every
$n \in \Nset$ we have the following: if $w(n) = qY\beta$, then
$\Upsilon(w)(n) = \co{qY{\dag}}\gamma$, where $\dag$ is either an element
of $Q$ or ${\uparrow}$. Further, for every measurable set of runs
$R \subseteq \run[M_{\Delta_2}](\co{p_0X_0{\uparrow}})$ we have that
$\Upsilon^{-1}(R)$ is measurable and $\calP(R)=\calP(\Upsilon^{-1}(R))$.
\end{refproposition}
\begin{proof}
Let $w \in \run[M_\Delta](p_0X_0)$. We define an infinite sequence $\bar{w}$
over $\bar{\Gamma}^*$ inductively as follows:
\begin{itemize}
\item $\bar{w}(0) = \co{p_0X_0{\uparrow}}$
\item If $\bar{w}(i) = \varepsilon$ (which intuitively means that
  an ``error'' was indicated while defining the first $i$ symbols of~$w$),
  then $w(i{+}1) = \varepsilon$. Now let us assume that
  $\bar{w}(i) = \co{pX\dag} \alpha$, where
  $\dag \in Q \cup \{{\uparrow}\}$, and $w(i) = pX\gamma$ for some
  $\gamma \in \Gamma^*$. Let $pX \btran{} r\beta$ be the rule of
  $\Delta$ used to derive the transition
  $w(i) \tran{} w(i{+}1)$.  Then
  \[
    \bar{w}(i{+}1) =
    \begin{cases}
      \alpha &
         \text{if $\beta = \varepsilon$ and $\dag = r$;}\\[1ex]
      \co{rY\dag}\alpha &
         \text{if $\beta = Y$ and $[rY\dag] > 0$;}\\[1ex]
      \co{rYs}\co{sZ\dag}\alpha &
         \text{if $\beta = YZ$, $[sZ\dag] > 0$, and there is
         $k > i$ such that $w(k) = sZ\gamma$ and}\\
         & |w(j)| > |w(i)| \text{ for all } i <j <k;\\[1ex]
      \co{rY{\uparrow}}\alpha &
         \text{if $\beta = YZ$, $[rY{\uparrow}] > 0$, and
          $|w(j)| > |w(i)|$ for all $j > i$;}\\[1ex]
      \varepsilon & \text{otherwise.}
    \end{cases}
  \]
\end{itemize}
We say that $w \in \run[M_\Delta](p_0X_0)$ is \emph{valid} if
$\bar{w}(i) \neq \varepsilon$ for all $i \in \Nset$. One
can easily check that if $w$ is valid, then $\bar{w}$ is a run of
$\bar{\Delta}$ initiated in $\co{p_0X_0{\uparrow}}$. We put
$\Upsilon(w) = \bar{w}$ for all valid $w \in \run[M_\Delta](p_0X_0)$.
For invalid runs, $\Upsilon$ stays undefined.

It follows directly from the definition of $\bar{w}$ that for every
valid $w \in \run[M_{\Delta}](p_0 X_0)$ and every $i \in \Nset$ we have
that if $w(i) = qY\beta$ then $\bar{w}(i) = \co{qY{\dag}}\gamma$, where
$\dag \in Q \cup \{{\uparrow}\}$.

Now we check that for every measurable set of runs
$R \subseteq \run[M_{\bar{\Delta}}](\co{p_0X_0{\uparrow}})$ we have that
$\Upsilon^{-1}(R)$ is measurable and $\calP(R)=\calP(\Upsilon^{-1}(R))$.
First, realize that the set of all \emph{invalid}
$w \in \run[M_{\Delta}](p_0 X_0)$ is measurable and its probability
is zero. Hence, it suffices to show that for every finite path
$\bar{v}$ in $M_{\bar{\Delta}}$ initiated in $\co{p_0X_0{\uparrow}}$ we have
that $\Upsilon^{-1}(\run[M_{\bar{\Delta}}](\bar{v}))$ is measurable
and $\calP(\Upsilon^{-1}(\run[M_{\bar{\Delta}}](\bar{v}))) =
\calP(\run[M_{\bar{\Delta}}](\bar{v}))$. For simplicity, we write just
$\Upsilon^{-1}(\bar{v})$ instead of
$\Upsilon^{-1}(\run[M_{\bar{\Delta}}](\bar{v}))$.

Observe that every  configuration $\bar{\gamma}$ reachable
from $\co{p_0X_0{\uparrow}}$ in $M_{\bar{\Delta}}$ is of the form
$\bar{\gamma} = \co{p_1X_1p_2}\cdots\co{p_kX_kp_{k+1}}\co{p_{k+1}Y{\uparrow}}$
where $k \geq 0$. We put
\[
   P[\bar{\gamma}] \quad = \quad
   [p_1X_1p_2]\cdots [p_kX_kp_{k+1}] \cdot [p_{k+1}Y{\uparrow}]
\]
Further, we say that a configuration $p\alpha$ of $\Delta$ is \emph{compatible}
with $\bar{\gamma}$ if $p = p_1$ and $\alpha = X_1\cdots X_kY \beta$
for some $\beta \in \Gamma^*$. A run $w$ initiated
in such a compatible configuration $p_1 X_1\cdots X_kY \beta$ \emph{models}
$\bar{\gamma}$, written $w \models \bar{\gamma}$, if $w$ is of the form
\[
 p_1 X_1\cdots X_kY \beta \quad \tran{}^* \quad p_2X_2\cdots X_kY \beta
 \quad \tran{}^* \quad
 \cdots \quad \tran{}^* p_{k+1}Y\beta \quad \tran{} \quad \cdots
\]
where for all $1 \leq i \leq k$, the stack length of all intermediate
configurations visited along the subpath
$p_i X_i\cdots X_kY \beta \tran{}^* p_{i+1} X_{i+1}\cdots X_kY \beta$
is at least $|X_i\cdots X_kY\beta|$. Further, the stack length in all
configurations visited after $q_kY\beta$ is at least $|Y\beta|$.
A straightforward induction on $k$ reveals that
\begin{equation}
 \calP\left\{w \in \run(p_1 X_1\cdots X_kY \beta) \mid
     w \models \bar{\gamma}\right\}
 \quad = \quad P[\bar{\gamma}]
\label{eq-Pgamma}
\end{equation}
Let $\bar{v} \bar{\alpha}$, where $\bar{\alpha} \in \bar{\Gamma}^*$,
be a finite path in $M_{\bar{\Delta}}$ initiated in
$\co{p_0X_0{\uparrow}}$, and let $\calE(\bar{v}\bar{\alpha})$ be the set of all
finite path $v A$ in $M_\Delta$ initiated in $p_0X_0$ such that
$A \in Q\times\Gamma^*$,
$|vA| = |\bar{v}\bar{\alpha}|$, and $\Upsilon^{-1}(\bar{v} \bar{\alpha})$
contains a run that starts with $vA$.
One can easily check that if $vA \in \calE(\bar{v}\bar{\alpha})$,
then $A$ is compatible with $\bar{\alpha}$.
Further,
\begin{equation}
  \Upsilon^{-1}(\bar{v}\bar{\alpha}) = \bigcup_{v A \in \calE(\bar{v}\bar{\alpha})}
     vA \odot \big\{w \in \run[M_\Delta](A) \mid w \models
     \bar{\alpha} \big\}
\label{eq-Upsilon}
\end{equation}
From (\ref{eq-Upsilon}) we obtain that $\Upsilon^{-1}(\bar{v}\bar{\alpha})$
is measurable, and by combining (\ref{eq-Pgamma}) and (\ref{eq-Upsilon})
we obtain
\begin{equation}
  \calP(\Upsilon^{-1}(\bar{v}\bar{\alpha})) \quad = \quad
  P[\bar{\alpha}] \cdot \sum_{v A \in \calE(\bar{v}\bar{\alpha})} \calP(\run(vA))
\label{eq-Upsilon-prob}
\end{equation}
Now we show that $\calP(\Upsilon^{-1}(\bar{v}\bar{\alpha})) =
\calP(\run(\bar{v}\bar{\alpha}))$.
We proceed by induction on $|\bar{v}\bar{\alpha}|$. The base case when
$\bar{v}\bar{\alpha} = \co{p_0X_0{\uparrow}}$ is immediate.
Now suppose that $\bar{v}\bar{\alpha} = \bar{u} \bar{\beta} \bar{\alpha}$,
where $\bar{\beta} \tran{x} \bar{\alpha}$. By applying
(\ref{eq-Upsilon}) and (\ref{eq-Upsilon-prob}) we obtain
\[
\begin{array}{lclr}
  \calP(\Upsilon^{-1}(\bar{u} \bar{\beta} \bar{\alpha}))
  & = & \displaystyle
  \calP\left(\bigcup_{u B A \in \calE(\bar{u} \bar{\beta} \bar{\alpha})}
     u\,B\,A \odot \left\{w \in \run(A) \mid w
     \models \bar{\alpha}\right\}\right)\\[2em]
  & = & \displaystyle
  \calP\left(\bigcup_{u B \in \calE(\bar{u} \bar{\beta})}
     u\,B \odot \bigcup_{A \in Q \times \Gamma^*}
     \left\{w \in \run(BA) \mid uBA \in
         \calE(\bar{u} \bar{\beta} \bar{\alpha}), w \models
     \bar{\beta}, w_1 \models
     \bar{\alpha} \right\}\right)\\[2em]
  & = & \displaystyle
  \sum_{u B \in \calE(\bar{u} \bar{\beta})}
     \calP(\run(u\,B)) \cdot \calP\left(\bigcup_{A \in Q \times \Gamma^*}
     \left\{w \in \run(BA) \mid uBA \in
         \calE(\bar{u} \bar{\beta} \bar{\alpha}), w \models
     \bar{\beta}, w_1 \models
     \bar{\alpha} \right\}\right)\\[2em]
  & =^* & \displaystyle
  \sum_{u B \in \calE(\bar{u} \bar{\beta})}
     \calP(\run(u\,B)) \cdot P[\bar{\beta}] \cdot x\\[2em]
  & = & \displaystyle
  x \cdot \calP(\Upsilon^{-1}(\bar{u}\bar{\beta}))\\[2em]
  & = & \displaystyle
  \calP(\run(\bar{u}\bar{\beta}\bar{\alpha}))
\end{array}
\]
The (*) equality is proved by case analysis (we distinguish
possible forms of the rule which generates the transition
$\bar{\beta} \tran{x} \bar{\alpha}$).
\qed
\end{proof}

\begin{refproposition}{prop:pdatobpa}
  Let $pXq\in Q\times \Gamma\times Q$ and $[pXq]>0$.  Then almost all
  runs of $M_{\bpadelta}$ initiated in $\langle pXq\rangle$ terminate, i.e., reach
  $\varepsilon$.  Further, for all $n\in \Nset$ we have that
  \[
    \massprobcond{pX}{n}{\run(pXq)} \quad = \quad
    \massprobcond{\langle pXq\rangle}{n}{\run(\langle pXq \rangle)}
  \]
\end{refproposition}
\begin{proof}
For every $n\in \Nset$ we define
\begin{eqnarray*}
   D_{pXq}(n) & := &
    \calP(\run(pXq), \ \termt{pX} = n \mid \run(pX))\\
   D_{\langle pXq\rangle}(n) & := &
    \calP(\termt{\langle pXq\rangle} = n \mid \run(\langle pXq \rangle))
\end{eqnarray*}
We prove the following:
\begin{equation}
 D_{pXq}(n) = [pXq] \cdot D_{\langle pXq\rangle}(n)\,. \label{eq:pdatobpa-induction}
\end{equation}
Notice that~\eqref{eq:pdatobpa-induction} implies
 $\massprobcond{pX}{n}{\run(pXq)}  = \massprobcond{\langle pXq\rangle}{n}{\run(\langle pXq \rangle)}$,
 as $\massprobcond{pX}{n}{\run(pXq)} = D_{pXq}(n) / [pXq]$.

To prove~\eqref{eq:pdatobpa-induction}, we proceed by induction on~$n$.
First, assume that $n=1$. If $pX\btran{x} q\varepsilon$,
then $\langle pXq \rangle \btran{y}\varepsilon$, where $y=\frac{x}{[pXq]}$ and thus
\[
D_{pXq}(1)=x=\frac{[pXq] x}{[pXq]}=[pXq]y=[pXq] D_{\langle pXq\rangle}(1)\,.
\]
If there is no rule $pX\btran{} q\varepsilon$ in $\Delta$, then there is no rule $\langle pXq\rangle\btran{} \varepsilon$ in $\bpadelta$.

Assume that $n>1$. Let us first prove that $D_{pXq}(n)$ can be decomposed according to the first step:
\begin{equation}\label{eq:pPDA-decomp}
D_{pXq}(n) = \sum_{pX\btran{x} rY} x\cdot D_{rYq}(n-1) + \sum_{i=1}^{n-1}\,
\sum_{pX\btran{x} rYZ}\, \sum_{s\in Q} x\cdot D_{rYs}(i)\cdot D_{sZq}(n-i-1)
\end{equation}
To prove~(\ref{eq:pPDA-decomp}) we introduce some notation. For every $rYs\in Q\times \Gamma\times Q$ and
$i\in \Nset$ we denote by $B_{rYs}(i)$ the set of all paths from $rY$ to $s\varepsilon$ of length $i$.
We also denote by $B_{rYs}(i)\lfloor Z$ the set of all paths of the form $p_0 \alpha_0 Z\cdots p_i\alpha_i Z$
where $p_0 \alpha_0 \cdots p_i\alpha_i$ belongs to $B_{rYs}(i)$.
We have
\[
B_{pXq}(n)=\bigcup_{pX\btran{} rY} B_{rYs}(n-1) \cup
\bigcup_{i=1}^{n-1}\,
\bigcup_{pX\btran{x} rYZ}\, \bigcup_{s\in Q} \{pX\}\cdot B_{rYs}(i)\lfloor Z\cdot B_{sZq}(n-i-1)
\]
where all the unions are disjoint.
Now the probability of following a path of $B_{rYs}(i)\lfloor Z$ is equal to the probability of following
a path of $B_{rYs}(i)$, which is $D_{rYs}(i)$. Thus we have that
\begin{eqnarray*}
\calP(\run(\{pX\}\cdot B_{rYs}(i)\lfloor Z\cdot B_{sZq}(n-i-1))) & = &
         x\cdot \calP(B_{rYs}(i)\lfloor Z\cdot \run(B_{sZq}(n-i-1))) \\
         & = & x\cdot \calP(\run(B_{rYs}(i))\lfloor Z)\cdot \calP(\run(B_{sZq}(n-i-1))) \\
         & = & x\cdot \calP(\run(B_{rYs}(i)))\cdot D_{sZq}(n-i-1) \\
         & = & x \cdot D_{rYs}(i) \cdot D_{sZq}(n-i-1)\,.
\end{eqnarray*}
It follows that
\begin{eqnarray*}\label{eq:pPDA-decomp-paths}
D_{pXq}(n) & = & \calP(\run(B_{pXq}(n))) \\
           & = & \calP(\run\left(\bigcup_{pX\btran{} rY} B_{rYs}(n-1) \cup
\bigcup_{i=1}^{n-1}\,
\bigcup_{pX\btran{x} rYZ}\, \bigcup_{s\in Q} \{pX\}\cdot B_{rYs}(i)\lfloor Z\cdot B_{sZq}(n-i-1)\right)) \\
           & = &  \sum_{pX\btran{x} rY} x\cdot \calP(\run(B_{rYs}(n-1))) + \\
           &   &  \quad + \sum_{i=1}^{n-1}\,
\sum_{pX\btran{x} rYZ}\, \sum_{s\in Q} x\cdot \calP(\run(B_{rYs}(i)))\cdot
\calP(\run(B_{sZq}(n-i-1))) \\
           & = & \sum_{pX\btran{x} rY} x\cdot D_{rYq}(n-1) + \sum_{i=1}^{n-1}\,
\sum_{pX\btran{x} rYZ}\, \sum_{s\in Q} x\cdot D_{rYs}(i)\cdot D_{sZq}(n-i-1)\,,
\end{eqnarray*}
which proves~\eqref{eq:pPDA-decomp}.
Now we are ready to finish the induction proof of~\eqref{eq:pdatobpa-induction}.
{\small\begin{eqnarray*}
D_{pXq}(n) & = & \sum_{pX\btran{x} rY} x\cdot D_{rYq}(n-1) + \sum_{i=1}^{n-1}\,
\sum_{pX\btran{x} rYZ}\, \sum_{s\in Q} x\cdot D_{rYs}(i)\cdot D_{sZq}(n-i-1) \\
           & = & \sum_{pX\btran{x} rY} x\cdot D_{\langle rYq\rangle}(n-1)\cdot [rYq] + \\
           &   & \quad +\, \sum_{i=1}^{n-1}\,
                 \sum_{pX\btran{x} rYZ}\, \sum_{s\in Q} x\cdot D_{\langle rYs\rangle}(i)\cdot [rYs]\cdot
                 D_{\langle sZq \rangle}(n-i-1)\cdot [sZq] \\
           & = & [pXq]\cdot \left(\sum_{pX\btran{x} rY} \frac{x[rYq]}{[pXq]}\cdot
                 D_{\langle rYq\rangle}(n-1) + \right. \\
           &   & \quad +\, \left. \sum_{i=1}^{n-1}\,
                 \sum_{pX\btran{x} rYZ}\, \sum_{s\in Q} \frac{x[rYs][sZq]}{[pXq]}\cdot
                 D_{\langle rYs\rangle}(i)\cdot
                 D_{\langle sZq \rangle}(n-i-1)\right) \\
           & = & [pXq]\cdot \left(\sum_{\langle pXq\rangle\btran{y} \langle rYq\rangle} y\cdot
                 D_{\langle rYq\rangle}(n-1) + \right. \\
           &   & \quad +\, \left. \sum_{i=1}^{n-1}\,
                 \sum_{\langle pXq\rangle\btran{y} \langle rYs\rangle\langle sZq\rangle} y\cdot
                 D_{\langle rYs\rangle}(i)\cdot
                 D_{\langle sZq \rangle}(n-i-1)\right) \\
           & = & [pXq]\cdot D_{\langle pXq\rangle}(n)
\end{eqnarray*}}%
Finally, observe that $\sum_{n=1}^{\infty} D_{\langle pXq\rangle}$ is the probability of reaching
$\varepsilon$ from $\langle pXq\rangle$ and that
\[
\sum_{n=1}^{\infty} D_{\langle pXq\rangle}=\sum_{n=1}^{\infty} \frac{D_{pXq}(n)}{[pXq]}=\frac{1}{[pXq]}\cdot
\sum_{n=1}^{\infty} D_{pXq}(n)=1\,.
\]
\qed
\end{proof}


\subsection{Proof of Proposition~\ref{prop:term-upper-critical}} \label{app:term-upper-critical}

In this subsection we prove Proposition~\ref{prop:term-upper-critical}.
Given a finite set~$\Gamma$, we regard the elements of $\Rset^\Gamma$ as vectors.
Given two vectors $\vu, \vv \in \Rset^\Gamma$, we define a scalar product by setting $\vu \thickdot \vv := \sum_{X \in \Gamma} \vu(X) \cdot \vv(X)$.
Further, elements of~$\Rset^{\Gamma \times \Gamma}$ are regarded as matrices, with the usual matrix-vector multiplication.

It will be convenient for the proof to measure the termination time of pBPA starting in an arbitrary initial configuration~$\alpha_0 \in \Gamma\Gamma^*$,
 not just with a single initial symbol $X_0 \in \Gamma$.
To this end we generalize $\termt{X_0}$, $\run(X_0)$, etc.\ to $\termt{\alpha_0}$, $\run(\alpha_0)$, etc.\ in the straightforward way.

It will also be convenient to allow ``pBPA'' that have transition rules with more than two stack symbols on the right-hand side.
We call them {\em relaxed pBPA}.
All concepts associated to a pBPA, e.g., the induced Markov chain, termination time, etc., are defined analogously for relaxed pBPA.

A relaxed pBPA is called {\em strongly connected}, if the DAG of the dependence relation on its stack alphabet consists of a single SCC.

For any $\alpha \in \Gamma^*$, define $\numbsymb{\alpha}$ as the Parikh image of~$\alpha$,
 i.e., the vector of $\Nset^{\Gamma}$ such that $\numbsymb{\alpha}(Y)$ is the number of occurrences of $Y$ in $\alpha$.
Given a relaxed pBPA~$\Delta$, let $A_\Delta \in \Rset^{\Gamma\times\Gamma}$ be the matrix with
 \[
  A_\Delta(X,Y) = \sum_{X \btran{p} \alpha} p \cdot \numbsymb{\alpha}(Y)\,.
 \]
We drop the subscript of~$A_\Delta$ if $\Delta$ is clear from the context.
Intuitively, $A(X,Y)$ is the expected number of $Y$-symbols pushed on the stack when executing a rule with $X$ on the left hand side.
For instance, if $X \btran{1/5} XX$ and $X \btran{4/5} \varepsilon$, then $A(X,X) = 2/5$.
Note that $A$ is nonnegative.
The matrix~$A$ plays a crucial role in the analysis of pPDA and related models (see e.g.~\cite{EY:RMC-SG-equations-JACM})
 and in the theory of branching processes~\cite{Harris:book}.
We have the following lemma:
\begin{lemma} \label{lem:cone-vector}
 Let $\Delta$ be an almost surely terminating, strongly connected pBPA.
 Then there is a positive vector $\vu \in \Rset_+^\Gamma$ such that
  $A \cdot \vu \le \vu$, where $\mathord{\le}$ is meant componentwise.
 All such vectors~$\vu$ satisfy $\frac{\umin}{\umax} \ge \pmin^{|\Gamma|}$,
  where $\pmin$ denotes the least rule probability in~$\Delta$,
  and $\umin$ and $\umax$ denote the least and the greatest component of~$\vu$, respectively.
\end{lemma}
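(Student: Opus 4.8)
The plan is to obtain existence of~$\vu$ from Perron--Frobenius theory (using almost-sure termination to control the spectral radius of~$A$), and the quantitative estimate $\frac{\umin}{\umax}\ge\pmin^{|\Gamma|}$ from a short propagation argument along a path in the dependence graph.

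First I would observe that $A(X,Y)>0$ holds exactly when $X$ depends directly on~$Y$, so the digraph underlying~$A$ is precisely the dependence digraph; since $\Delta$ is strongly connected, $A$ is an irreducible nonnegative matrix. By the Perron--Frobenius theorem its spectral radius~$\rho$ is an eigenvalue with a strictly positive eigenvector $\vu\in\Rset_+^\Gamma$, i.e.\ $A\vu=\rho\vu$. The remaining point for existence is that $\rho\le 1$, and this is exactly where almost-sure termination enters: for a strongly connected pBPA, $A$ equals the Jacobian at~$\vone$ of the polynomial map~$P$ whose least fixed point in $[0,1]^\Gamma$ is the vector of termination probabilities, and by standard branching-process / recursive-Markov-chain theory (see \cite{Harris:book,EY:RMC-SG-equations-JACM}) that least fixed point equals~$\vone$ --- i.e.\ all symbols terminate almost surely --- if and only if $\rho\le 1$. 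Hence $A\vu=\rho\vu\le\vu$. (In the subcritical case $\Emax<\infty$ one could bypass Perron--Frobenius and simply take $\vu:=(E[X])_{X\in\Gamma}$, since the identity for $E[X]$ recalled in the proof of Proposition~\ref{prop:term-upper-subcritical} gives $A\vu=\vu-\vone\le\vu$ directly; but the critical case, which is the one actually needed in Proposition~\ref{prop:term-upper-critical}, forces the Perron--Frobenius route.)

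Next I would prove the bound for an \emph{arbitrary} positive $\vu$ with $A\vu\le\vu$. Pick $X^\ast,Y^\ast\in\Gamma$ attaining $\vu(X^\ast)=\umin$ and $\vu(Y^\ast)=\umax$. Strong connectivity of the dependence relation gives that $X^\ast$ depends on~$Y^\ast$, so there is a sequence $X^\ast=Z_0,Z_1,\ldots,Z_k=Y^\ast$ in which each $Z_j$ depends directly on~$Z_{j+1}$; choosing a shortest such sequence makes the $Z_j$ pairwise distinct, hence $k\le|\Gamma|-1$. For each $j$ there is a rule $Z_j\btran{p}\alpha$ with $Z_{j+1}$ occurring in~$\alpha$ and $p\ge\pmin$, so $A(Z_j,Z_{j+1})=\sum_{Z_j\btran{p'}\beta}p'\cdot\numbsymb{\beta}(Z_{j+1})\ge\pmin$.

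Finally I would chain these. Since $A$ is nonnegative and $\vu$ is positive, the inequality $(A\vu)(Z_j)\le\vu(Z_j)$ implies the single-term estimate $A(Z_j,Z_{j+1})\cdot\vu(Z_{j+1})\le\vu(Z_j)$, hence $\pmin\cdot\vu(Z_{j+1})\le\vu(Z_j)$; multiplying these over $j=0,\ldots,k-1$ yields $\pmin^{k}\,\umax=\pmin^{k}\,\vu(Z_k)\le\vu(Z_0)=\umin$, and since $\pmin\le 1$ and $k\le|\Gamma|$ we conclude $\umin/\umax\ge\pmin^{k}\ge\pmin^{|\Gamma|}$. The only non-routine ingredient is the implication ``almost-sure termination $\Rightarrow\rho\le 1$'' used for existence; everything else is Perron--Frobenius plus bookkeeping along a dependence path.
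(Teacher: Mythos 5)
Your proof is correct. The existence half coincides with the paper's: irreducibility of $A$ from strong connectivity of the dependence relation, the implication ``almost-sure termination $\Rightarrow \rho(A)\le 1$'' quoted from Section~8.1 of~\cite{EY:RMC-SG-equations-JACM}, and Perron--Frobenius to obtain a positive vector with $A\vu\le\vu$. The divergence is in the ratio bound. The paper takes a dependence path $r_1,\dots,r_q$ from the arg-max to the arg-min, writes $\umin/\umax$ as the telescoping product of the $q-1$ consecutive ratios, uses a pigeonhole argument to isolate a single edge with $(\vu_s/\vu_t)^{q-1}\le \umin/\umax$, bounds that one ratio via $\pmin\le A(X_s,X_t)\le \vu_s/\vu_t$, and then raises to the power~$|\Gamma|$. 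You instead chain the single-entry consequences of $A\vu\le\vu$, namely $\pmin\cdot\vu(Z_{j+1})\le A(Z_j,Z_{j+1})\cdot\vu(Z_{j+1})\le\vu(Z_j)$, along a shortest dependence path from the arg-min to the arg-max, obtaining $\pmin^{k}\,\umax\le\umin$ with $k\le|\Gamma|-1$ in one stroke. Both arguments rest on exactly the same two facts ($A(X,Y)\ge\pmin$ whenever $X$ depends directly on~$Y$, and the entrywise reading of $A\vu\le\vu$); yours dispenses with the pigeonhole and the exponentiation step, and even yields the marginally sharper constant $\pmin^{|\Gamma|-1}$. Your parenthetical observation that in the subcritical case one may simply take $\vu=(E[X])_{X\in\Gamma}$, since the expectation identity recalled in the proof of Proposition~\ref{prop:term-upper-subcritical} gives $A\vu=\vu-\vone$, is also correct, and you rightly note that it does not cover the critical case for which Lemma~\ref{lem:cone-vector} is actually needed.
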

\begin{proof}
  Let $X,Y \in \Gamma$.
  Since $\Delta$ is strongly connected, there is a sequence $X = X_1, X_2, \ldots, X_n = Y$ with $n \ge 1$ such that $X_i$ depends directly on~$X_{i+1}$
   for all $1 \le i \le n-1$.
  A straightforward induction on~$n$ shows that $A^n(X,Y) \ne 0$; i.e., $A$ is {\em irreducible}.
  The assumption that $\Delta$ is almost surely terminating implies that
   the spectral radius of~$A$ is less than or equal to one, see, e.g., Section 8.1 of~\cite{EY:RMC-SG-equations-JACM}.
  Perron-Frobenius theory (see, e.g., \cite{book:BermanP}) then implies that there is a positive vector $\vu \in \Rset_+^\Gamma$ such that
   $A \cdot \vu \le \vu$;
   e.g., one can take for~$\vu$ the dominant eigenvector of~$A$.

  Let $A \cdot \vu \le \vu$.
  It remains to show that $\frac{\umin}{\umax} \ge \pmin^{|\Gamma|}$.
  The proof is essentially given in~\cite{EKL10:SICOMP}, we repeat it for convenience.
  W.l.o.g.\ let $\Gamma = \{X_1, \ldots, X_{|\Gamma|}\}$.
  We write $\vu_i$ for $\vu(X_i)$.
  W.l.o.g.\ let $\vu_1 = \umax$ and $\vu_{|\Gamma|} = \umin$.
  Since $\Delta$ is strongly connected, there is a sequence $1 = r_1, r_2, \ldots, r_q = |\Gamma|$ with $q \le |\Gamma|$
   such that $X_{r_j}$ depends on~$X_{r_{j+1}}$ for all $j$.
  We have
   \[
    \frac{\umin}{\umax} = \frac{\vu_{|\Gamma|}}{\vu_1} = \frac{\vu_{r_q}}{\vu_{r_{q-1}}} \cdot \ldots \cdot \frac{\vu_{r_2}}{\vu_{r_1}} \,.
   \]
  By the pigeonhole principle there is $j$ with $2 \le j \le q$ such that
   \begin{equation}
    \frac{\umin}{\umax} \ge \left( \frac{\vu_s}{\vu_t} \right)^{q-1} \ge \left( \frac{\vu_s}{\vu_t} \right)^{|\Gamma|}\quad
      \text{where $s := r_j$ and $t := r_{j-1}$.} \label{eq:cone-vector-pigeonhole}
   \end{equation}
  We have $A \cdot \vu \le \vu$, which implies $A(X_s, X_t) \cdot \vu_t \le \vu_s$ and so $A(X_s, X_t) \le {\vu_s}/{\vu_t}$.
  On the other hand, since $X_s$ depends on~$X_t$, we clearly have $\pmin \le A(X_s, X_t)$.
  Combining those inequalities with~\eqref{eq:cone-vector-pigeonhole} yields $\frac{\umin}{\umax} \ge \left(A(X_s, X_t)\right)^{|\Gamma|} \ge \pmin^{|\Gamma|}$.
\qed
\end{proof}

Given a relaxed pBPA~$\Delta$ and vector $\vu \in \Rset_+^\Gamma$,
 we say that $\Delta$ is {\em $\vu$-progressive}, if $\Delta$ has, for all $X \in \Gamma$,
 a rule $X \btran{} \alpha$ such that $|\vu(X) - \numbsymb{\alpha} \thickdot \vu| \ge \umin/2$.
The following lemma states that, intuitively, any pBPA can be transformed into a $\vu$-progressive relaxed pBPA
 that is at least as fast but no more than ${|\Gamma|}$ times faster.
\begin{lemma} \label{lem:progressive}
 Let $\Delta$ be an almost surely terminating pBPA with stack alphabet~$\Gamma$.
 Let $\pmin$ denote the least rule probability in~$\Delta$, and let $\vu \in \Rset_+^\Gamma$ with $A_\Delta \cdot \vu \le \vu$.
 Then one can construct a $\vu$-progressive, almost surely terminating relaxed pBPA~$\Delta'$
  with stack alphabet~$\Gamma$ such that for all $\alpha_0\in\Gamma^*$ and for all $a \ge 0$
 \[
  \calP'(\mathbf{T}_{\alpha_0} \ge a) \quad \le \quad \calP( \mathbf{T}_{\alpha_0} \ge a ) \quad \le \quad \calP'(\mathbf{T}_{\alpha_0} \ge a / |\Gamma|)\,,
 \]
 where $\calP$ and $\calP'$ are the probability measures associated with $\Delta$ and $\Delta'$, respectively.
 Furthermore, the least rule probability in~$\Delta'$ is at least $\pmin^{|\Gamma|}$, and $A_{\Delta'} \cdot \vu \le \vu$.
 Finally, if $A_\Delta \cdot \vu = \vu$, then $A_{\Delta'} \cdot \vu = \vu$.
\end{lemma}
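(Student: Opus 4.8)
The plan is to \emph{accelerate} $\Delta$: for each symbol that is not yet ``progressive'' we discard its rules and replace them by \emph{macro-rules}, each of which is a short initial segment of a derivation from that symbol. Call $X \in \Gamma$ \emph{good} if $\Delta$ has a rule $X \btran{} \alpha$ with $|\vu(X) - \numbsymb{\alpha}\thickdot\vu| \ge \umin/2$, and \emph{bad} otherwise; if all symbols are good, take $\Delta':=\Delta$. For a bad $X$, start with the family of $X$'s original rules regarded as one-node partial derivation trees, and repeatedly pick a member whose yield $\gamma$ still satisfies $|\vu(X)-\numbsymb{\gamma}\thickdot\vu|<\umin/2$ and has fewer than $|\Gamma|$ internal nodes, and replace it by all its one-step extensions (apply a rule of $\Delta$, in every possible way, to one chosen occurrence of one chosen frontier symbol of $\gamma$). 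This keeps the family prefix-free and complete, so the associated probabilities---products of the $\Delta$-rule probabilities along each tree---always sum to~$1$; the process terminates since member sizes are capped at $|\Gamma|$ and the family, being prefix-free, stays finite. The macro-rules of a bad $X$ in $\Delta'$ are the members of the final family; good symbols keep their $\Delta$-rules. Then $\Delta'$ is a relaxed pBPA over the same alphabet $\Gamma$ and, as each macro-rule is a product of at most $|\Gamma|$ rule probabilities of $\Delta$, its least rule probability is at least $\pmin^{|\Gamma|}$.

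The heart of the argument is that the expansion can be organized so that each bad $X$ acquires at least one macro-rule whose yield lies \emph{outside} the band $(\vu(X)-\umin/2,\vu(X)+\umin/2)$; this makes $\Delta'$ progressive. Since $\Delta$ terminates almost surely, $X$ admits a complete derivation $X \Rightarrow^* \varepsilon$, and since $\Delta$ is a pBPA (rules change the stack height by at most one) such a derivation must apply an $\varepsilon$-rule to some symbol $W$; hence $X$ depends on $W$, so there is a chain $X=Z_0,Z_1,\dots,Z_j=W$ of pairwise distinct symbols with $Z_i$ depending directly on $Z_{i+1}$, where $1 \le j \le |\Gamma|-1$ (a bad $X$ has no $\varepsilon$-rule, else it would be good). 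Fix rules $Z_i \btran{} \alpha_i$ with $Z_{i+1}$ occurring in $\alpha_i$ and a rule $W \btran{} \varepsilon$. Now follow the single lineage of members starting from $X \btran{} \alpha_0$ that, at each step, expands the freshly produced occurrence of $Z_{i+1}$: if at some point its yield leaves the band we are done; otherwise after $j$ steps the yield is in the band and contains $W$, and expanding that $W$ by $W \btran{} \varepsilon$ decreases the yield's $\vu$-weight by $\vu(W) \ge \umin$, moving it below $\vu(X)-\umin/2$ and hence out of the band---after at most $j+1 \le |\Gamma|$ nodes.

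It remains to transfer the quantitative properties. Expanding each macro-rule application node by node turns any $\Delta'$-derivation tree from a configuration $\alpha_0$ into a $\Delta$-derivation tree from $\alpha_0$; conversely, prefix-free-completeness of each symbol's macro-family lets one decompose any $\Delta$-tree uniquely into macro-blocks, so the two kinds of trees are in probability-preserving bijection, whence $\Delta'$ terminates almost surely. Under this bijection a $\Delta'$-tree with $N$ internal nodes corresponds to a $\Delta$-tree with between $N$ and $|\Gamma|\,N$ internal nodes, which gives $\calP'(\termt{\alpha_0}\ge a) \le \calP(\termt{\alpha_0}\ge a) \le \calP'(\termt{\alpha_0}\ge a/|\Gamma|)$. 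Finally, as $\Delta$ generates a derivation, the scalar product of $\vu$ with the Parikh vector of the current yield is a supermartingale, since rewriting any symbol $V$ changes its conditional expectation by $(A_\Delta\vu)(V)-\vu(V)\le 0$; each macro-rule stops this process at a time $\tau\le|\Gamma|$, so by optional stopping $(A_{\Delta'}\vu)(X)$, which equals the expected $\vu$-weight of the stopped yield started from $X$, is $\le \vu(X)$, with equality when $A_\Delta\vu=\vu$ (the process is then a martingale); for good $X$ this quantity is just $(A_\Delta\vu)(X)$, so $A_{\Delta'}\vu\le\vu$, and $A_{\Delta'}\vu=\vu$ whenever $A_\Delta\vu=\vu$. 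The one genuinely delicate point is the second paragraph: steering the expansion so that every bad symbol gets a big macro-rule within the tight budget of $|\Gamma|$ nodes---which is exactly where the descent to an $\varepsilon$-rule along a chain of distinct symbols is needed; everything else is routine once one adopts the partial-derivation-tree and supermartingale viewpoints.
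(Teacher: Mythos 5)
Your proof is correct and follows essentially the same route as the paper's: you contract a derivation sequence that descends through pairwise distinct symbols to an $\varepsilon$-rule, exploiting the drop of at least $\umin$ at that final step to force one macro-rule's yield out of the band of half-width $\umin/2$, and you transfer the probability, length, and $\pmin^{|\Gamma|}$ bounds through the node-by-node correspondence of derivation trees. The remaining differences are presentational — a prefix-free family of partial derivation trees in place of the paper's recursive contraction $\mathit{Con}(s)$, and a supermartingale/optional-stopping argument in place of the paper's induction for $A_{\Delta'}\cdot\vu\le\vu$ — and are equivalent in substance.
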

\begin{proof}
\newcommand{\Con}{\mathit{Con}}
A sequence of transitions $X_1 \btran{} \alpha_1, \ldots,  X_n \btran{} \alpha_n$ is called {\em derivation sequence from $X_1$ to $\alpha_n$},
 if for all $i \in \{2, \ldots, n\}$ the symbol $X_i \in \Gamma$ occurs in $\alpha_{i-1}$.
The {\em word induced} by a derivation sequence $X_1 \btran{} \alpha_1, \ldots,  X_n \btran{} \alpha_n$
 is obtained by taking $\alpha_1$, replacing an occurrence of~$X_2$ by~$\alpha_2$,
 then replacing an occurrence of~$X_3$ by~$\alpha_3$, etc., and finally replacing an occurrence of~$X_n$ by~$\alpha_n$.

Given a pBPA~$\Delta$ and a derivation sequence
 $s = \big( X_1 \btran{p_1} \alpha_1^1 X_2 \alpha_1^2, X_2 \btran{p_2} \alpha_2, \ldots,  X_n \btran{p_n} \alpha_{n} \big)$
 with $X_i \ne X_j$ for all $1 \le i < j \le n$,
we define the {\em contraction} $\Con(s)$ of~$s$, a set of $X_1$-transitions with possibly more than two symbols on the right hand side.
The contraction $\Con(s)$ will include a rule $X_1 \btran{} \gamma$, where $\gamma$ is the word induced by~$s$.
We define~$\Con(s)$ inductively over the length~$n$ of~$s$.
If $n = 1$, then $\Con(s) = \{X_1 \btran{p_1} \alpha_1^1 X_2 \alpha_1^2\}$.
If $n \ge 2$, let $s' = \big( X_2 \btran{p_2} \alpha_{2}, \ldots,  X_n \btran{p_n} \alpha_{n} \big)$ and define
 \begin{equation}
  \delta_2 := \left\{X_2 \btran{} \beta \mid \text{$X_2 \btran{} \beta$ is a rule in~$\Delta$} \right\}
    - \left\{ X_2 \btran{p2} \alpha_2 \right\}
    \cup \Con(s')\,; \label{eq:delta-2}
 \end{equation}
 i.e., $\delta_2$ is the set of $X_2$-transitions in~$\Delta$ with $X_2 \btran{p2} \alpha_2$ replaced by $\Con(s')$.
W.l.o.g.\ assume $\delta_2 = \{ X_2 \btran{q_{1}} \beta_1, \ldots, X_2 \btran{q_{k}} \beta_k \}$.
Then we define
 \[
  \Con(s) := \left\{ X_1 \btran{p_1q_1} \alpha_1^1 \beta_1 \alpha_1^2, \ldots, X_1 \btran{p_1q_k} \alpha_1^1 \beta_k \alpha_1^2 \right\}\,.
 \]
The following properties are easy to show by induction on~$n$:
\begin{itemize}
 \item[(a)]
  $\Con(s)$ contains $X_1 \btran{} \gamma$, where $\gamma$ is the word induced by~$s$.
 \item[(b)]
  The rule probabilities are at least $\pmin^n$.
 \item[(c)]
  Let $\Delta'$ be the relaxed pBPA obtained from~$\Delta$ by replacing $X_1 \btran{p_1} \alpha_1^1 X_2 \alpha_1^2$ with $\Con(s)$.
  Then each path in $M_{\Delta'}$ corresponds in a straightforward way to a path in~$M_\Delta$,
   namely to the path obtained by ``re-expanding'' the contractions.
  The corresponding path in~$M_\Delta$ has the same probability
   and is not shorter but at most $|\Gamma|$ times longer than the one in~$M_{\Delta'}$.
 \item[(d)]
  Let $\Delta'$ be as in~(c).
  Then $A_{\Delta'} \cdot \vu \le \vu$.
  Let us prove that explicitly.
  The induction hypothesis $n=1$ is trivial.
  For the induction step, using the definition for~$\delta_2$ in~\eqref{eq:delta-2} and
   $\delta_2 = \{ X_2 \btran{q_{1}} \beta_1, \ldots, X_2 \btran{q_{k}} \beta_k \}$,
   we know by the induction hypothesis that $\sum_{i=1}^k q_i \cdot \numbsymb{\beta_i} \thickdot \vu  \le \vu(X_2)$.
  This implies
   \begin{align*}
    \sum_{i=1}^k p_1 q_i \cdot \numbsymb{\alpha_1^1\beta_i\alpha_1^2} \thickdot \vu & \le p_1 \cdot \numbsymb{\alpha_1^1 X_2 \alpha_1^2} \thickdot \vu \,,
     \quad \text{and hence} \\
    \quad \left(A_{\Delta'} \cdot \vu\right)(X_1) & \le \left( A_{\Delta} \cdot \vu \right)(X_1) \le \vu(X_1)\,.
   \end{align*}
   Since $A_{\Delta}$ and $A_{\Delta'}$ may differ only in the $X_1$-row, we have $A_{\Delta'} \cdot \vu \le \vu$.
 \item[(e)]
  Let $\Delta'$ be as in (c) and~(d).
  If $A_{\Delta} \cdot \vu = \vu$, then $A_{\Delta'} \cdot \vu = \vu$.
  This follows as in~(d), with the inequality signs replaced by equality.
\end{itemize}

Associate to each symbol $X_1 \in \Gamma$ a shortest derivation sequence
 \[
  c(X_1) = \big( X_1 \btran{} \alpha_1, \ldots,  X_{n-1} \btran{} \alpha_{n-1}, X_n \btran{} \varepsilon \big)
 \]
  from $X_1$ to~$\varepsilon$.
Since $\Delta$ is almost surely terminating, the length of~$c(X_1)$ is at most~$|\Gamma|$ for all $X_1 \in \Gamma$.
Let $X_1 \in \Gamma$,
 and let $\gamma_1$ denote the word induced by $c(X_1)$,
 and let $\gamma_2$ denote the word induced by the derivation sequence $c_2(X_1) := \big( X_1 \btran{} \alpha_1, \ldots,  X_{n-1} \btran{} \alpha_{n-1} \big)$.
We have $\numbsymb{\gamma_2} \thickdot \vu =  \numbsymb{\gamma_1} \thickdot \vu + \vu(X_n) \ge \numbsymb{\gamma_1} \thickdot \vu + \umin$,
 so we can choose $\gamma \in \left\{\gamma_1, \gamma_2\right\}$ such that $|\vu(X_1) - \numbsymb{\gamma} \thickdot \vu| \ge \umin/2$.
Choose $\hat c(X_1) \in \{c(X_1), c_2(X_1)\}$ such that $\hat c(X_1)$ induces~$\gamma$.
(Of course, if $c_2(X_1)$ has length zero, take $\hat c(X_1) = c(X_1)$.)
Note that $(X_1 \btran{} \gamma) \in \Con(\hat c(X_1))$.

The relaxed pBPA~$\Delta'$ from the statement of the lemma is obtained by replacing, for all $X_1 \in \Gamma$,
 the first rule of~$\hat c(X_1)$ with $\Con(\hat c(X_1))$.
The properties (a)--(e) from above imply:
\begin{itemize}
 \item[(a)]
  The relaxed pBPA $\Delta'$ is $\vu$-progressive.
 \item[(b)]
  The rule probabilities are at least $\pmin^{|\Gamma|}$.
 \item[(c)]
  For each finite path $w'$ in~$M_{\Delta'}$ from some $\alpha_0 \in \Gamma^*$ to~$\varepsilon$ there is a
   finite path $w$ in~$M_{\Delta}$ from $\alpha_0$ to~$\varepsilon$ such that $|w'| \le |w| \le |\Gamma| \cdot |w'|$ and $\calP'(w') = \calP(w)$.
  Hence, $\calP'(\mathbf{T}_{\alpha_0} < a / |\Gamma|) \le \calP(\mathbf{T}_{\alpha_0} < a) \le \calP'(\mathbf{T}_{\alpha_0} < a)$ holds for all $a \ge 0$,
   which implies $\calP'(\mathbf{T}_{\alpha_0} \ge a) \le \calP( \mathbf{T}_{\alpha_0} \ge a ) \le \calP'(\mathbf{T}_{\alpha_0} \ge a / |\Gamma|)$.
 \item[(d)]
  We have $A_{\Delta'} \cdot \vu \le \vu$.
 \item[(e)]
  If $A_{\Delta} \cdot \vu = \vu$, then $A_{\Delta'} \cdot \vu = \vu$.
\end{itemize}
This completes the proof of the lemma.
\qed
\end{proof}

\begin{proposition} \label{prop:critical-u-progressive}
 Let $\Delta$ be an almost surely terminating relaxed pBPA with stack alphabet~$\Gamma$.
 Let $\vu \in \Rset_+^\Gamma$ be such that $\umax = 1$ and $A_\Delta \cdot \vu \le \vu$ and $\Delta$ is $\vu$-progressive.
 Let $\pmin$ denote the least rule probability in~$\Delta$.
 Let $C := 17 |\Gamma| / ( \pmin \cdot \umin^2 )$.
 Then for each $k \in \Nset_0$ there is $n_0 \in \Nset$ such that
   \begin{align*}
     \tailprob{\alpha_0}{ n^{2k+2} / (2 |\Gamma|) } \quad & \le\quad C / n
      && \qquad \text{for all $n \ge n_0$ and for all $\alpha_0 \in \Gamma^*$ with $1 \le |\alpha_0| \le n^k$.}
   \end{align*}
\end{proposition}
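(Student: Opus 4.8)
The plan is to generalise the exponential-martingale argument from the proof sketch of Proposition~\ref{prop:term-upper-critical} (the one-symbol random-walk case) to an arbitrary $\vu$-progressive relaxed pBPA, carrying the size $|\alpha_0|$ of the initial configuration explicitly through the estimate; the exponent $2k+2$ then falls out of the scaling $\theta \sim 1/\sqrt{m}$ combined with the hypothesis $|\alpha_0| \le n^k$, so that in fact no induction on~$k$ is needed. First I would attach to each $X \in \Gamma$ and $\theta > 0$ the quantity
\[
 g_X(\theta) \ :=\ \sum_{X \btran{p} \alpha} p \cdot \exp\bigl(\theta\,(\vu(X) - \numbsymb{\alpha}\thickdot\vu)\bigr)\,.
\]
Being a positive combination of exponentials, $g_X$ is smooth and convex with $g_X(0) = 1$. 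Moreover $g_X'(0) = \vu(X) - (A_\Delta\vu)(X) \ge 0$ because $A_\Delta\vu \le \vu$, so convexity forces $g_X(\theta) \ge 1$ for all $\theta \ge 0$; and $g_X''(0) = \sum_{X \btran{p} \alpha} p\,(\vu(X) - \numbsymb{\alpha}\thickdot\vu)^2 \ge \pmin\,\umin^2/4$ since $\Delta$ is $\vu$-progressive. As the exponents $\vu(X) - \numbsymb{\alpha}\thickdot\vu$ are bounded by a constant depending only on~$\Delta$ and~$\vu$, continuity of $g_X''$ yields some $\theta_0 > 0$, depending only on~$\Delta$ and~$\vu$, with $g_{\min}(\theta) := \min_{X \in \Gamma} g_X(\theta) \ge 1 + \tfrac{1}{16}\pmin\,\umin^2\,\theta^2$ for all $\theta \in (0,\theta_0]$.

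Next, for $\alpha_0 \in \Gamma\Gamma^*$, $\theta \in (0,\theta_0]$ and a run $w \in \run(\alpha_0)$ I would define, for $0 \le i \le \termt{\alpha_0}(w)$,
\[
 \ms{i}_\theta(w) \ :=\ \frac{\exp\bigl(-\theta\,\numbsymb{w(i)}\thickdot\vu\bigr)}{\prod_{j=0}^{i-1} g_{\,w(j)(0)}(\theta)}\,,
\]
where $w(j)(0) \in \Gamma$ is the symbol rewritten in step~$j$ (well defined for $j < \termt{\alpha_0}(w)$; the empty product is~$1$). Since a rule $X \btran{p} \alpha$ changes $\numbsymb{\cdot}\thickdot\vu$ by exactly $\numbsymb{\alpha}\thickdot\vu - \vu(X)$, one checks $\Ex{\exp(-\theta\,\numbsymb{w(i+1)}\thickdot\vu) \mid \calF_i} = \exp(-\theta\,\numbsymb{w(i)}\thickdot\vu) \cdot g_{w(i)(0)}(\theta)$, so the process stopped at~$\termt{\alpha_0}$ is a martingale; it is bounded, since $\vu > 0$ puts the numerator in $(0,1]$ and $g_X \ge 1$ keeps the denominator $\ge 1$, whence $\ms{i}_\theta \in (0,1]$. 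As $\Delta$ terminates almost surely, $\termt{\alpha_0} < \infty$ almost surely, so Doob's Optional-Stopping Theorem (Theorem~10.10(ii) of~\cite{book:Williams}) gives $\exp(-\theta\,\numbsymb{\alpha_0}\thickdot\vu) = \ms{0}_\theta = \Ex{\ms{\termt{\alpha_0}}_\theta}$. On $\{\termt{\alpha_0} = i\}$ we have $w(i) = \varepsilon$, hence $\ms{i}_\theta = \prod_{j<i} g_{w(j)(0)}(\theta)^{-1} \le g_{\min}(\theta)^{-i}$, which is $\le 1$ always and $\le g_{\min}(\theta)^{-m}$ once $i \ge m$. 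Splitting the expectation at~$m$ gives $\exp(-\theta\,\numbsymb{\alpha_0}\thickdot\vu) \le 1 - \tailprob{\alpha_0}{m}\,(1 - g_{\min}(\theta)^{-m})$, and rearranging (using $1 - e^{-x} \le x$ and $\numbsymb{\alpha_0}\thickdot\vu \le |\alpha_0|\,\umax = |\alpha_0|$) yields, for all $m \in \Nset$ and $\theta \in (0,\theta_0]$,
\[
 \tailprob{\alpha_0}{m} \ \le\ \frac{1 - \exp(-\theta\,\numbsymb{\alpha_0}\thickdot\vu)}{1 - g_{\min}(\theta)^{-m}} \ \le\ \frac{\theta\,|\alpha_0|}{1 - g_{\min}(\theta)^{-m}}\,.
\]

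It remains to substitute $m := n^{2k+2}/(2|\Gamma|)$ and $\theta := t/\sqrt{m}$ for a suitably chosen absolute constant $t > 0$. For $n \ge n_0$ (with $n_0$ depending only on $\Delta$, $\vu$ and~$k$) we have $\theta \le \theta_0$, and $g_{\min}(\theta)^{-m} \le (1 + \tfrac{1}{16}\pmin\umin^2 t^2/m)^{-m} \to \exp(-\tfrac{1}{16}\pmin\umin^2 t^2)$ as $n \to \infty$, so $1 - g_{\min}(\theta)^{-m}$ is bounded below by an explicit positive constant for $n$ large. Together with $|\alpha_0| \le n^k$ and $\sqrt{m} = n^{k+1}/\sqrt{2|\Gamma|}$, the bound above becomes $\tailprob{\alpha_0}{m} \le \bigl(\text{const}(|\Gamma|,\pmin,\umin)\bigr)/n$, and a routine estimation of this constant — using $\pmin \le 1$, $\umin \le \umax = 1$, $1 - e^{-x} \ge x - x^2/2$, and an optimal choice of~$t$ — shows it does not exceed $C/n$ with $C = 17|\Gamma|/(\pmin\umin^2)$.

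I expect the construction of the martingale to be the heart of the proof: one needs the \emph{per-symbol} normalisers $g_X(\theta)$ (a single common normaliser fails to give a martingale once the $g_X$ differ), and the decisive quantitative input is that $\vu$-progressiveness forces $g_X''(0)$ to be bounded \emph{away from zero uniformly} in~$X$; without this the curvature of $g_{\min}$ at~$0$ degenerates and the scaling $\theta \sim 1/\sqrt{m}$ no longer produces a $1/n$ bound. The rest is the elementary but fiddly asymptotic analysis of $(1 - e^{-\Theta(\theta)})/(1 - g_{\min}(\theta)^{-m})$ as $\theta \to 0$, $m \to \infty$ with $m\theta^2$ fixed, which is precisely what pins down the exponent $2k+2$, the factor $2|\Gamma|$, and the constant~$C$.
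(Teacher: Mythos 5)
Your proposal is correct and follows essentially the same route as the paper's proof: the same per-symbol normalisers $g_X(\theta)$ with $g_X(0)=1$, $g_X'(0)\ge 0$ from $A_\Delta\vu\le\vu$, and $g_X''(0)\ge\pmin\umin^2/4$ from $\vu$-progressiveness, the same exponential martingale $\exp(-\theta\,\numbsymb{w(i)}\thickdot\vu)\prod_j g_{\Xs{j}}(\theta)^{-1}$ stopped at $\termt{\alpha_0}$ via Doob, the same splitting/rearranging to $\tailprob{\alpha_0}{m}\le(1-\exp(-\theta\,\numbsymb{\alpha_0}\thickdot\vu))/(1-g_{\min}(\theta)^{-m})$, and the same scaling $\theta\sim n^{-(k+1)}$ (the paper fixes $\theta=n^{-(k+1)}$ and evaluates the limits by l'Hopital rather than via your explicit quadratic lower bound on $g_{\min}$ near $0$, but this is a cosmetic difference). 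The only soft spot is that the final numerical verification of the constant $C=17|\Gamma|/(\pmin\umin^2)$ is asserted rather than carried out, but the asymptotic structure delivering it is exactly the paper's.
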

\begin{proof}
For each $X \in \Gamma$ we define a function $g_X: \Rset \to \Rset$ by setting
 \[
  g_X(\theta) := \sum_{X\btran{p} \alpha} p \cdot \exp(-\theta \cdot (-\vu(X) + \numbsymb{\alpha} \thickdot \vu))\,.
 \]
The following lemma states important properties of~$g_X$.
\begin{lemma} \label{lem:g-properties}
  The following holds for all $X \in \Gamma$:
  \begin{itemize}
    \item[(a)]
     For all $\theta > 0$ we have $1 = g_X(0) < g_X(\theta)$.
    \item[(b)]
     For all $\theta > 0$ we have $0 \le g_X'(0) < g_X'(\theta)$.
    \item[(c)]
     For all $\theta \ge 0$ we have $0 < g_X''(\theta)$. In particular, $g_X''(0) \ge \pmin \cdot \umin^2 / 4$.
  \end{itemize}
\end{lemma}
\begin{proof}[Proof of the lemma]\mbox{}
 \begin{itemize}
  \item[(a)]
   Clearly, $g_X(0) = 1$.
   The inequality $g_X(0) < g_X(\theta)$ follows from (b).
  \item[(b)]
   We have:
   \begin{align*}
    g_X(\theta)
    & = \sum_{X\btran{p} \alpha} p \cdot \exp(-\theta \cdot (-\vu(X) + \numbsymb{\alpha} \thickdot \vu)) \\
    g_X'(\theta)
    & = \sum_{X\btran{p} \alpha} p \cdot (\vu(X) - \numbsymb{\alpha} \thickdot \vu) \cdot \exp(-\theta \cdot (-\vu(X) + \numbsymb{\alpha} \thickdot \vu))
\intertext{Let $A(X)$ denote the $X$-row of~$A$, i.e., the vector $\vv\in \Rset^\Gamma$ such that $\vv(Y) = A(X,Y)$.
   Then $A \cdot \vu \le \vu$ implies
}
     g_X'(0)
     & = \sum_{X\btran{p} \alpha} p \cdot (\vu(X) - \numbsymb{\alpha} \thickdot \vu) \\
     & = \vu(X) - \sum_{X\btran{p} \alpha} p \cdot \numbsymb{\alpha} \thickdot \vu = \vu(X) - A(X) \thickdot \vu\\
     & \ge \vu(X) - \vu(X) = 0\,.
   \end{align*}
   The inequality $g_X'(0) < g_X'(\theta)$ follows from~(c).
  \item[(c)]
   We have
   \begin{align*}
    g_X''(\theta)
    & = \sum_{X\btran{p} \alpha} p \cdot (\vu(X) - \numbsymb{\alpha} \thickdot \vu)^2 \cdot \exp(-\theta \cdot (-\vu(X) + \numbsymb{\alpha} \thickdot \vu)) > 0\,.
   \end{align*}
   Since $\Delta$ is $\vu$-progressive, there is a rule $X \btran{p} \alpha$ with $|\vu(X) - \numbsymb{\alpha} \thickdot \vu| \ge \umin/2$.
   Hence, for $\theta = 0$ we have $g_X''(0) \ge \pmin \cdot \umin^2 / 4$.
 \end{itemize}
This proves the lemma.
\qed
\end{proof}

Let in the following $\theta > 0$.
Given a run $w\in \run(\alpha_0)$ and $i\geq 0$, we 
  write $\Xs{i}(w)$ for the symbol $X \in \Gamma$ for which $w(i) = X \alpha$.
Define
 \[
  \ms{i}_\theta(w) =
   \begin{cases} \displaystyle
           \exp(-\theta \cdot \numbsymb{w(i)} \thickdot \vu) \cdot \prod_{j=0}^{i-1} \frac{1}{g_{\Xs{j}(w)}(\theta)} & \text{if $i=0$ or $w(i-1) \ne \varepsilon$} \\
           \ms{i-1}_\theta(w)                                                                                  & \text{otherwise} \\
   \end{cases}
 \]

\begin{lemma} \label{lem:exponential-martingale}
$\ms{0}_\theta, \ms{1}_\theta, \ldots$ is a martingale.
\end{lemma}
\begin{proof}[Proof of the lemma]
Let us fix a path $v\in \fpath(\alpha_0)$ of length $i \ge 1$ and let $w$ be an arbitrary run of $\run(v)$.
First assume that $v(i-1)=X\alpha\in \Gamma\Gamma^*$.
Then we have:
\begin{align*}
 & \Ex{\ms{i}_{\theta} \;\middle\vert\; \run(v)} \\
 & = \Ex{\exp(-\theta \cdot \numbsymb{w(i)} \thickdot \vu) \cdot \prod_{j=0}^{i-1} \frac{1}{g_{\Xs{j}(w)}(\theta)} \;\middle\vert\; \run(v)} \\
 & = \sum_{X\btran{p}\alpha} p\cdot \exp\left(-\theta\cdot \left(\numbsymb{w(i-1)} - \vec{1}_X +  \numbsymb{\alpha}\right) \thickdot \vu\right) \cdot \prod_{j=0}^{i-1} \frac{1}{g_{\Xs{j}(w)}(\theta)} \\
 & = \sum_{X\btran{p}\alpha} p\cdot \exp\left(-\theta\cdot \left(\numbsymb{w(i-1)}\thickdot \vu - \vu(X) +  \numbsymb{\alpha}\thickdot \vu\right)\right) \cdot \prod_{j=0}^{i-1} \frac{1}{g_{\Xs{j}(w)}(\theta)} \\
 & = \exp\left(-\theta \cdot \numbsymb{w(i-1)}\thickdot \vu\right)\cdot \sum_{X\btran{p}\alpha} p\cdot \exp\left(-\theta \cdot \left(- \vu(X) +  \numbsymb{\alpha}\thickdot \vu\right)\right) \cdot \prod_{j=0}^{i-1} \frac{1}{g_{\Xs{j}(w)}(\theta)} \\
 & = \exp\left(-\theta \cdot \numbsymb{w(i-1)}\thickdot \vu\right)\cdot g_{\Xs{i-1}(w)}(\theta) \cdot \prod_{j=0}^{i-1} \frac{1}{g_{\Xs{j}(w)}(\theta)} \\
 & = \exp\left(-\theta \cdot \numbsymb{w(i-1)}\thickdot \vu\right)\cdot \prod_{j=0}^{i-2} \frac{1}{g_{\Xs{j}(w)}(\theta)} \\
 & = \ms{i-1}_\theta(w)\,.
\end{align*}
If $v(i-1)=\varepsilon$, then for every $w\in \run(v)$ we have $\ms{i}_\theta(w)=\ms{i-1}_\theta(w)$.
Hence, $\ms{0}_\theta, \ms{1}_\theta, \ldots$ is a martingale.
\qed
\end{proof}

Since $\theta > 0$ and since $g_{\Xs{j}(w)}(\theta) \ge 1$ by Lemma~\ref{lem:g-properties}(a),
 we have $0 \le \ms{i}_\theta(w) \le 1$, so the martingale is bounded.
Since, furthermore, $\mathbf{T}_{\alpha_0}$ (we write only $\termt{}$ in the following) is finite with probability~$1$,
 it follows using Doob's Optional-Stopping Theorem (see Theorem~10.10~(ii) of~\cite{book:Williams})
 that $\ms{0}_\theta = \Ex{\ms{\termt{}}_\theta}$.
Hence we have for each $n \in \Nset$:
\begin{align*}
 & \quad   \exp(-\theta \cdot \umax \cdot n^k) \\
 & \le \exp(-\theta \cdot \vu \thickdot \numbsymb{\alpha_0}) = \ms{0}_\theta \\ 
 & = \Ex{\ms{\termt{}}_\theta} && \text{(by optional-stopping)} \\
 & = \Ex{\exp(-\theta \cdot 0) \cdot \prod_{j=0}^{\termt{}-1} \frac{1}{g_{\Xs{j}}(\theta)}} \\
 & = \Ex{\prod_{j=0}^{\termt{}-1} \frac{1}{g_{\Xs{j}}(\theta)}} \\
 & \le \Ex{\frac{1}{g_X(\theta)^\termt{}}} && \text{(for some $X \in \Gamma$)} \\
 & = \sum_{i=0}^\infty \calP(\termt{} = i) \cdot \frac{1}{g_X(\theta)^i} \\
 & \le \sum_{i=0}^{\left\lceil n^{2k+2} / (2 |\Gamma|) \right\rceil-1} \calP(\termt{} = i) \cdot 1  && \text{(Lemma~\ref{lem:g-properties}~(a))} \\
 & \quad + \sum_{i=\left\lceil n^{2k+2} / (2 |\Gamma|) \right\rceil}^\infty \calP(\termt{} = i) \cdot \frac{1}{g_X(\theta)^{ n^{2k+2} / (2 |\Gamma|) }} \\
 & = 1 - \calP(\termt{} \ge n^{2k+2}/(2 |\Gamma|)) \\
 & \quad \mbox{} + \calP(\termt{} \ge n^{2k+2}/(2 |\Gamma|)) \cdot \frac{1}{g_X(\theta)^{n^{2k+2}/(2 |\Gamma|)}}
\end{align*}
Rearranging the inequality, we obtain
\begin{equation}
 \calP(\termt{} \ge n^{2k+2}/(2 |\Gamma|)) \le \frac{1 - \exp(-\theta \cdot \umax \cdot n^{k})}{1 - g_X(\theta)^{-n^{2k+2}/(2 |\Gamma|)}} \;.
  \label{eq:before-hopital}
\end{equation}
For the following we set $\theta = n^{-(k+1)}$.
We want to give an upper bound for the right hand side of~\eqref{eq:before-hopital}.
To this end we will show:
 \begin{equation}
  \lim_{n\to\infty} \frac{\left(1 - \exp(-n^{-(k+1)} \cdot \umax \cdot n^{k})\right) \cdot n}{1 - g_X(n^{-(k+1)})^{-n^{2(k+1)}/ (2 |\Gamma|)}}
  \le \frac{1}{1 - \exp \left( - \pmin \cdot \umin^2 / (16 |\Gamma|) \right) }
  \,. \label{eq:quotient}
 \end{equation}
Combining \eqref{eq:before-hopital} with \eqref{eq:quotient}, we obtain
 \begin{align*}
  \limsup_{n \to \infty} \ n \cdot \calP(\termt{} \ge n^{2k+2}/(2 |\Gamma|))
   & \le \frac{1}{1 - \exp \left( - \pmin \cdot \umin^2 / (16 |\Gamma|) \right) } \\
   & < \frac{1}{1 - \left(1 - \frac{16}{17} \cdot \left( \pmin \cdot \umin^2 / (16 |\Gamma|) \right) \right) } \\
   & = 17 |\Gamma| / ( \pmin \cdot \umin^2 )
   \,,
 \end{align*}
which implies the proposition.

To prove~\eqref{eq:quotient}, we compute limits for the nominator and the denominator separately.
For the nominator, we use l'Hopital's rule to obtain:
 \begin{align*}
  \lim_{n\to\infty} \frac{1 - \exp(-\umax \cdot n^{-1})}{n^{-1}}
   & = \lim_{n\to\infty} \frac{-\umax \cdot n^{-2} \cdot \exp(-\umax \cdot n^{-1})}{-n^{-2}} = \umax = 1\,.
 \end{align*}
For the denominator of~\eqref{eq:quotient} we consider first the following limit:
 \begin{align*}
  & \lim_{n\to\infty} \frac1{2 |\Gamma|} \cdot n^{2(k+1)} \cdot \ln g_X(n^{-(k+1)}) \\
  & = \frac1{2 |\Gamma|} \lim_{n\to\infty} \frac{\ln g_X(n^{-(k+1)})}{n^{-2(k+1)}} \\
  & = \frac1{2 |\Gamma|} \lim_{n\to\infty} \frac{g_X'(n^{-(k+1)}) \cdot \left(-(k+1)\right) \cdot n^{-k-2}}{g_X(n^{-(k+1)}) \cdot \left( -2(k+1) \right) \cdot n^{-2k-3}}
       && \text{(l'Hopital's rule)} \\
  & = \frac1{4 |\Gamma|} \lim_{n\to\infty} \frac{g_X'(n^{-(k+1)})}{n^{-(k+1)}}
       && \text{(by Lemma~\ref{lem:g-properties}~(a))\,.} \\
\intertext{If $g_X'(0) > 0$, then the limit is $+\infty$. Otherwise, by Lemma~\ref{lem:g-properties}~(b), we have $g_X'(0) = 0$ and hence}
  & = \frac1{4 |\Gamma|} \lim_{n\to\infty} \frac{g_X''(n^{-(k+1)}) \cdot \left(-(k+1)\right) \cdot n^{-k-2}}{\left(-(k+1)\right) \cdot n^{-k-2}}
       && \text{(l'Hopital's rule)} \\
  & = \frac1{4 |\Gamma|} g_X''(0) \ge \pmin \cdot \umin^2 / (16 |\Gamma|)  && \text{(by Lemma~\ref{lem:g-properties}~(c))\,.}
 \end{align*}
This proves~\eqref{eq:quotient}
 and thus completes the proof of Proposition~\ref{prop:critical-u-progressive}.
\qed
\end{proof}

The following lemma serves as induction base for the proof of Proposition~\ref{prop:term-upper-critical}.
\begin{lemma} \label{lem:critical-strongly-connected}
  Let $\Delta$ be an almost surely terminating pBPA with stack alphabet~$\Gamma$.
  Assume that all SCCs of~$\Delta$ are bottom SCCs.
  Let $\pmin$ denote the least rule probability in~$\Delta$.
  Let $D := 17 |\Gamma| / \pmin^{3|\Gamma|}$.
  Then for each $k \in \Nset_0$ there is $n_0 \in \Nset$ such that
   \begin{align*}
     \tailprob{\alpha_0}{ n^{2k+2} / 2 } \quad & \le\quad D / n
      && \quad \text{for all $n \ge n_0$ and for all $\alpha_0 \in \Gamma^*$ with $1 \le |\alpha_0| \le n^k$.}
   \end{align*}
\end{lemma}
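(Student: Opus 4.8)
The plan is to reduce to Proposition~\ref{prop:critical-u-progressive}, via Lemmas~\ref{lem:cone-vector} and~\ref{lem:progressive}, after first unpacking the SCC structure of~$\Delta$.

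First I would use the hypothesis that every SCC of~$\Delta$ is a bottom SCC. Let $\Gamma_1,\ldots,\Gamma_m$ be the SCCs of the dependence relation. Since none of them has an outgoing edge in the DAG of SCCs, every rule $X \btran{} \alpha$ with $X \in \Gamma_j$ satisfies $\alpha \in \Gamma_j^*$, so the matrix~$A_\Delta$ is block-diagonal with blocks $A_j := A_{\Delta_j}$, where $\Delta_j$ denotes the pBPA with stack alphabet~$\Gamma_j$ and exactly those rules of~$\Delta$ whose left-hand side lies in~$\Gamma_j$. Each $\Delta_j$ is strongly connected, and its runs coincide with the runs of~$\Delta$ started in~$\Gamma_j$, hence $\Delta_j$ is almost surely terminating. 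Applying Lemma~\ref{lem:cone-vector} to each~$\Delta_j$ yields a positive vector $\vu_j \in \Rset_+^{\Gamma_j}$ with $A_j \cdot \vu_j \le \vu_j$ and (since the least rule probability of~$\Delta_j$ is at least~$\pmin$ and $\pmin \le 1$) $\umin^{(j)}/\umax^{(j)} \ge \pmin^{|\Gamma_j|} \ge \pmin^{|\Gamma|}$. Normalizing each~$\vu_j$ so that $\umax^{(j)} = 1$ and gluing them into a single vector $\vu \in \Rset_+^\Gamma$, the block-diagonal structure of~$A_\Delta$ gives $A_\Delta \cdot \vu \le \vu$, with $\umax = 1$ and $\umin \ge \pmin^{|\Gamma|}$.

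Next I would feed $\Delta$ and this~$\vu$ into Lemma~\ref{lem:progressive}, obtaining a $\vu$-progressive, almost surely terminating relaxed pBPA~$\Delta'$ over~$\Gamma$ with $A_{\Delta'} \cdot \vu \le \vu$, least rule probability at least~$\pmin^{|\Gamma|}$, and
\[
 \calP(\termt{\alpha_0} \ge a) \quad\le\quad \calP'(\termt{\alpha_0} \ge a/|\Gamma|) \qquad \text{for all $a \ge 0$ and all $\alpha_0 \in \Gamma^*$.}
\]
Now $\Delta'$ satisfies all hypotheses of Proposition~\ref{prop:critical-u-progressive} with the vector~$\vu$, so, writing $\pmin'$ for the least rule probability of~$\Delta'$ and $C := 17|\Gamma|/(\pmin' \cdot \umin^2)$, for each $k \in \Nset_0$ there is $n_0$ with $\calP'(\termt{\alpha_0} \ge n^{2k+2}/(2|\Gamma|)) \le C/n$ for all $n \ge n_0$ and all $\alpha_0 \in \Gamma^*$ with $1 \le |\alpha_0| \le n^k$. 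Since $\pmin' \ge \pmin^{|\Gamma|}$ and $\umin \ge \pmin^{|\Gamma|}$, we get $C \le 17|\Gamma|/(\pmin^{|\Gamma|}\cdot \pmin^{2|\Gamma|}) = 17|\Gamma|/\pmin^{3|\Gamma|} = D$. Taking $a = n^{2k+2}/2$ in the displayed inequality then yields
\[
 \tailprob{\alpha_0}{n^{2k+2}/2} \quad\le\quad \calP'(\termt{\alpha_0} \ge n^{2k+2}/(2|\Gamma|)) \quad\le\quad C/n \quad\le\quad D/n
\]
for all $n \ge n_0$ and all $\alpha_0 \in \Gamma^*$ with $1 \le |\alpha_0| \le n^k$, which is exactly the claim (with the same~$n_0$ as supplied by Proposition~\ref{prop:critical-u-progressive}).

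The only genuinely delicate step, and the one I would write out most carefully, is the SCC bookkeeping of the first paragraph: that "all SCCs are bottom SCCs" forces $A_\Delta$ to be block-diagonal (so the glued vector inherits $A_\Delta \cdot \vu \le \vu$ from the per-block inequalities), that each $\Delta_j$ really is an almost surely terminating, strongly connected pBPA so that Lemma~\ref{lem:cone-vector} is applicable, and that the per-block normalization $\umax^{(j)} = 1$ preserves the bound $\umin \ge \pmin^{|\Gamma|}$. Everything after that is a mechanical chaining of Lemma~\ref{lem:progressive} and Proposition~\ref{prop:critical-u-progressive} together with the constant tracking $\pmin' \ge \pmin^{|\Gamma|}$, $\umin \ge \pmin^{|\Gamma|}$, and $\pmin^{|\Gamma|}\cdot\pmin^{2|\Gamma|} = \pmin^{3|\Gamma|}$.
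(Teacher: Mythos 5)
Your proposal is correct and follows essentially the same route as the paper's own proof: decompose $\Gamma$ into its (bottom) SCCs, apply Lemma~\ref{lem:cone-vector} blockwise, normalize and glue the vectors to get $A_\Delta \cdot \vu \le \vu$ with $\umax = 1$ and $\umin \ge \pmin^{|\Gamma|}$, pass to the $\vu$-progressive relaxed pBPA of Lemma~\ref{lem:progressive}, and invoke Proposition~\ref{prop:critical-u-progressive} with the same constant bookkeeping $17|\Gamma|/(\pmin'\cdot\umin^2) \le 17|\Gamma|/\pmin^{3|\Gamma|}$. Your extra care in justifying the block-diagonal structure of $A_\Delta$ and the applicability of Lemma~\ref{lem:cone-vector} to each block is a welcome elaboration of steps the paper leaves implicit.
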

\begin{proof}
Decompose $\Gamma$ into its SCCs, say $\Gamma = \Gamma_1 \cup \cdots \cup \Gamma_s$,
 and let the pBPA~$\Delta_i$ be obtained by restricting~$\Delta$ to the $\Gamma_i$-symbols.
For each $i \in \{1, \ldots, s\}$, Lemma~\ref{lem:cone-vector} gives a vector $\vu_i \in \Rset_+^{\Gamma_i}$.
W.l.o.g.\ we can assume for each~$i$ that the largest component of~$\vu_i$ is equal to~$1$,
 because $\vu_i$ can be multiplied with any positive scalar without changing the properties guaranteed by Lemma~\ref{lem:cone-vector}.
If the vectors~$\vu_i$ are assembled (in the obvious way) to the vector $\vu \in \Rset_+^\Gamma$,
 the assertions of Lemma~\ref{lem:cone-vector} carry over;
 i.e., we have $A_\Delta \cdot \vu \le \vu$ and $\umax = 1$ and $\umin \ge \pmin^{|\Gamma|}$.
Let $\Delta'$ be the $\vu$-progressive relaxed pBPA from Lemma~\ref{lem:progressive},
 and denote by~$\calP'$ and $\pmin'$ its associated probability measure and least rule probability, respectively.
Then we have:
\begin{align*}
 \tailprob{\alpha_0}{ n^{2k+2} / 2 }
 & \le \calP'( \termt{\alpha_0} \ge n^{2k+2} / (2 |\Gamma|) )   && \text{(by Lemma~\ref{lem:progressive})} \\
 & \le 17 |\Gamma| / ( \pmin' \cdot \umin^2 \cdot n )           && \text{(by Proposition~\ref{prop:critical-u-progressive})} \\
 & \le 17 |\Gamma| / ( \pmin' \cdot \pmin^{2|\Gamma|} \cdot n ) && \text{(as argued above)} \\
 & \le 17 |\Gamma| / ( \pmin^{3|\Gamma|} \cdot n )              && \text{(by Lemma~\ref{lem:progressive})\,.}
\end{align*}
\qed
\end{proof}

\newcommand{\GammaH}{\Gamma_{\it high}}
\newcommand{\GammaL}{\Gamma_{\it low}}
\newcommand{\DeltaH}{\Delta_{\it high}}
\newcommand{\DeltaL}{\Delta_{\it low}}
Now we are ready to prove Proposition~\ref{prop:term-upper-critical}, which is restated here.
\begin{qproposition}{\ref{prop:term-upper-critical}}
 \stmtproptermuppercritical
\end{qproposition}
\begin{proof}
 Let $D$ be the $D$ from Lemma~\ref{lem:critical-strongly-connected}.
 We show by induction on~$h$:
 \begin{equation}
   \tailprob{X_0}{n^{2^{h+1}-2}} \le \frac{h D}{n} \quad \text{for almost all $n\in\Nset$.} \label{eq:ct-induction}
 \end{equation}
 Note that \eqref{eq:ct-induction} implies the proposition.
 The case $h=1$ (induction base) is implied by Lemma~\ref{lem:critical-strongly-connected}.
 Let $h \ge 2$.
 Partition $\Gamma$ into $\GammaH \cup \GammaL$ such that $\GammaL$
  contains the variables of the SCCs of depth~$h$ in the DAG of SCCs,
  and $\GammaH$ contains the other variables (in ``higher'' SCCs).
 If $X_0 \in \GammaL$, then we can restrict~$\Delta$ to the variables that are in the same SCC as~$X_0$,
  and Lemma~\ref{lem:critical-strongly-connected} implies~\eqref{eq:ct-induction}.
 So we can assume $X_0 \in \GammaH$.

 Assume for a moment that $\tailprob{X_0}{n^{2^{h+1}-2}}$ holds for a run~$w \in \run(X_0)$;
  i.e., we have:
 \begin{align*}
  n^{2^{h+1}-2} \quad & \le \quad |\{ i \in \Nset_0 \mid w(i) \in \Gamma\Gamma^*\}| \\
   & = \quad |\{ i \in \Nset_0 \mid w(i) \in \GammaH\Gamma^*\}| + |\{ i \in \Nset_0 \mid w(i) \in \GammaL\Gamma^*\}|\,.
 \end{align*}
 It follows that one of the following events is true for~$w$:
 \begin{itemize}
  \item[(a)]
   At least $n^{2^h-2}$ steps in~$w$ have a $\GammaH$-symbol on top of the stack.
   More formally,
    \[
     |\{ i \in \Nset_0 \mid w(i) \in \GammaH\Gamma^*\}| \ge n^{2^h-2}\,.
    \]
  \item[(b)]
   Event~(a) is not true, but at least $n^{2^{h+1}-2} - n^{2^{h}-2}$ steps in~$w$ have a $\GammaL$-symbol on top of the stack.
   More formally,
    \begin{align*}
     |\{ i \in \Nset_0 \mid w(i) \in \GammaH\Gamma^*\}| & < n^{2^h-2} \quad \text{and} \\
     |\{ i \in \Nset_0 \mid w(i) \in \GammaL\Gamma^*\}| & \ge n^{2^{h+1}-2} - n^{2^{h}-2}\,.
    \end{align*}
 \end{itemize}
 In order to give bounds on the probabilities of events (a) and~(b),
  it is convenient to ``reshuffle'' the execution order of runs in the following way:
 Whenever a rule $X \btran{} \alpha$ is executed, we do not replace the $X$-symbol on top of the stack by~$\alpha$,
  but instead we push only the $\GammaH$-symbols in~$\alpha$ on top of the stack, whereas the $\GammaL$-symbols in~$\alpha$
  are added to the {\em bottom} of the stack.
 Since $\Delta$ is a pBPA and thus does not have control states,
  the reshuffling of the execution order does not influence the distribution of the termination time.
 The advantage of this execution order is that each run can be decomposed into two phases:
 \begin{itemize}
  \item[(1)]
   In the first phase, the symbol on the top of the stack is always a $\GammaH$-symbol.
   When rules are executed, $\GammaL$-symbols may be produced, which are added to the bottom of the stack.
  \item[(2)]
   In the second phase, the stack consists of $\GammaL$-symbols exclusively.
   Notice that by definition of $\GammaL$, no new $\GammaH$-symbols can be produced.
 \end{itemize}
 In terms of those phases, the above events (a) and~(b) can be reformulated as follows:
 \begin{itemize}
  \item[(a)]
   The first phase of~$w$ consists of at least $n^{2^h-2}$ steps.
   The probability of this event is equal to
    \[
     \calP_{\DeltaH}(\termt{X_0} \ge n^{2^h-2})\,,
    \]
   where $\DeltaH$ is the pBPA obtained from~$\Delta$ by deleting all $\GammaL$-symbols
   from the right hand sides of the rules and deleting all rules with $\GammaL$-symbols on the left hand side,
   and $\calP_{\DeltaH}$ is its associated probability measure.
  \item[(b)]
   The first phase of~$w$ consists of fewer than $n^{2^h-2}$ steps
    (which implies that at most $n^{2^h-2}$ $\GammaL$-symbols are produced during the first phase),
    and the second phase consists of at least $n^{2^{h+1}-2} - n^{2^{h}-2}$ steps.
   Therefore, the probability of the event~(b) is at most
    \[
     \max \left\{ \calP_{\DeltaL}(\termt{\alpha_0} \ge n^{2^{h+1}-2} - n^{2^h-2})
      \;\middle\vert\; \alpha_0 \in \GammaL^*,\; 1 \le |\alpha_0| \le n^{2^h-2} \right\}\,,
    \]
    where $\DeltaL$ is the pBPA~$\Delta$ restricted to the $\GammaL$-symbols,
     and $\calP_{\DeltaL}$ is its associated probability measure.
   Notice that $n^{2^{h+1}-2} - n^{2^h-2} \ge n^{2^{h+1}-2} / 2$ for large enough~$n$.
   Furthermore, by the definition of~$\GammaL$, the SCCs of~$\DeltaL$ are all bottom SCCs.
   Hence, by Lemma~\ref{lem:critical-strongly-connected}, the above maximum is at most~$D/n$.
 \end{itemize}
 Summing up, we have for almost all $n \in \Nset$:
 \begin{align*}
  \tailprob{X_0}{n^{2^{h+1}-2}}
  & \le \calP(\text{event~(a)}) + \calP(\text{event~(b)}) \\
  & \le \calP_{\DeltaH}(\termt{X_0} \ge n^{2^h-2})   +   D/n     && \text{(as argued above)} \\
  & \le \frac{(h-1)D}{n} + \frac{D}{n} = \frac{hD}{n}           && \text{(by the induction hypothesis).}
 \end{align*}
 This completes the induction proof.
\qed
\end{proof}

\subsection{Proof of Proposition~\ref{prop:term-lower-critical}} \label{app:term-lower-critical}

The proof of Proposition~\ref{prop:term-lower-critical} is similar to the proof of Proposition~\ref{prop:term-upper-critical}
 from the previous subsection.
Here is a restatement of Proposition~\ref{prop:term-lower-critical}.
\begin{qproposition}{\ref{prop:term-lower-critical}}
 \stmtproptermlowercritical
\end{qproposition}

\begin{proof}
 For a square matrix~$M$ denote by $\rho(M)$ the spectral radius of~$M$, i.e., the greatest absolute value of its eigenvectors.
 Let $A_\Delta$ be the matrix from the previous subsection.
 We claim:
 \begin{equation}
   \rho(A_\Delta) = 1\,. \label{eq:spectral-radius}
 \end{equation}
 The assumption that $\Delta$ is almost surely terminating implies that $\rho(A_\Delta) \le 1$,
  see, e.g., Section 8.1 of~\cite{EY:RMC-SG-equations-JACM}.
 Assume for a contradiction that $\rho(A_\Delta) < 1$.
 Using standard theory of nonnegative matrices (see, e.g., \cite{book:BermanP}),
  this implies that the matrix inverse $B := (I - A_\Delta)^{-1}$ (here, $I$ denotes the identity matrix) exists; i.e., $B$ is finite in all components.
 It is shown in~\cite{EKM:prob-PDA-expectations} that $E[X_0] = (B \cdot \vone)(X_0)$ (here, $\vone$ denotes the vector with $\vone(X) = 1$ for all $X$).
 This is a contradiction to our assumption that $E[X_0] = \infty$.
 Hence, \eqref{eq:spectral-radius} is proved.

 It follows from~\eqref{eq:spectral-radius} and standard theory of nonnegative matrices~\cite{book:BermanP}
  that $A_\Delta$ has a principal submatrix, say $A'$, which is irreducible and satisfies $\rho(A') = 1$.
 Let $\Gamma'$ be the subset of~$\Gamma$ such that $A'$ is obtained from~$A$ by deleting all rows and columns
  which are not indexed by~$\Gamma'$.
 Let $\Delta'$ be the pBPA with stack alphabet~$\Gamma'$ such that $\Delta'$ is obtained from~$\Delta$ by
  removing all rules with symbols from $\Gamma \setminus \Gamma'$ on the left hand side and
  removing all symbols from $\Gamma \setminus \Gamma'$ from all right hand sides.
 Clearly, $A_{\Delta'} = A'$, so $\rho(A_{\Delta'}) = 1$ and $A_{\Delta'}$ is irreducible.
 Since $\Delta'$ is a sub-pBPA of~$\Delta$ and $X_0$ depends on all symbols in~$\Gamma'$,
  it suffices to prove the proposition for~$\Delta'$ and an arbitrary start symbol $X_0' \in \Gamma'$.

 Therefore, w.l.o.g.\ we can assume in the following that $A_\Delta = A$ is irreducible.
 Then it follows, using~\eqref{eq:spectral-radius} and Perron-Frobenius theory~\cite{book:BermanP},
  that there is a positive vector $\vu \in \Rset_+^\Gamma$ such that $A \cdot \vu = \vu$.
 W.l.o.g.\ we assume $\vu(X_0) = 1$.
 Using Lemma~\ref{lem:progressive} we can assume w.l.o.g.\ that $\Delta$ is $\vu$-progressive.
 (The pBPA~$\Delta$ may be relaxed.)

 As in the proof of Proposition~\ref{prop:critical-u-progressive}, for each $X \in \Gamma$ we define a function $g_X: \Rset \to \Rset$ by setting
 \[
  g_X(\theta) := \sum_{X\btran{p} \alpha} p \cdot \exp(-\theta \cdot (-\vu(X) + \numbsymb{\alpha} \thickdot \vu))\,.
 \]
 The following lemma states some properties of~$g_X$.
 \begin{lemma} \label{lem:g-properties-lower}
   The following holds for all $X \in \Gamma$:
   \begin{itemize}
     \item[(a)]
      For all $\theta > 0$ we have $1 = g_X(0) < g_X(\theta)$.
     \item[(b)]
      For all $\theta > 0$ we have $0 = g_X'(0) < g_X'(\theta)$.
     \item[(c)]
      For all $\theta \ge 0$ we have $0 < g_X''(\theta)$.
     \item[(d)]
      There is $c_2 > 0$ such that for all $0 < \theta \le 1$ we have $g_X'(\theta) \le c_2 \theta$.
     \item[(e)]
      There is $c_3 > 1$ such that for all $n \in \Nset$ we have $g_X(1/\sqrt{n})^n \ge c_3$.
     \item[(f)]
      There is $c_4 > 0$ such that for all $n \in \Nset$ we have $\frac{1/n}{1 - 1/g_X(1/\sqrt{n})} \le c_4$.
   \end{itemize}
 \end{lemma}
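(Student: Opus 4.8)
The plan is to derive all six properties from three elementary facts about~$g_X$, which I would establish first. Write $\delta(X,\alpha) := -\vu(X) + \numbsymb{\alpha}\thickdot\vu$ for the change in $\vu$-weight caused by the rule $X\btran{p}\alpha$, so that $g_X(\theta) = \sum_{X\btran{p}\alpha} p\cdot\exp(-\theta\,\delta(X,\alpha))$ is a finite, positively weighted sum of exponentials; in particular $g_X$ is smooth and may be differentiated term by term. I would then record: (i) $g_X(0) = \sum_{X\btran{p}\alpha} p = 1$; (ii) $g_X'(0) = \sum_{X\btran{p}\alpha} p\cdot(-\delta(X,\alpha)) = \vu(X) - \sum_{X\btran{p}\alpha} p\cdot\numbsymb{\alpha}\thickdot\vu = \vu(X) - (A_\Delta\cdot\vu)(X) = 0$, the last step using the hypothesis $A_\Delta\cdot\vu = \vu$ (this is precisely where the proof differs from Lemma~\ref{lem:g-properties}, where only $A\cdot\vu\le\vu$ was available and hence only $g_X'(0)\ge 0$); and (iii) $g_X''(\theta) = \sum_{X\btran{p}\alpha} p\cdot\delta(X,\alpha)^2\exp(-\theta\,\delta(X,\alpha)) > 0$ for every $\theta\in\Rset$, since $\Delta$ is $\vu$-progressive and therefore possesses a rule $X\btran{p}\alpha$ with $|\delta(X,\alpha)|\ge\umin/2 > 0$, which makes the sum strictly positive.

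Given (i)--(iii), parts (a), (b), (c) are immediate. Part (c) is exactly (iii). For (b): $g_X''>0$ shows $g_X'$ is strictly increasing, so $g_X'(\theta) > g_X'(0) = 0$ for $\theta>0$. For (a): since $g_X'>0$ on $(0,\infty)$, $g_X$ is strictly increasing there, whence $g_X(\theta) > g_X(0) = 1$ for $\theta>0$. For (d) I would use that $g_X''$ is continuous on the compact interval $[0,1]$, hence bounded; setting $c_2 := \max_{X\in\Gamma}\max_{\theta\in[0,1]} g_X''(\theta)$ (a finite maximum, as $\Gamma$ is finite), integrating $g_X''$ and using $g_X'(0)=0$ gives $g_X'(\theta) = \int_0^\theta g_X''(s)\,ds \le c_2\,\theta$ for all $\theta\in(0,1]$ and all $X\in\Gamma$.

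For (e) and (f) I would expand $g_X$ at $\theta = 0$: by Taylor's theorem together with (i), (ii), (iii) we have $g_X(\theta) = 1 + \tfrac12 g_X''(0)\theta^2 + o(\theta^2)$, hence $\ln g_X(\theta) = \tfrac12 g_X''(0)\theta^2 + o(\theta^2)$ and $1 - 1/g_X(\theta) = (g_X(\theta)-1)/g_X(\theta) = \tfrac12 g_X''(0)\theta^2 + o(\theta^2)$ as $\theta\to 0^+$. Substituting $\theta = 1/\sqrt n$ yields $n\ln g_X(1/\sqrt n)\to\tfrac12 g_X''(0) > 0$ and $\frac{1/n}{1 - 1/g_X(1/\sqrt n)}\to 2/g_X''(0)\in(0,\infty)$ as $n\to\infty$. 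For (e): each term $g_X(1/\sqrt n)^n = \exp\bigl(n\ln g_X(1/\sqrt n)\bigr)$ exceeds $1$ by (a), and the sequence converges to $\exp(\tfrac12 g_X''(0)) > 1$; a convergent sequence of reals all exceeding $1$ with limit exceeding $1$ has infimum exceeding $1$, so $c_3 := \min_{X\in\Gamma}\inf_{n\in\Nset} g_X(1/\sqrt n)^n > 1$ works. For (f): each term $\frac{1/n}{1 - 1/g_X(1/\sqrt n)}$ is positive and finite (the denominator is positive since $g_X(1/\sqrt n) > 1$), and the sequence converges, hence is bounded, so $c_4 := \max_{X\in\Gamma}\sup_{n\in\Nset}\frac{1/n}{1 - 1/g_X(1/\sqrt n)} < \infty$ works.

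The computation is essentially routine calculus; I do not foresee a genuine obstacle. The two points deserving a little care are that $g_X'(0)$ vanishes \emph{exactly} — this is where the eigenvector equation $A_\Delta\cdot\vu = \vu$ (rather than the mere inequality of Lemma~\ref{lem:g-properties}) is used — and that (e) and (f) demand bounds valid for \emph{all} $n$, which follow from the asymptotics only because the individual terms are already strictly positive, resp.\ finite, and the limiting value is strictly positive, resp.\ finite, so that the infimum, resp.\ supremum, over $n$ cannot reach $1$, resp.\ $\infty$.
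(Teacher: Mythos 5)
Your proof is correct and follows essentially the same route as the paper's: the same three base facts ($g_X(0)=1$; $g_X'(0)=0$, which uses $A_\Delta\cdot\vu=\vu$; and $g_X''>0$ from $\vu$-progressiveness) and the same limiting values $\tfrac12 g_X''(0)$ and $2/g_X''(0)$ for (e) and (f), combined with the same ``each term is $>1$ resp.\ finite, so the inf resp.\ sup is controlled'' step. The only difference is cosmetic: you extract the local quadratic behaviour of $g_X$ at $0$ via Taylor expansion and by integrating the bounded second derivative, where the paper uses repeated applications of l'Hopital's rule.
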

 \begin{proof}[of the lemma]
  The proof of items (a)--(c) follows exactly the proof of Lemma~\ref{lem:g-properties} and is therefore omitted.
  (For the equality $0 = g_X'(0)$ in~(b) one uses $A \cdot \vu = \vu$.)
  \begin{itemize}
   \item[(d)]
    It suffices to prove that $g_X'(\theta) / \theta$ is bounded for $\theta \to 0$.
    Using l'Hopital's rule we have $\lim_{\theta \to 0} g_X'(\theta) / \theta = g_X''(0) > 0$.
   \item[(e)]
    Clearly, we have $g_X(1/\sqrt{n})^n > 1$ for all~$n$.
    Furthermore, we have:
    \begin{align*}
     \lim_{n \to \infty} \ln g_X(1/\sqrt{n})^n
     & = \lim_{n \to \infty} \frac{\ln g_X(n^{-1/2})}{1/n} \\
     & = \frac12 \lim_{n \to \infty} \frac{g_X'(n^{-1/2})}{n^{-1/2}} && \text{(l'Hopital's rule)} \\
     & = \frac{g_X''(0)}{2} && \text{(l'Hopital's rule)} \\
     & > 0 && \text{(by~(c))}
    \end{align*}
    Hence the claim follows.
   \item[(f)]
    The claim follows again from l'Hopital's rule:
    \begin{align*}
     \lim_{n \to \infty} \frac{1/n}{1 - 1/g_X(n^{-1/2})}
     & = \lim_{n \to \infty} \frac{-1/n^2}{(1/g_X(n^{-1/2}))^2 \cdot g_X'(n^{-1/2}) \cdot (-1/2) n^{-3/2}} \\
     & = \lim_{n \to \infty} \frac{2 n^{-1/2}}{g_X'(n^{-1/2})} = \frac{2}{g_X''(0)} < \infty
    \end{align*}
  \end{itemize}
 This completes the proof of the lemma.
 \qed
 \end{proof}

 Let in the following $\theta > 0$.
 As in the proof of Proposition~\ref{prop:critical-u-progressive},
  given a run $w\in \run(X_0)$ and $i\geq 0$, we write $\Xs{i}(w)$ for the symbol $X \in \Gamma$ for which $w(i) = X \alpha$.
 Define
  \[
   \ms{i}_\theta(w) =
    \begin{cases} \displaystyle
            \exp(-\theta \cdot \numbsymb{w(i)} \thickdot \vu) \cdot \prod_{j=0}^{i-1} \frac{1}{g_{\Xs{j}(w)}(\theta)}
                                                        & \text{if $i=0$ or $w(i-1) \ne \varepsilon$} \\
            \ms{i-1}_\theta(w)                          & \text{otherwise} \\
    \end{cases}
  \]
 As in Lemma~\ref{lem:exponential-martingale}, one can show that the sequence $\ms{0}_\theta, \ms{1}_\theta, \ldots$ is a martingale.
 As in the proof of Proposition~\ref{prop:critical-u-progressive},
  Doob's Optional-Stopping Theorem implies $\exp(-\theta) = \ms{0}_\theta = \Ex{\ms{\termt{X_0}}_\theta}$.
 Hence we have for each $n \in \Nset$ (writing $\termt{}$ for~$\termt{X_0}$):
 \begin{align}
        \exp(-\theta)
  & = \Ex{\ms{\termt{}}_\theta} && \text{(by optional-stopping)} \nonumber\\
  & = \Ex{\exp(-\theta \cdot 0) \cdot \prod_{j=0}^{\termt{}-1} \frac{1}{g_{\Xs{j}}(\theta)}} \nonumber\\
  & = \Ex{\prod_{j=0}^{\termt{}-1} \frac{1}{g_{\Xs{j}}(\theta)}} \nonumber
\intertext{Taking, on both sides, the derivative with respect to~$\theta$ yields}
        \exp(-\theta)
  & \le \sum_{i=1}^\infty i \cdot \calP(\termt{} = i) \cdot \frac{g_{1,\theta}'(\theta)}{g_{0,\theta}(\theta)^{i+1}}\,, \label{eq:lower-1}
 \end{align}
 where $g_{0,\theta} = g_X$ and $g_{1,\theta} = g_Y$ for some $X,Y \in \Gamma$ possibly depending on~$\theta$.
 The following lemma bounds an ``upper'' subseries of the right-hand-side of~\eqref{eq:lower-1}.
 \begin{lemma} \label{lem:upper-subseries}
  For all $\varepsilon > 0$ there is $a \in \Nset$ such that for all $n \in \Nset$ and $\theta = 1/\sqrt{n}$ we have
  \[
   \sum_{i=a n + 1}^\infty i \cdot \calP(\termt{} = i) \cdot \frac{g_{1,\theta}'(\theta)}{g_{0,\theta}(\theta)^{i+1}} \quad \le \quad \varepsilon\,.
  \]
 \end{lemma}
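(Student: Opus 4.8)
The strategy is to bound the tail series by a convergent geometric‑type series whose value does not depend on~$n$, using the upper tail bound for the strongly connected case together with the analytic estimates on $g_X$ collected in Lemma~\ref{lem:g-properties-lower}.

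First I would extract the two ingredients. From Lemma~\ref{lem:g-properties-lower}~(d) and~(e), applied to each of the finitely many stack symbols and taking the worst constants, there are $c_2 > 0$ and $c_3 > 1$ such that for all $n \in \Nset$ and $\theta = 1/\sqrt n \le 1$ we have $g_{1,\theta}'(\theta) \le c_2/\sqrt n$ (using also $g_{1,\theta}'(\theta) \ge 0$ from~(b)) and $g_{0,\theta}(\theta) \ge c_3^{1/n}$, hence $g_{0,\theta}(\theta)^{i+1} \ge c_3^{(i+1)/n}$. From the upper bound of Theorem~\ref{thm:termination}~(3) in the strongly connected case (Proposition~\ref{prop:term-upper-critical} with $h=1$), which transfers to the relaxed pBPA~$\Delta$ at hand via Lemma~\ref{lem:progressive} (relaxation does not increase tail probabilities), there are $D > 0$ and $n_0 \in \Nset$ with $\calP(\termt{} \ge m) \le D/\sqrt m$ for all $m \ge n_0$. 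Since all summands are nonnegative, the series may be regrouped freely.

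Second I would decompose the tail $\sum_{i = an+1}^\infty$ into the blocks $B_k := \{kn+1, \ldots, (k+1)n\}$ for $k \ge a$. On $B_k$ one has $i \le (k+1)n$ and $g_{0,\theta}(\theta)^{i+1} \ge c_3^{(i+1)/n} \ge c_3^{k}$; moreover, summing the probabilities over the whole block gives $\sum_{i \in B_k} \calP(\termt{} = i) \le \calP(\termt{} \ge kn)$, and if $a \ge n_0$ then $kn \ge a \ge n_0$ for every $k \ge a$, so $\calP(\termt{} \ge kn) \le D/\sqrt{kn}$. Combining these,
\begin{equation*}
 \sum_{i \in B_k} i \cdot \calP(\termt{} = i) \cdot \frac{g_{1,\theta}'(\theta)}{g_{0,\theta}(\theta)^{i+1}}
 \;\le\; (k+1)n \cdot \frac{D}{\sqrt{kn}} \cdot \frac{c_2/\sqrt n}{c_3^k}
 \;=\; \frac{c_2 D\,(k+1)}{\sqrt k\, c_3^k}\,,
\end{equation*}
and the crucial point is that collapsing the probabilities over a length‑$n$ block into a single tail probability before estimating makes all powers of~$n$ cancel.

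Finally, since $c_3 > 1$, the series $\sum_{k \ge 1} c_2 D (k+1) k^{-1/2} c_3^{-k}$ converges, so its tail tends to~$0$; choosing $a \ge n_0$ large enough that $\sum_{k \ge a} c_2 D (k+1) k^{-1/2} c_3^{-k} \le \varepsilon$ yields the lemma, uniformly in~$n$. The main obstacle is exactly this uniformity in~$n$: the polynomial upper tail bound is only available above the threshold~$n_0$, which forces the block decomposition to start at $a \ge n_0$ so that every block lies in the valid range; and one must notice that a naive termwise estimate ($i\cdot\calP(\termt{}=i) \le D\sqrt i$ by the same upper bound) does \emph{not} suffice here, because $\sum_i \sqrt i\, c_3^{-i/n}$ grows like $n^{3/2}$ and would leave a factor~$n$ after multiplying by $c_2/\sqrt n$ — it is the block‑wise summation of probabilities that removes this divergence.
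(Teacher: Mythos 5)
Your proof is correct. It relies on exactly the same three ingredients as the paper's argument --- the derivative bound $g'(\theta)\le c_2\theta$ and the uniform growth bound $g(1/\sqrt n)^n\ge c_3>1$ from Lemma~\ref{lem:g-properties-lower}~(d),(e), together with the $\calP(\termt{}\ge m)\le D/\sqrt m$ upper tail bound inherited from Proposition~\ref{prop:term-upper-critical} --- but it organizes the summation differently. The paper performs an Abel-type rearrangement (writing $i=\sum_{j=0}^{i-1}1$ and swapping the order of summation), which converts the series into two pieces $q_1$ and $q_2$, each bounded by an expression of the form $\mathrm{const}\cdot a^{\pm 1/2}/c_3^{a}$ that is independent of~$n$ and vanishes as $a\to\infty$; you instead partition the index range into length-$n$ blocks and bound each block's contribution by $c_2D(k{+}1)/(\sqrt k\,c_3^k)$, so that the tail of a convergent, $n$-independent series does the job. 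Both devices serve the same purpose --- pairing the linear factor~$i$ with a \emph{tail} probability rather than a point probability so that the powers of~$n$ cancel --- and your closing remark correctly identifies why the naive termwise estimate $i\cdot\calP(\termt{}=i)\le D\sqrt i$ would leave a stray factor of~$n$. Your handling of the threshold~$n_0$ (forcing $a\ge n_0$ so that every block index $kn\ge n_0$) is slightly more careful than the paper's, which implicitly absorbs the finitely many small values of~$k$ into the constant~$c_1$; either is fine.
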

 \begin{proof}[of the lemma]
 By rearranging the series we get for all $n \in \Nset$ and $\theta = 1/\sqrt{n}$:
 \begin{align*}
   & \sum_{i=a n + 1}^\infty i \cdot \calP(\termt{} = i) \cdot \frac{g_{1,\theta}'(\theta)}{g_{0,\theta}(\theta)^{i+1}} \\
   & \le \sum_{i=0}^{a n - 1} \frac{\calP(\termt{} > a n) \cdot g_{1,\theta}'(\theta)}{g_{0,\theta}(\theta)^{a n + 2}}
       + \sum_{i=a n}^\infty \frac{\calP(\termt{} > i) \cdot g_{1,\theta}'(\theta)}{g_{0,\theta}(\theta)^{i + 2}} \\
   & \le \underbrace{\frac{a n \cdot \calP(\termt{} > a n) \cdot g_{1,\theta}'(\theta)}{g_{0,\theta}(\theta)^{a n}}}_{=:q_1}
       + \underbrace{\sum_{i=a n}^\infty \frac{\calP(\termt{} > i) \cdot g_{1,\theta}'(\theta)}{g_{0,\theta}(\theta)^{i}}}_{=:q_2} \\
 \end{align*}
 We bound $q_1$ and~$q_2$ separately.
 By Proposition~\ref{prop:term-upper-critical} there is $c_1 > 0$ such that $\calP(\termt{} > k) \le c_1/\sqrt{k}$.
 Hence we have, using Lemma~\ref{lem:g-properties-lower}~(d),~(e):
 \begin{align*}
  q_1 & \le \frac{\sqrt{a n} \cdot c_1 \cdot c_2 / \sqrt{n}}{c_3^a} \le \frac{c_1 c_2 \sqrt{a}}{c_3^a}\,, && \text{and similarly,} \\
  q_2 & \le \frac{c_1}{\sqrt{a n}} \cdot \frac{c_2}{\sqrt{n}} \cdot \sum_{i=a n}^\infty \frac{1}{g_{0,\theta}(\theta)^i} \\
      & = \frac{c_1 c_2}{\sqrt{a} \cdot n \cdot g_{0,\theta}(\theta)^{a n} \cdot \left( 1 - 1/g_{0,\theta}(\theta) \right)} \\
      & \le \frac{c_1 c_2 c_4}{\sqrt{a} \cdot c_3^a} && \text{(by Lemma~\ref{lem:g-properties-lower}~(e),~(f))\,.}
 \end{align*}
 These bounds on $q_1$ and $q_2$ can be made arbitrarily small by choosing~$a$ large enough.
 This completes the proof of the lemma.
 \qed
 \end{proof}
 This lemma implies a first lower bound on the distribution of~$\termt{}$:
 \begin{lemma} \label{lem:lower-first-bound}
  For any $c>0$ there is $s \in \Nset$ such that for all $n \in \Nset$ we have:
  \[
   \sum_{i=1}^{s n} i \cdot \calP(\termt{} = i) \ge c \sqrt{n}\,.
  \]
 \end{lemma}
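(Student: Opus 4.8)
The plan is to read off Lemma~\ref{lem:lower-first-bound} from inequality~\eqref{eq:lower-1} in three moves: discard the tail $i>an$ with the help of Lemma~\ref{lem:upper-subseries}, bound the surviving finite sum crudely, and then amplify the resulting fixed constant to an arbitrary~$c$ by rescaling~$n$. Concretely, I would fix $\varepsilon:=1/4$ and let $a=a(\varepsilon)\in\Nset$ be the integer provided by Lemma~\ref{lem:upper-subseries}; specializing~\eqref{eq:lower-1} to $\theta=1/\sqrt m$ for an arbitrary $m\in\Nset$ and subtracting the tail bounded by Lemma~\ref{lem:upper-subseries} (applied with $n:=m$) gives
\[
 \exp(-1/\sqrt m)-\tfrac14\ \le\ \sum_{i=1}^{am} i\cdot\calP(\termt{}=i)\cdot\frac{g_{1,\theta}'(\theta)}{g_{0,\theta}(\theta)^{i+1}}\,.
\]
Next I would bound each summand from above using Lemma~\ref{lem:g-properties-lower}(a), which gives $g_{0,\theta}(\theta)>1$ and hence $g_{0,\theta}(\theta)^{-(i+1)}\le1$, and Lemma~\ref{lem:g-properties-lower}(d), which gives $g_{1,\theta}'(\theta)\le c_2\theta=c_2/\sqrt m$ with $c_2$ taken uniform over the finitely many symbols of~$\Gamma$. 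Since $\exp(-1/\sqrt m)\ge 1-1/\sqrt m\ge 1-\tfrac14$ whenever $m\ge16$, this yields $\sum_{i=1}^{am} i\cdot\calP(\termt{}=i)\ \ge\ \sqrt m/(2c_2)$ for all $m\ge16$.

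To obtain the bound with an arbitrary prefactor~$c$, given $c>0$ I would pick an integer $k\ge 2cc_2$ and set $s:=ak^2$. For every $n\in\Nset$ with $k^2n\ge16$, applying the estimate just obtained with the integer $m:=k^2n$ and noting $sn=am$ gives
\[
 \sum_{i=1}^{sn} i\cdot\calP(\termt{}=i)\ =\ \sum_{i=1}^{am} i\cdot\calP(\termt{}=i)\ \ge\ \frac{\sqrt{k^2n}}{2c_2}\ =\ \frac{k}{2c_2}\sqrt n\ \ge\ c\sqrt n\,.
\]
For the finitely many remaining values of~$n$ (those with $k^2n<16$) I would argue separately: since $E[X_0]=\sum_{i\ge1} i\cdot\calP(\termt{}=i)=\infty$ and $\termt{}$ is almost surely finite, for each such $n$ there is $K_n\in\Nset$ with $\sum_{i=1}^{K_n} i\cdot\calP(\termt{}=i)\ge c\sqrt n$; replacing $s$ by $\max\{ak^2,\ \max_n\lceil K_n/n\rceil\}$, with the inner maximum over the finitely many exceptional~$n$, settles these cases while, by monotonicity of the partial sums in~$s$, leaving the previous display intact.

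The only step needing a genuine idea is the rescaling: the straightforward use of~\eqref{eq:lower-1} only delivers $\sum_{i=1}^{an} i\cdot\calP(\termt{}=i)\ge \sqrt n/(2c_2)$ with the fixed constant $1/(2c_2)$ pinned down by~$\Delta$, and one has to observe that feeding $m=k^2n$ (which remains an integer) into the estimate multiplies that constant by the free factor~$k$. Everything else — the uniformity of the constants $a$ and $c_2$ over the finite alphabet~$\Gamma$, and the small-$n$ cleanup from $E[X_0]=\infty$ — is routine.
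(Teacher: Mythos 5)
Your proposal is correct and follows essentially the same route as the paper's own proof: truncate~\eqref{eq:lower-1} using Lemma~\ref{lem:upper-subseries} with a fixed~$\varepsilon$, bound the surviving finite sum via $g_{0,\theta}(\theta)\ge 1$ and $g_{1,\theta}'(\theta)\le c_2\theta$ to get a lower bound $\mathrm{const}\cdot\sqrt{n}$, and then amplify the constant by substituting $m=k^2n$ with $s=ak^2$. The only difference is your choice $\varepsilon=1/4$, which forces a separate (but valid, via $E[X_0]=\infty$) treatment of finitely many small~$n$; the paper instead takes $\varepsilon=\exp(-1)/2$, so that $\exp(-\theta)-\varepsilon\ge\exp(-1)-\varepsilon=\varepsilon>0$ holds for every $n\ge 1$ (as $\theta=1/\sqrt{n}\le 1$) and no exceptional cases arise.
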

 \begin{proof}[of the lemma]
  Let $a \in \Nset$ be the number from Lemma~\ref{lem:upper-subseries} for $\varepsilon = \exp(-1)/2$.
  For all $n \in \Nset$ and $\theta = 1/\sqrt{n}$ we have:
  \begin{align*}
    & g_{1,\theta}'(\theta) \cdot \sum_{i=1}^{a n}  i \cdot \calP(\termt{} = i) \\
    & \ge \sum_{i=1}^{a n}  i \cdot \calP(\termt{} = i) \cdot \frac{g_{1,\theta}'(\theta)}{g_{0,\theta}(\theta)^{i+1}} \\
    & \ge \exp(-\theta) - \varepsilon && \text{(by~\eqref{eq:lower-1} and Lemma~\ref{lem:upper-subseries})} \\
    & \ge \exp(-1) - \varepsilon = \varepsilon && \text{(by the choice of~$\varepsilon$),}
  \end{align*}
  so, with Lemma~\ref{lem:g-properties-lower}~(d) we have for all $n \in \Nset$:
  \[
   \sum_{i=1}^{a n} i \cdot \calP(\termt{} = i) \ge \frac{\varepsilon}{c_2} \sqrt{n}\,.
  \]
  For the given number~$c>0$, choose $s := a \lceil c c_2 / \varepsilon \rceil^2$.
  Then it follows for all $m \in \Nset$:
  \[
   \sum_{i=1}^{s m}  i \cdot \calP(\termt{} = i) \ge c \sqrt{m}\,,
  \]
  which proves the lemma.
 \qed
 \end{proof}

 Now we can complete the proof of the proposition.
 By Proposition~\ref{prop:term-upper-critical} there is $c_1 > 0$ such that $\calP(\termt{} > n) \le c_1/\sqrt{n}$ for all $n \in \Nset$.
 By Lemma~\ref{lem:lower-first-bound}, there is $s \in \Nset$ such that
 \[
  \sum_{i=1}^{s n} i \cdot \calP(\termt{} = i) \ge (2 c_1 + 2) \sqrt{n} \quad \text{for all $n \in \Nset$.}
 \]
 We have for all $n \in \Nset$:
 \begin{align*}
  \sum_{i=n}^{s n} i \cdot \calP(\termt{} = i)
  & \ge \sum_{i=1}^{s n} i \cdot \calP(\termt{} = i) - \sum_{i=1}^{n} i \cdot \calP(\termt{} = i) \\
  & \ge (2 c_1 + 2) \sqrt{n} - \sum_{i=0}^n \calP(\termt{} > i) && \text{(by the choice of~$s$ above)} \\
  & \ge (2 c_1 + 2) \sqrt{n} - 1 - \sum_{i=1}^n \frac{c_1}{\sqrt{i}} && \text{(by the choice of~$c_1$ above)} \\
  & \ge (2 c_1 + 1) \sqrt{n} - \int_{0}^n \frac{c_1}{\sqrt{i}} \,d i \\
  & = (2 c_1 + 1) \sqrt{n} - 2 c_1 \sqrt{n} \\
  & = \sqrt{n}
\intertext{It follows:}
  s n \calP(\termt{} \ge n)
  & \ge s n \sum_{i=n}^{s n} \calP(\termt{} = i)
    \ge \sum_{i=n}^{s n} i \cdot \calP(\termt{} = i) \\
  & \ge \sqrt{n} && \text{(by the computation above)}
\intertext{Hence we have}
  \calP(\termt{} \ge n)
  & \ge \frac{1/s}{\sqrt{n}}\,,
 \end{align*}
 which completes the proof of the proposition.
\qed
\end{proof}

\subsection{Proof of Proposition~\ref{prop:heavy-tail}}
\label{app-optimal}

Here is a restatement of Proposition~\ref{prop:heavy-tail}.
\begin{qproposition}{\ref{prop:heavy-tail}}
 \stmtpropheavytail
\end{qproposition}
\begin{proof}
Observe that the third statement implies the second statement, since
\[
 E[X_h] = \sum_{n=1}^\infty \tailprob{X_h}{n}
        \ge \sum_{n=1}^\infty c_h \cdot n^{-1/2^h} \ge \sum_{n=1}^\infty c_h / n = \infty\;.
\]
We proceed by induction on~$h$.
Let $h=1$.
The pBPA $\Delta_1$ is equivalent to a random walk on $\{0, 1, 2, \ldots\}$,
 started at~$1$, with an absorbing barrier at~$0$.
It is well-known (see, e.g., \cite{Chung:book}) that the probability that the random walk finally reaches~$0$ is~$1$,
 but that there is $c_1 > 0$ such that the probability that the random has not reached~$0$
 after $n$ steps is at least $c_1 / \sqrt{n}$.
Hence $[X_1] = 1$ and $\tailprob{X_1}{n} \ge c_1 / \sqrt{n} = c_1 \cdot n^{-1/2}$.

Let $h > 1$.
The behavior of~$\Delta_h$ can be described in terms of a random walk~$W_h$
 whose states correspond to the number of $X_h$-symbols in the stack.
Whenever an $X_h$-symbol is on top of the stack, the total number of $X_h$-symbols in the stack
 increases by~$1$ with probability~$1/2$, or decreases by~$1$ with probability~$1/2$,
 very much like the random walk equivalent to~$\Delta_1$.
In the second case (i.e., the rule $X_h \btran{1/2} X_{h-1}$ is taken),
 the random walk~$W_h$ resumes only after a run of~$\Delta_{h-1}$
 (started with a single $X_{h-1}$-symbol) has terminated.
By the induction hypothesis, $[X_{h-1}] = 1$,
 so with probability~$1$ all spawned ``sub-runs'' of~$\Delta_{h-1}$ terminate.
Since $W_h$ also terminates with probability~$1$, it follows $[X_h] = 1$.

It remains to show that there is $c_h > 0$ with
 $\tailprob{X_h}{n} \ge c_h \cdot n^{-1/2^h}$ for all $n \ge 1$.
Consider, for any $n \ge 1$ and any $\ell > 0$, the event~$A_\ell$ that
 $W_h$ needs at least $\ell$ steps to terminate
 (not counting the steps of the spawned sub-runs)
 and that at least one of the spawned sub-runs needs at least $n$ steps to terminate.
Clearly, $\termt{X_h}(w) \ge n$ holds for all $w \in A_\ell$, so it suffices to
 find $c_h > 0$ so that for all $n \ge 1$ there is $\ell > 0$ with
 $\calP(A_\ell) \ge c_h \cdot n^{-1/2^h}$.
At least half of the steps of~$W_h$ are steps down,
 so whenever $W_h$ needs at least $2 \ell$ steps to terminate,
 it spawns at least $\ell$ sub-runs.
It follows:
\begin{align*}
 \calP(A_\ell)
 & \ge \calP(\text{$W_h$ needs at least $2 \ell$ steps}) \cdot
    \left( 1 - \left( \calP( \termt{X_{h-1}} < n ) \right)^\ell \right) \\
 & \ge \frac{c_1}{\sqrt{2 \ell}} \cdot
    \left( 1 - \left( 1 - c_{h-1} \cdot n^{-1/2^{h-1}} \right)^\ell \right) \qquad \text{(by induction hypothesis)}\\
\intertext{Now we fix $\ell := n^{1/2^{h-1}}$.
Then the second factor of the product above converges to $1 - e^{-c_{h-1}}$ for $n \to \infty$,
 so for large enough~$n$}
 \calP(A_\ell)
 & \ge \frac{c_1}{2} \cdot (1 - e^{-c_{h-1}}) \cdot n^{-1/2^h}\;.
\end{align*}
Hence, we can choose $c_h < \frac{c_1}{2} \cdot (1 - e^{-c_{h-1}})$ such that
 $\calP(A_\ell) \ge c_h \cdot n^{-1/2^h}$ holds for all $n \ge 1$.
\qed
\end{proof}

\smallskip
\noindent
\textbf{Acknowledgment.} The authors thank Javier Esparza for
useful suggestions.


\end{document}